\documentclass[a4paper,UKenglish,cleveref,autoref,thm-restate,nolineno]{socg-lipics-v2021}

\hideLIPIcs  

\usepackage{xurl}
\usepackage{doi}
\usepackage{hyperref} 

\usepackage{longtable}
\usepackage{array}
\usepackage[table]{xcolor}

\usepackage{graphicx}
\usepackage{amsmath,amssymb}
\usepackage[noend]{algorithmic}
\usepackage{xcolor}
\usepackage[most]{tcolorbox}
\usepackage{longtable}
\usepackage{array}

\usepackage[most]{tcolorbox}

\usepackage{listings}
\usepackage[T1]{fontenc}

\lstdefinestyle{mypython}{
  language=Python,
  basicstyle=\ttfamily\small,
  keywordstyle=\bfseries\color{blue!70!black},
  commentstyle=\itshape\color{gray},
  stringstyle=\color{teal!60!black},
  numbers=right,                
  numberstyle=\tiny,
  stepnumber=1,
  numbersep=6pt,
  frame=single,
  framerule=0.3pt,
  rulecolor=\color{black!40},
  showstringspaces=false,
  breaklines=true,
  tabsize=2,
  columns=fullflexible
}

\newenvironment{problem}[1]{%
  \refstepcounter{theorem}%
  \begin{tcolorbox}[
    enhanced,
    colback=white,
    colframe=black,
    boxrule=0.5pt,
    arc=4pt,
    left=6pt, right=6pt,
    top=6pt, bottom=6pt,
    before upper=\textbf{Problem \thetheorem. #1}\par\medskip
  ]
}{%
  \end{tcolorbox}
}

\newcommand{\deton}[1]{\operatorname{deton}(#1)}

\title{ETH-Tight Complexity of Optimal Morse Matching on Bounded-Treewidth Complexes}

\author{Geevarghese Philip}{Chennai Mathematical Institute, Chennai, India, and IRL ReLaX}{gphilip@cmi.ac.in}{https://orcid.org/0000-0003-0717-7303}{}
\author{Erlend Raa V\aa gset}
  {Western Norway University of Applied Sciences (HVL), Førde, Norway}
  {Erlend.Raa.Vagset@hvl.no}
  {https://orcid.org/0000-0003-2289-2268}
  {Supported in part by the Research Council of Norway, grant “Parameterized Complexity for Practical Computing (PCPC)” (No.\ 274526).}

\authorrunning{G. Philip and E. R. Vågset} 

\date{}

\begin{document}

\maketitle

\begin{abstract}
The \textsc{Optimal Morse Matching} (OMM) problem asks for a discrete gradient vector field on a simplicial complex that minimizes the number of critical simplices. It is NP-hard and has been studied extensively in heuristic, approximation, and parameterized complexity settings. Parameterized by treewidth $k$, OMM has long been known to be solvable on triangulations of $3$-manifolds in $2^{O(k^2)} n^{O(1)}$ time and in FPT time for triangulations of arbitrary manifolds, but the exact dependence on $k$ has remained an open question. We resolve this by giving a new $2^{O(k \log k)} n$-time algorithm for any
finite regular CW complex, and show that
no $2^{o(k \log k)} n^{O(1)}$-time algorithm exists unless the Exponential
Time Hypothesis (ETH) fails.

\keywords{Discrete Morse Theory, Simplicial Complexes, Optimal Morse Matching, Treewidth, Parameterized Algorithms, Computational Topology, Dynamic Programming, Exponential Time Hypothesis, Topological Data Analysis.}

\ccsdesc[500]{Theory of computation~Parameterized complexity and exact algorithms}
\ccsdesc[500]{Theory of computation~Algorithm design techniques}
\ccsdesc[500]{Mathematics of computing~Combinatorial algorithms}
\ccsdesc[500]{Applied computing~Computational topology}
\end{abstract}

\section{Introduction}

Classical Morse theory~\cite{morse1934calculus} and its discrete counterpart~\cite{forman1998morse,forman2002user} provide a framework for simplifying spaces while preserving their essential topological features (see Figure~\ref{FIG:DMTIntro}). They do so by relating scalar functions to gradient flows in the smooth setting and discrete Morse functions to discrete gradient vector fields (Morse matchings) in the combinatorial setting. Much of the computational work on discrete Morse theory has focused on the matching perspective~\cite{joswig2006computing,lewiner2003toward,rathod2017approximation,harker2014discrete,allili2017algorithmic,fugacci2019computing}, with the pioneering work on treewidth in computational topology~\cite{burton2016parameterized} being no exception. At first sight, this close relationship suggests that the two formulations are interchangeable. In reality, they form diverging roads in the algorithmic landscape; let us follow the one less traveled by.

\begin{figure}[!h]
\centering
\includegraphics[width=\textwidth]{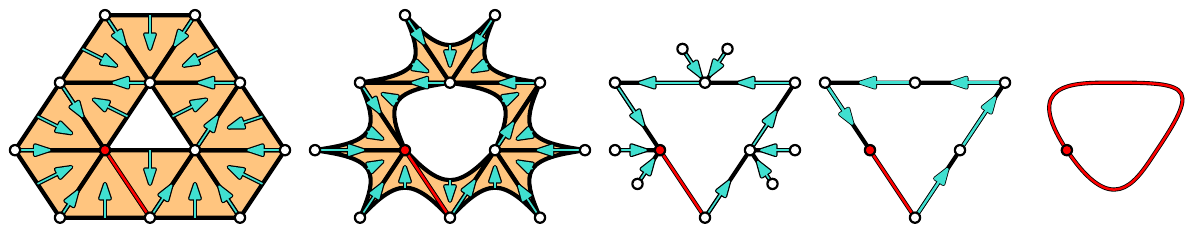}
\caption{\label{FIG:DMTIntro}Discrete Morse theory can simplify a space while preserving its homotopy type.}
\end{figure}

Morse theory has applications in topological data analysis and computational topology~\cite{harker2014discrete,bauer2012optimal,kannan2019persistent,mukherjee2021denoising,fugacci2019computing,curry2016discrete}, robotics and configuration spaces~\cite{farber2008invitation,ghrist2010configuration}, molecular modeling~\cite{cazals2003molecular}, and mesh and image processing~\cite{lewiner2004applications,delgado2014skeletonization}. A recurring primitive in these works is to construct a discrete gradient vector field, also known as a Morse matching, with few critical simplices, yielding strong homotopy-preserving simplifications. Intuitively, such a field is a discrete vector field without ``swirls'': there are no loops in the flow lines, so one can contract the space along them without changing the homotopy type. Combinatorially, such a field corresponds to a matching in the Hasse diagram of the complex such that reversing the matched edges does not create any directed cycles. This motivates the \textsc{Optimal Morse Matching} (OMM) problem, also known as \textsc{Min-/Max-Morse Matching}, depending on whether the objective is to minimize the number of critical simplices or maximize the number of matched pairs:

\begin{problem}{Optimal Morse Matching (OMM)}\label{def:omm}
\textbf{Input:} A finite simplicial complex $K$ and a weight function
$\omega : K \to \mathbb{R}_{\ge 0}$.\\
\textbf{Task:} Find a discrete gradient vector field $W$ on $K$.\\
\textbf{Optimize:} Minimize the total weight of critical (i.e., unmatched) simplices.
\end{problem}

From a computational complexity perspective, \textsc{OMM} and closely related variants are highly intractable: they are NP-hard~\cite{joswig2006computing}, admit strong classical inapproximability bounds~\cite{bauer2019hardness}, and are linked to other hard topological decision problems such as \textsc{Collapsibility}~\cite{tancer2016recognition,malgouyres2008determining}, \textsc{Erasibility} of $2$-complexes~\cite{burton2016parameterized,bauer2019hardness}, and \textsc{Shellability}~\cite{goaoc2019shellability}. Algorithmic work includes approximation algorithms for \textsc{Max-Morse Matching}~\cite{rathod2017approximation} and a variety of heuristic and workflow-based methods for practical simplification~\cite{bauer2012optimal,lewiner2003toward,harker2014discrete,fugacci2019computing,allili2017algorithmic}. On the parameterized side, using the optimum number of critical simplices as parameter, \textsc{Erasability} and \textsc{Min-Morse Matching} are $\mathsf{W[P]}$-hard and admit no FPT approximation schemes~\cite{burton2016parameterized,bauer2025parameterized}.

\begin{figure}[!ht]
\centering
\includegraphics[width=\textwidth]{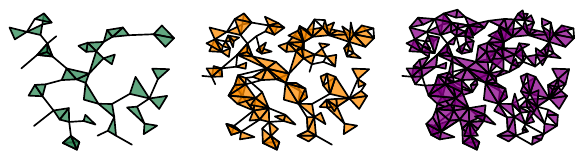}
\caption{\label{fig:treewidth-intuition}
Three spaces from left to right, all of relatively low (but increasing) treewidth.}
\end{figure}

To circumvent these complexity barriers we turn to treewidth, which intuitively measures how close a combinatorial object is to being a tree; see Figure~\ref{fig:treewidth-intuition}. An early systematic use of treewidth in discrete Morse theory is the fixed-parameter algorithm of~\cite{burton2016parameterized} for \textsc{OMM}, which runs in time $2^{O(t^2)} n^{O(1)}$ when parameterized by the treewidth $t$ of an associated spine graph for $2$-complexes, and by the treewidth of either the spine graph or the dual graph for triangulated $3$-manifolds. This was complemented by a Courcelle-style metatheorem for MSO-definable properties on triangulations~\cite{burton2017courcelle}, which yields fixed-parameter tractability for \textsc{OMM} on triangulated manifolds of fixed dimension when parameterized by the treewidth of the dual graph. Since then, treewidth has become central in computational topology, leading to FPT algorithms for a range of NP-hard topological problems and invariants~\cite{burton2017homfly,burton2018algorithms,black2021finding}, ETH-tight lower bounds for several of these problems~\cite{blaser2020homology, black2022eth, blaser2024parameterized, Vaagset2024}, and detailed studies of the width of triangulated $3$-manifolds~\cite{huszar2017treewidth,huszar20183,huszar2023width}.

\begin{remark}
    Across the literature, ``treewidth~$k$'' is measured on different associated graphs (dual graphs, spine graphs, Hasse diagrams), chosen to suit the problem at hand. For fixed dimension,~\cite{burton2017courcelle} shows that the treewidth of the Hasse diagram of a triangulation is bounded by a constant factor times the treewidth of its dual graph, and~\cite{burton2016parameterized} shows that in dimension~$3$ the spine treewidth is likewise linearly bounded in terms of the dual treewidth. Thus, for fixed dimension these parameters coincide up to constant factors hidden in the \(O(\cdot)\)-notation. In this paper we simply write ``treewidth~$k$'' and, in practice, work with the Hasse diagram.
\end{remark}

\subsection{Contributions}\label{sec:contributions}

\textbf{Algorithmic.}
We give an explicit dynamic program (DP) for a digraph formulation of \textsc{Optimal Morse Matching} parameterized by treewidth~\(k\), which in particular solves \textsc{OMM} on all finite regular CW complexes of treewidth~\(k\). It runs in time \(2^{O(k \log k)} n\), thereby extending and improving both the earlier explicit treewidth-based algorithms for $2$-complexes and triangulated $3$-manifolds with running time \(2^{O(k^2)} n^{O(1)}\)~\cite{burton2016parameterized} and the implicit MSO-based algorithm for triangulated manifolds in~\cite{burton2017courcelle}. The simple invariant and state space allowed us to implement the algorithm and verify it on small instances by exhaustive search.

\noindent
\textbf{Optimality.}
Under ETH, our running time is optimal: using a known treewidth-based lower bound for \textsc{Directed Feedback Vertex Set} (\textsc{DFVS}), we show that there is no \(2^{o(k \log k)} n^{O(1)}\)-time algorithm for \textsc{Optimal Morse Matching} parameterized by treewidth~\(k\), even on $2$-dimensional complexes of top coface degree at most~$4$. This follows from a new polynomial-time reduction from \textsc{DFVS} to \textsc{Erasability}, to which we apply the recent Width Preserving Strategy (WiPS) framework of \cite{Vaagset2024} to ensure that treewidth is preserved. Combined with the standard equivalence between \textsc{Erasability} and \textsc{OMM} on $2$-dimensional complexes, this yields the ETH-tight \(2^{\Theta(k \log k)}\) dependence.

\noindent
\textbf{Conceptual:}
Our order-based formulation (\textsc{Feedback Morse Order}) shows that working indirectly with vertex orders rather than directly with matchings captures exactly what a treewidth DP for \textsc{OMM} needs to remember. This ordering viewpoint transfers to \textsc{Alternating Cycle-Free/Uniquely Restricted Matchings} (AC-FM/URM) on bipartite graphs, but the connection breaks on general graphs. Here, AC-FM and URM still only admit \(2^{\Theta(k^2)} n^{O(1)}\)-time algorithms and lack tight-ETH lower bounds. We therefore point to AC-FM and URM as natural candidates for genuinely \(2^{\Theta(k^2)}\)-time problems in treewidth and as targets for either \(2^{O(k \log k)} n\)-time algorithms or matching \(2^{o(k^2)} n^{O(1)}\) lower bounds.

\begin{remark}
This document is the full version of our paper accepted for presentation at
the ACM Symposium on Computational Geometry (SoCG 2026).
It includes full proofs and additional technical details provided in the appendices.
\end{remark}

\section{Preliminaries} \label{sec:preliminaries}

We use standard terminology from parameterized complexity~\cite{downey2012parameterized,cygan2015parameterized} and discrete Morse theory~\cite{forman1998morse,forman2002user,scoville2019discrete}. We write \(n\) for the input size; for \textsc{OMM}, \(n\) is the number of cells of the complex, which equals the number of vertices of its Hasse diagram.

\subsection{Parameterized complexity theory}

\textbf{Treewidth and nice tree decompositions.}
When we speak of the treewidth of a digraph or a complex, we mean the treewidth of the natural underlying undirected graph in the first case, and of the Hasse diagram in the second. Informally, treewidth measures how close a graph is to being a tree (see also Figure~\ref{fig:treewidth-intuition}): we cover the graph by overlapping bags of vertices arranged in a tree so that each edge lies entirely inside some bag and the bags containing any fixed vertex form a connected subtree; the width is one less than the size of the largest bag, and the treewidth \(k\) is the minimum width over all such decompositions; this $k$ will be our parameter throughout the paper. Such decompositions let us localize global constraints: a dynamic program only needs to maintain partial solutions on a bag of size \(k+1\) and combine them along the tree, so we can brute-force over states per bag rather than over the whole graph and obtain running times of the form \(f(k)\,n^{O(1)}\). We use rooted nice tree decompositions; unless stated otherwise, we fix an arbitrary root bag and process the decomposition bottom--up. Formal definitions are given in Appendix~\ref{app:nice-td}. In this paper we work with the standard nice node types (leaf, introduce-vertex, forget-vertex, and join), and we also use auxiliary \emph{introduce-edge} bags: unary nodes whose bag is identical to that of their child and that are annotated with a single edge \(uv\) in the bag, marking the point where that edge is processed in the dynamic program or reduction. This is a standard refinement that preserves width.

\begin{figure}[!ht]
\centering
\includegraphics[width=\textwidth]{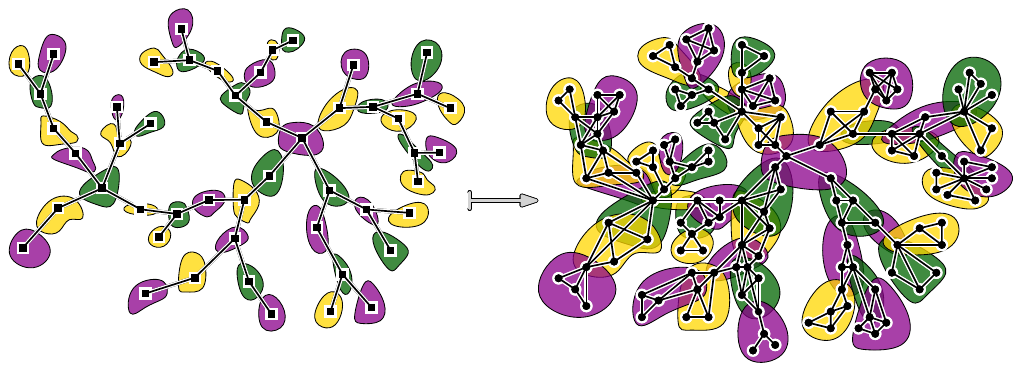}
\caption{\label{FIG:-Tree-decomposition_v2}
A graph \(G\) (right) and a tree decomposition \(T\) (left). Each node of \(T\) carries a bag \(X_t \subseteq V(G)\), drawn as a blob containing vertices of \(G\). Adjacent bags overlap so that every vertex and every edge of \(G\) is covered.}
\end{figure}

\textbf{FPT and ETH.}
A parameterized problem with parameter \(k\) is \emph{fixed-parameter tractable} (FPT) if it can be solved in time \(f(k)\,n^{O(1)}\) for some computable function \(f\); such running times are considered efficient when \(k\) is small compared to \(n\), since the combinatorial explosion is confined to \(f(k)\) while the dependence on the input size remains polynomial. For conditional lower bounds we assume the \emph{Exponential Time Hypothesis} (ETH)~\cite{ImpagliazzoPZ01}, which states that 3-SAT on \(n\) variables cannot be solved in time \(2^{o(n)}\); ETH is a strengthening of \(\mathrm{P} \neq \mathrm{NP}\) that has withstood decades of algorithmic progress and underpins many tight running-time lower bounds. Our hardness source is \textsc{Directed Feedback Vertex Set} parameterized by the treewidth \(k\) of the underlying undirected graph of the input digraph \(D\), where a \emph{feedback vertex set} in a digraph \(D=(V,E)\) is a set \(J\subseteq V\) such that \(D\setminus J\) is acyclic.

\begin{problem}{\textsc{Directed Feedback Vertex Set} (DFVS)}
\textbf{Input:} A directed graph \(D = (V,E)\) and an integer \(s\).\\
\textbf{Question:} Does \(D\) contain a feedback vertex set of size at most \(s\)?
\end{problem}

\begin{figure}[htb]
    \centering
    \includegraphics[width=\textwidth]{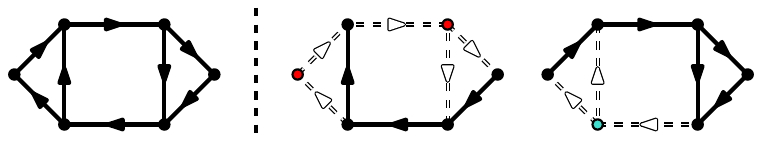}
    \caption{\label{fig:dfvs-example}
    An instance of \textsc{DFVS}, a valid solution of size $2$ and an optimal solution of size $1$.}
\end{figure}

\begin{theorem}[Bonamy et~al.~\cite{bonamy2018directed}]
\label{thm:dfvs-eth}
Unless ETH fails, \textsc{DFVS} parameterized by the treewidth~$k$ of the underlying undirected graph of the input digraph cannot be solved in $2^{o(k \log k)} n^{O(1)}$-time.
\end{theorem}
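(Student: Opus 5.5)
Since the statement is cited from Bonamy et~al.~\cite{bonamy2018directed}, we only sketch how such a bound is obtained. The plan is a parameter-preserving reduction from a slightly superexponential source problem of Lokshtanov, Marx and Saurabh. We would use \textsc{$k\times k$ Permutation Clique} (or \textsc{$k\times k$ Permutation Hitting Set}). These problems have $k$ rows, each needing one of $k$ values, and the solution must form a permutation. Under ETH they admit no $2^{o(k\log k)}$-time algorithm. We would build in polynomial time a digraph $D$ and a budget $s$ such that $D$ has a feedback vertex set of size at most $s$ if and only if the source instance is a yes-instance, and such that the underlying graph of $D$ has treewidth $O(k)$. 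Then a $2^{o(k\log k)}n^{O(1)}$-time algorithm for \textsc{DFVS} would give one for the source problem, contradicting ETH.

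\textbf{Step 1: choice gadgets.} For each row $i$, place a directed cycle $C_i$ on $k$ vertices $c_{i,1},\dots,c_{i,k}$. These cycles are pairwise disjoint, so any feedback vertex set deletes at least one vertex from each. We set the budget so that every cheap solution deletes exactly one vertex $c_{i,j}$ per row, which encodes the value $j$ for row $i$. To keep budget accounting tight, each cycle vertex would be replaced by a small bundle, for example $s+1$ parallel copies arranged so that no two copies can both be forced. This ensures that no other deletion is cheaper than choosing a gadget vertex.

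\textbf{Step 2: constraint gadgets.} For each forbidden pair, that is, each non-edge of the clique instance between $(i,j)$ and $(i',j')$, we would add a short directed path. It would be wired so that a directed cycle survives exactly when both $c_{i,j}$ and $c_{i',j'}$ are deleted. A natural choice is a cycle passing through the "bypass" arcs that become the only route once those two vertices are removed. The permutation constraint (distinct columns) is enforced the same way, using $k$ column cycles interlinked with the row cycles.

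\textbf{Step 3: treewidth bound.} Naively, these gadgets would connect everything and destroy the width bound. Instead, we would process the constraints sequentially along a path decomposition whose bags always contain the $O(k)$ current "wire" vertices carrying each row's choice. Each constraint gadget is attached locally and uses only $O(1)$ extra vertices per bag, so the width stays $O(k)$. This requires propagating each row's choice through a chain of copies, where each copy is linked to the next by 2-cycles (digons). Deleting one copy but not the next would then be penalised through the budget.

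\textbf{Main obstacle.} We expect Step 3 together with the budget accounting to be the main difficulty. The problem is that in a digraph, a consistency gadget that propagates a choice must not create spurious cycles through long chains. It must also not allow a solution that spreads deletions across copies more cheaply than committing to a single consistent choice. We would first try digon-based equality chains, where each copy of $c_{i,j}$ forms a 2-cycle with its successor. We would then prove by an exchange argument that any optimal solution can be normalised to a consistent one.
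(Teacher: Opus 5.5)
The paper does not prove this statement; it imports it from Bonamy et~al.\ as a black box, so your sketch has to stand on its own, and as written it does not. \textbf{Step~2 asks for something \textsc{DFVS} cannot express.} The cycles of $D-S$ only disappear as $S$ grows. So if a directed cycle survives in $D-S$ with $c_{i,j},c_{i',j'}\in S$, the same cycle already exists in $D-(S\setminus\{c_{i,j},c_{i',j'}\})$, where neither vertex is deleted. Deleting vertices never opens a ``bypass route''. A cycle can only impose a hitting constraint (some vertex on it is deleted), so under ``choice $=$ deleted vertex'' a forbidden pair of choices cannot be penalised by any cycle. You would have to encode choices by the surviving vertices, or derive exclusion from a tight budget. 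Step~1 has a similar inversion: if each cycle vertex is replaced by $s+1$ parallel copies, breaking $C_i$ requires deleting a whole bundle, so no row cycle can be broken within budget. The bundle trick belongs on vertices that must \emph{survive}, not on the choice vertices.

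\textbf{The decisive gap is Step~3, and it is not a matter of budget bookkeeping.} Write $t$ for the treewidth of the constructed digraph and $k$ for the source parameter. In your plan the value of row $i$ travels from bag to bag as the deletion status of copies of $c_{i,1},\dots,c_{i,k}$, which needs $k$ wire vertices per row. Bags then have size $\Theta(k^2)$, so $t=\Theta(k^2)$. A $2^{o(t\log t)}n^{O(1)}$ algorithm for \textsc{DFVS} then yields only a $2^{o(k^2\log k)}$ algorithm for the source problem, which contradicts nothing: brute force over $k^k$ choices already runs in $2^{O(k\log k)}$. A binary encoding with $\log k$ wires per row gives $t=\Theta(k\log k)$ and likewise only a $2^{o(k\log^2 k)}$ algorithm. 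With $O(1)$ wires per row, deletion patterns on a bag take only $2^{O(k)}$ values, too few to carry the $\Theta(k\log k)$ bits of a solution.

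\textbf{What is missing is the feature of \textsc{DFVS} behind the $t!$ factor} in its dynamic program (compare the bag order $g$ in this paper's own DP). Across a separator one must remember the relative order of the surviving bag vertices in a topological ordering of $D-S$, i.e.\ the reachability constraints induced through the forgotten part. A $2^{o(t\log t)}$ lower bound has to store each row's $\log k$ bits in that order. For example, the value could be encoded in how $O(1)$ undeletable row vertices interleave with a chain of undeletable column vertices. Arcs between consecutive copies would then force the same interleaving in the next bag, so that a permutation of $[k]$ is carried by $O(k)$ vertices per bag. Your digon-based equality chains cannot do this. Moreover, a digon between consecutive copies only forces at least one of the two to be deleted, which is not an equality constraint in the first place.
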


\subsection{Discrete Morse theory}

\textbf{Complexes and Hasse diagrams.}
See Forman~\cite{forman1998morse} for formal definitions of finite regular CW complexes and discrete Morse theory; we keep preliminaries brief since our algorithms are graph-theoretic and the lower bound holds already for simplicial complexes.
Face relations are given by closure containment, so the face poset and its directed Hasse diagram encode the incidence data we use.
We write \(\overrightarrow{H}(X)\) for the directed Hasse diagram of a complex \(X\); it has one vertex per cell and an arc \(\sigma\to\tau\) whenever \(\sigma\) is an immediate face of \(\tau\) (a cover relation in the face poset), and hence \(\dim(\tau)=\dim(\sigma)+1\).
Let \(H(X)\) be the underlying undirected graph; we measure treewidth on \(H(X)\), denoted \(k\).
Simplicial complexes are the special case where cells are simplices (points, lines, triangles, tetrahedra, etc.); we use them in figures and in our reduction.

\begin{figure}[!ht]
\centering
\includegraphics[width=.9\textwidth]{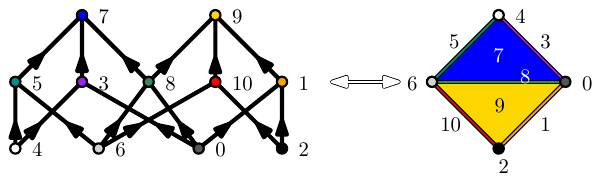}
\includegraphics[width=.9\textwidth]{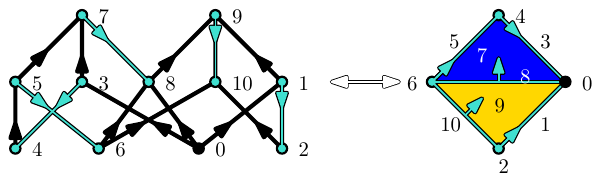}
\caption{\label{fig:morse-function}
A discrete Morse function (top) on the Hasse diagram of a simplicial complex (left) and its
geometric realization (right). Below, the induced discrete gradient vector field (Morse
matching) is shown both on the Hasse diagram (left) and geometrically as a gradient
vector field (right).}
\end{figure}

\textbf{Discrete Morse theory on complexes.}
Forman's discrete Morse theory can be phrased entirely in terms of matchings on the Hasse diagram \(\overrightarrow{H}(X)\) of a finite regular CW complex \(X\). A \emph{discrete vector field} is a matching \(W\) on \(\overrightarrow{H}(X)\), i.e. a set of pairs \((\sigma,\tau)\) with \(\sigma\to\tau\) an arc of \(\overrightarrow{H}(X)\) such that each cell appears in at most one pair. From \(W\) we obtain a new digraph \(\overrightarrow{H}(X)_W\) by reversing exactly the matched arcs and leaving all others unchanged; if \(\overrightarrow{H}(X)_W\) is acyclic, then \(W\) is a \emph{discrete gradient vector field}. Forman's correspondence states that (i) every discrete Morse function on \(X\) induces such a gradient vector field, and (ii) conversely, every discrete gradient vector field arises from some discrete Morse function; in both directions, the unmatched cells in \(W\) are precisely the critical cells of the corresponding Morse function. Collapsing \(X\) along the gradient flow yields a smaller Morse complex (see Figure~\ref{FIG:DMTIntro}) that is homotopy equivalent to \(X\) and can make downstream topological computations much cheaper. The \textsc{Optimal Morse Matching} problem (Problem~\ref{def:omm}) asks for a discrete gradient vector field minimizing the total weight of these unmatched cells.

\textbf{Erasibility in 2D.}
For our lower bound we use the notion of \emph{erasibility} of
$2$-dimensional simplicial complexes $K$. A $1$-simplex (edge) $e$ is
\emph{free} if it is contained in exactly one $2$-simplex $\tau$, and removing
$e$ together with $\tau$ is an \emph{elementary collapse}. A $2$-simplex is
\emph{erasible} if it can be removed through a sequence of elementary
collapses, and $K$ is \emph{erasible} if every $2$-simplex can be eliminated in
this way, that is, if $K$ collapses to a $1$-dimensional complex.

\begin{figure}[!ht]
    \centering
    \includegraphics[width=\textwidth]{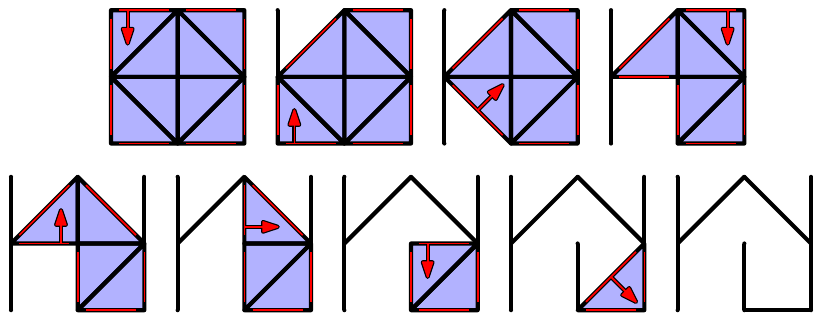}
    \caption{\label{fig:erasibility-example}
    A triangulation of a square that is erasible with $S = \emptyset$: free
    edges are marked in red and elementary collapses by arrows. In contrast, a
    triangulation of a sphere has no free edges, so any erasibility requires
    $|S| > 0$.}
\end{figure}

\begin{problem}{Erasibility}\label{prob:erasibility}
\textbf{Input:} A $2$-dimensional simplicial complex $K$ and an integer $B \ge 0$.\\
\textbf{Question:} Is there a set $S$ of $2$-simplices in $K$ such that
$|S| \le B$ and $K \setminus S$ is erasible?
\end{problem}

\begin{theorem}[Folklore; cf.~\cite{forman1998morse,joswig2006computing,bauer2019hardness}]
For finite $2$-dimensional simplicial complexes, \textsc{Erasibility} and
\textsc{Optimal Morse Matching} are computationally equivalent.
\end{theorem}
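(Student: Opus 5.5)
The plan is to give polynomial-time reductions in both directions that change the objective only by an explicit additive term. The bridge is the classical correspondence between sequences of collapses and acyclic matchings on the Hasse diagram (Forman; cf.\ \cite{joswig2006computing}). For a connected $2$-complex $K$ with $f_i$ cells of dimension $i$, the Morse inequalities force at least one critical vertex. A spanning tree of the $1$-skeleton oriented towards a root gives an acyclic matching of all vertices but one with edges. Any gradient vector field can be modified so that its vertex--edge part is of this form without increasing the number of critical cells, since matchings on graphs correspond to rooted forests. We then argue per connected component and add up.

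\textbf{From \textsc{Erasibility} to OMM.} Suppose $S$ is a set of $2$-simplices with $|S|\le B$ and $K\setminus S$ erasible. The elementary collapses $(e_i,\tau_i)$ of $K\setminus S$, taken in order, form a matching $W_2$ between edges and triangles. This matching is acyclic, because every gradient path $\tau_i \to e \to \tau_j$ through an unmatched face $e\ne e_i$ of $\tau_i$ must go to some $\tau_j$ collapsed strictly later. The reason is that $e$ was not free when $\tau_i$ was removed, so another triangle containing $e$ was still present at that time. All triangles in $S$ remain critical. We complete $W_2$ with the forest matching on the edges not used by $W_2$, which are exactly the remaining $1$-complex. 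The number of critical cells is then $|S| + (\#\text{components}) + (\text{cycle rank of the remaining graph})$. By the Euler characteristic, this equals $2|S| + c - \chi(K) + \text{const}$, where the constant is determined by $K$ alone. Hence the optimal OMM value is an affine function of the optimal $|S|$, and a threshold $B$ translates into a threshold on critical cells.

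\textbf{From OMM to \textsc{Erasibility}.} Given an optimal gradient field $W$, let $S$ be the set of critical triangles. Acyclicity of the edge--triangle part gives a linear extension of the triangles in which the triangles matched with edges are collapsed in reverse gradient order. Each matched edge $e$ is then free at the moment its partner $\tau$ is removed: any other triangle containing $e$ points into $e$ along the flow, so it must already have been removed. After the normalization above, the number of critical edges and vertices depends only on the $1$-complex left over, so the number of critical cells is again $2|S| + \text{const}$, and minimizing it is the same as minimizing $|S|$. Weights only need the unweighted case, or uniform weight on the $2$-cells.

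The main obstacle is the bookkeeping in the normalization step. We must show that an optimal matching may be assumed to have its vertex--edge part a spanning forest. We also need the count of critical edges to be forced by the Euler characteristic, namely critical edges $=$ critical triangles $+ f_1 - f_0 - f_2 + \#\text{components}$ in the appropriate form. I would handle this via the weak Morse equality $\sum_i (-1)^i m_i = \chi(K)$ together with $m_0 \ge c$ and the freedom to rematch within the $1$-skeleton.
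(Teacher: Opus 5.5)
The paper gives no proof of this statement; it cites it as folklore (Forman, Joswig--Pfetsch, Bauer--Rathod). Your reconstruction is the standard argument from those sources, and its skeleton is right. Let $c$ be the number of components of $K$ and $\mathrm{er}(K)$ the minimum size of a deletion set making $K$ erasible. Then $\mathrm{OPT}_{\textsc{OMM}}(K)=2\,\mathrm{er}(K)+2c-\chi(K)$. In both directions the complex is unchanged and only the threshold is shifted affinely. So the Hasse diagram, and hence the treewidth, is untouched, which is exactly what the paper needs to transfer its lower bound from \textsc{Erasibility} to \textsc{OMM}.

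There is, however, one step you never justify, and the normalization rests on it. You need the leftover graph $G'$ (all vertices plus the edges not paired with triangles) to have exactly $c$ components. You write ``number of components'' and then silently replace it by $c$. Your last paragraph does not supply this. The weak Morse equality together with $m_0\ge c$ only gives the lower bound $m=2m_0+2m_2-\chi(K)\ge 2c+2m_2-\chi(K)$; it does not show that $m_0=c$ is attainable together with $m_2=|S|$. Moreover, any rematching is confined to $G'$, not the whole $1$-skeleton: a spanning tree of the $1$-skeleton, as in your first paragraph, may use edges already paired with triangles.

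The missing observation is short. When $(e,\tau)$ is collapsed, the other two edges of $\tau$ are still present, being faces of $\tau$, so the endpoints of $e$ stay connected. Deleting the triangles of $S$ does not change the $1$-skeleton, so by induction $G'$ has the same components as $K$ (equivalently, $K\setminus S$ collapses onto $G'$). A spanning forest of $G'$ then gives $m_0=c$ and the upper bound $2|S|+2c-\chi(K)$. The \textsc{OMM}-to-\textsc{Erasibility} direction needs no normalization at all. For any gradient field, $m\ge 2c+2m_2-\chi(K)$, and $m_2\ge\mathrm{er}(K)$ by your collapse-ordering argument.

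Three smaller points. First, in the acyclicity argument the clean reason runs the other way from how you state it. Since $e$ is a face of $\tau_i$, it is still present when $\tau_i$ is collapsed. But $e$ disappears only together with its partner $\tau_j$, so $\tau_j$ comes later; ``$e$ was not free'' is a consequence of this, not its cause. Second, you should say explicitly that directed cycles in the modified Hasse diagram live in two consecutive dimensions, since each cell is matched at most once. Hence acyclicity of $W_2$ and of the forest part implies acyclicity of their union. Third, the affine correspondence holds only for unit weights (or weight concentrated on $2$-cells). That is all the folklore statement and the paper's lower bound require.
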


\section{Algorithm}\label{sec:faster-algorithm}

We give a fixed-parameter algorithm for \textsc{Feedback Morse Matching},
a digraph generalization of \textsc{Optimal Morse Matching}, on digraphs whose
underlying undirected graph has treewidth~$k$. The algorithm rests on three
ingredients: (i) we work in the general setting of arbitrary digraphs rather
than Hasse diagrams only; (ii) we adopt the Morse-function viewpoint and encode
solutions as vertex orders (\textsc{Feedback Morse Orders}) instead of
matchings; and (iii) this yields a very simple dynamic program whose state at
each bag consists only of an order on the bag and a subset of its vertices.
This order–mask formulation avoids the more involved connectivity machinery of
previous approaches and leads to a running time of $2^{O(k\log k)} n$.

\subsection{Of feedback Morse matchings and feedback Morse orders}\label{subsec:orders}

\textbf{Generalizing OMM to digraphs.}
For our algorithmic purposes it is convenient to generalize the matching viewpoint from Hasse diagrams to arbitrary digraphs, and to allow both positive and negative weights. Given a digraph \(D=(V,E)\) and a matching \(M \subseteq E\), let \(D_M\) be the digraph obtained by reversing every edge of \(M\) and leaving all other edges unchanged; we call \(M\) a \emph{feedback Morse matching} if \(D_M\) is acyclic. In the resulting \textsc{Feedback Morse Matching} (\textsc{FMM}) problem (Problem~\ref{prob:feedbackmorsematching}), the input is a digraph \(D\) and a weight function \(\omega:V(D)\to\mathbb{R}\), and the task is to find a feedback Morse matching minimizing the total weight of unmatched vertices. When \(D\) is the Hasse diagram \(\overrightarrow{H}(K)\) of a simplicial complex \(K\) and \(V(D)\) is identified with the simplices of \(K\), discrete gradient vector fields on \(K\) are exactly feedback Morse matchings on \(D\). Thus \textsc{OMM} is the special case of \textsc{FMM} where \(D\) comes from a complex and \(\omega\) is nonnegative; in the classical unweighted case \(\omega \equiv 1\), the objective is simply the number of critical simplices~\cite{joswig2006computing,bauer2019hardness}. Mixed-sign weights in this framework allow one to favour or penalize particular cells (useful for extending a Morse Matching), while the purely negative-weight variant (e.g. \(\omega \equiv -1\)) on general digraphs corresponds to finding a smallest feedback Morse matching, that is, a minimum-size matching whose reversals destroy all directed cycles in \(D\).

\begin{problem}{Feedback Morse Matching (FMM)}\label{prob:feedbackmorsematching}\label{def:feedback-morse-matching} 

\textbf{Input:} A finite directed graph $D=(V,E)$ and a weight function $\omega : V(D) \to \mathbb{R}$.\\
\textbf{Question:} Find a feedback Morse matching $M \subseteq E(D)$ that minimizes the total weight of unmatched vertices.
\end{problem}

\begin{figure}[!ht]
\centering
\includegraphics[width=\linewidth]{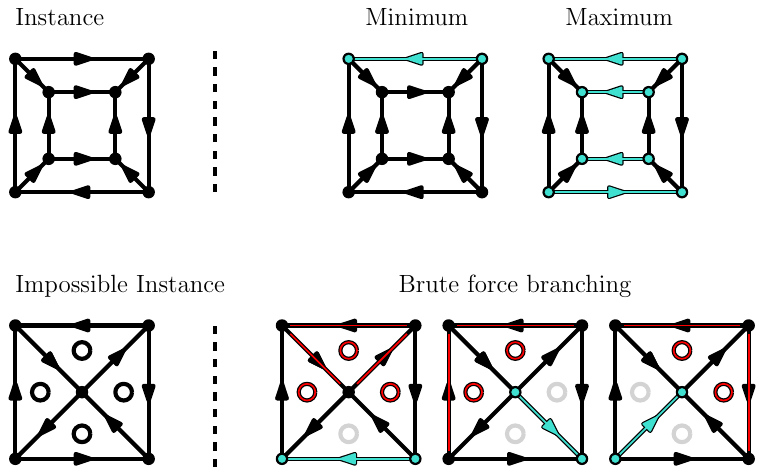}
\caption{\label{fig:generalized-omm}
Feedback Morse matchings on digraphs. Top: an instance that admits feedback Morse
matchings; two optimal solutions to \textsc{FMM} are shown. Bottom: an instance with no
feedback Morse matching.}
\end{figure}

\begin{theorem}\label{thm:fmo-tw-algorithm}
Let $D=(V,E)$ be a digraph whose underlying undirected graph has treewidth $k$,
and let $n := |V(D)|$. Given vertex weights $\omega : V \to \mathbb{R}$,
\textsc{Feedback Morse Matching} on $(D,\omega)$ can be solved in time
$2^{O(k\log k)} n$.
\end{theorem}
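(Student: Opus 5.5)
The plan is to pass from matchings to vertex orders and then run a dynamic program over a nice tree decomposition (with introduce-edge nodes) whose state at a bag $X_t$ is a linear order $\pi$ on $X_t$ together with a mask $S\subseteq X_t$ of vertices already declared matched. There are at most $(k+1)!\,2^{k+1}=2^{O(k\log k)}$ states per bag, and a tree decomposition of width $O(k)$ with $O(kn)$ nodes can be computed in $2^{O(k)}n$ time (e.g.\ Bodlaender or Korhonen's $2$-approximation). Together these give the claimed $2^{O(k\log k)}n$ bound, provided every transition costs $2^{O(k\log k)}$.

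\textbf{Step 1: Feedback Morse orders.} I would first prove a characterization. A matching $M$ is a feedback Morse matching iff there is a linear order $<$ on $V$ such that every arc $uv\in E\setminus M$ satisfies $u<v$, and every arc $uv\in M$ satisfies $v<u$. For the forward direction, take a topological order of the acyclic digraph $D_M$. The converse is immediate, since $D_M$ is then consistent with $<$. Because every acyclic $D_M$ admits such an order, it suffices to optimize over pairs (order, matching) that are consistent in this sense. The key locality fact is that consistency is checked edge by edge, and each edge lies in some bag, so only the relative order of bag vertices ever matters. I would also check that any family of local orders that agree on overlaps of adjacent bags glues to a global linear order. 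This holds because the constraints form a digraph whose arcs all lie inside bags and are pairwise consistent. Its acyclicity follows from the standard argument that a cycle in a digraph of bounded treewidth projects into a single bag through separators; one can also simply build the global order by inserting forgotten vertices between their bag neighbours.

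\textbf{Step 2: The DP.} Let $c[t,\pi,S]$ be the minimum weight of unmatched forgotten vertices over partial solutions in $G_t$ whose order restricted to $X_t$ is $\pi$ and whose matched set restricted to $X_t$ is $S$. The node types are handled as follows.
\begin{itemize}
\item \emph{Leaf:} trivial.
\item \emph{Introduce $v$:} extend $\pi$ by inserting $v$ in any of the $|X_t|$ positions, with $v\notin S$.
\item \emph{Introduce edge $uv$:} either keep the edge unmatched and require $u<_\pi v$, or match it, requiring $u,v\notin S$ and $v<_\pi u$, and then add $u,v$ to $S$.
\item \emph{Forget $v$:} project $\pi$ and $S$, adding $\omega(v)$ if $v\notin S$.
\item \emph{Join:} combine children states with the same $\pi$ and disjoint masks $S_1,S_2$, with $S=S_1\cup S_2$. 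This is correct because each edge is introduced in exactly one branch, so a bag vertex can be matched in at most one branch.
\end{itemize}
Disjoint-union over subsets costs $3^{k+1}$ per order, which is still $2^{O(k\log k)}$.

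\textbf{Main obstacle.} The hardest part is correctness at join and forget nodes: partial orders in the two subtrees, consistent with the same $\pi$ on the bag, must merge into one global linear order without creating a cycle through forgotten vertices. The plan is to show that the union of the two acyclic constraint digraphs, which intersect exactly in $X_t$ where both are ordered by the same $\pi$, is acyclic. The argument takes a shortest cycle. Since $X_t$ separates the two sides, the cycle alternates between the sides through bag vertices. Each maximal segment on one side is a path between bag vertices, and in that side's order it implies the corresponding $\pi$-relation. Replacing each segment by its relation yields a cycle in $\pi$, which is a contradiction. With this lemma, soundness and completeness follow by a standard induction, and the root value (after forgetting everything) is the optimum.
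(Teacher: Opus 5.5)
Your proposal is correct and is essentially the paper's proof. It uses the same order-plus-mask states, the same forced handling of forward and backward edges at introduce-edge nodes, the same forget and join rules, and the same $(k+1)!\,2^{O(k)}$ state count.

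The differences are only in bookkeeping. You use the standard convention that each edge is introduced exactly once, so your join is a plain disjoint union of masks. The paper instead has to carve out the set $M_I(g)$ of bag-internal backward edges first, because of its convention (T5). You also justify merging the two child orders by an acyclicity argument, whereas the paper simply interleaves the forgotten vertices gap by gap between consecutive bag vertices.
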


\textbf{Shifting from matchings to orders.}
As committed procrastinators, we now do our best to avoid thinking
about matchings. Instead, we turn to vertex orders. Let
$\pi = (v_1,\dots,v_n)$ be a total order of $V$. An edge $(u,v)\in E$
is \emph{backward} with respect to $\pi$ if $v$ appears before $u$ in
$\pi$, and let
$M(\pi) := \{ (u,v)\in E : (u,v) \text{ is backward w.r.t.\ }\pi \}$
be the set of backward edges. Once $\pi$ is fixed, the set of edges to reverse is determined: we always take $M(\pi)$ and reverse exactly these edges.
We call $\pi$ a \emph{feedback Morse order} if $M(\pi)$ is a matching
(no two edges in $M(\pi)$ share a vertex).

\begin{problem}{Feedback Morse Order (FMO)}\label{prob:fmo}
\textbf{Input:} A digraph $D=(V,E)$ and a weight function
$\omega : V \to \mathbb{R}$.\\[2pt]
\textbf{Question:} Find a feedback Morse order $\pi$ that minimizes the total
weight of vertices that are unmatched in $M(\pi)$.
\end{problem}

\begin{figure}[!ht]
\centering
\includegraphics[width=\textwidth]{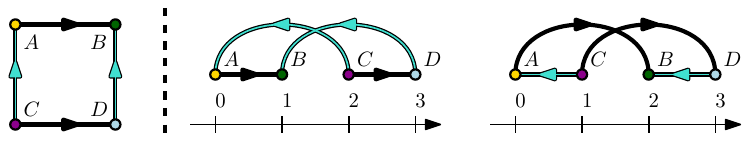}
\caption{\label{fig:twoMorseOrdersInducingTheSameSolution}
Two feedback Morse orders of the same digraph, drawn as permutations of the
vertices. In both orders, the same set of backward edges (highlighted) forms a
feedback Morse matching $M$, showing that one matching may admit many
compatible orders; other orders can yield different backward-edge sets (and
different matchings), or even no matching at all. For example,
$D \prec C \prec B \prec A$ induces the other optimal matching
$\{A\!\to\!B,\,C\!\to\!D\}$, $C \prec D \prec B \prec A$ induces only
$\{A\!\to\!B\}$, and $A \prec D \prec C \prec B$ has backward edges
$C\!\to\!A$ and $D\!\to\!C$, which is not a matching.}
\end{figure}

\begin{lemma}[Matchings vs.\ orders]\label{lem:matching-order-equivalence-short}
Let $D=(V,E)$ be a digraph. If $\pi$ is a feedback Morse order, then
$M(\pi)$ is a feedback Morse matching on $D$. Conversely, if $M$ is a
feedback Morse matching on $D$, then there exists a feedback Morse
order $\pi$ with $M(\pi)=M$.
\end{lemma}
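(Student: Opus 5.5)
Both directions reduce to one observation: reversing exactly the backward edges of a total order turns every edge into a forward edge. I will prove this first and then use it in each direction.

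\emph{Forward direction.} Let $\pi$ be a feedback Morse order. By definition $M(\pi)$ is a matching, so it remains to show that $D_{M(\pi)}$ is acyclic. Every edge of $D$ is either forward with respect to $\pi$ and left unchanged, or backward and reversed. In either case the corresponding edge of $D_{M(\pi)}$ points from the earlier to the later endpoint in $\pi$. Hence $\pi$ is a topological order of $D_{M(\pi)}$. A directed cycle would have to return to its $\pi$-earliest vertex along an edge going backward in $\pi$, which is impossible, so $D_{M(\pi)}$ is acyclic.

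\emph{Converse.} Let $M$ be a feedback Morse matching, so $D_M$ is acyclic. Take $\pi$ to be any topological order of $D_M$, which exists because $D_M$ is acyclic. We check that $M(\pi)=M$ edge by edge; I will assume $D$ has no loops and no pair of parallel arcs in the same direction, so that each edge of $D$ corresponds to exactly one edge of $D_M$.
\begin{itemize}
\item If $(u,v)\in E\setminus M$, then $(u,v)$ is an edge of $D_M$. Thus $u$ precedes $v$ in $\pi$, so $(u,v)$ is not backward.
\item If $(u,v)\in M$, then $(v,u)$ is an edge of $D_M$. Thus $v$ precedes $u$ in $\pi$, so $(u,v)$ is backward.
\end{itemize}
Hence $M(\pi)=M$. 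Since $M$ is a matching, $M(\pi)$ is a matching, so $\pi$ is a feedback Morse order.

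The only delicate point is a pair of opposite arcs $(u,v),(v,u)\in E$. If exactly one of them lies in $M$, then $D_M$ contains $(v,u)$ twice, which is harmless for acyclicity. If neither lies in $M$, then $D_M$ contains a $2$-cycle, which contradicts the assumption that $D_M$ is acyclic. If both lie in $M$, then $M$ is not a matching, since the two arcs share both endpoints. So the case analysis above holds verbatim in every case.

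There is no serious obstacle here. The main care is in the bookkeeping of edges under reversal, including these antiparallel and multi-arc cases. For Hasse diagrams these cases cannot occur.
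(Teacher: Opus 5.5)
Your proof is correct and is essentially the paper's argument: reversing the backward edges makes every arc point forward along $\pi$, and any topological order of $D_M$ recovers $M$ by the same edge-by-edge check carried out in the paper's appendix corollary. The antiparallel-arc discussion is harmless but not needed, since your two cases never use a bijection between arcs of $D$ and $D_M$. The simplicity assumption that genuinely matters is looplessness in the forward direction (a loop is never backward and would survive as a cycle in $D_{M(\pi)}$), which the paper likewise leaves implicit.
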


\begin{proof}
For any order $\pi$, reversing all backward edges makes every edge point
forward along $\pi$, so reversing all backward edges yields an acyclic 
digraph; if $M(\pi)$ is a matching,
it is a feedback Morse matching. Conversely, if $M$ is a feedback Morse
matching, then $D_M$ is acyclic, and any topological order $\pi$ of $D_M$
satisfies $M(\pi)=M$.
\end{proof}

\textbf{A Forman correspondence.} By Lemma~\ref{lem:matching-order-equivalence-short}, \textsc{FMO} and \textsc{FMM} are equivalent optimization problems: every optimal feedback Morse order induces an optimal feedback Morse matching and vice versa, so in what follows we work entirely with the order-based formulation.

\subsection{R-FMO: The boundary subproblem on bags}\label{subsec:rfmo-intuition}

\textbf{Orders and masks on bags.}
We run our dynamic program over a rooted nice tree decomposition of the
underlying undirected graph of $D$, refined with introduce-edge bags as in
Section~\ref{sec:preliminaries}. For a bag $t$ let $X_t$ be its vertex set,
$T_t$ the subtree rooted at $t$, $W_t$ the vertices appearing in bags of
$T_t$, and $G_t$ the subgraph on $W_t$ whose edges are exactly those whose
introduce-edge bags lie in $T_t$. Intuitively, $G_t$ is the part of the graph
already processed when we are at $t$, and $X_t$ is its boundary. A global
feedback Morse order $\pi$ on $D$ restricts at $t$ to a total order $g$ on
$X_t$ and a set $U \subseteq X_t$ of boundary vertices already matched inside
$G_t$; we view $U$ as a mask on $X_t$, marking which boundary vertices are
already matched. This leads to the following boundary subproblem.

\begin{problem}{Restricted Feedback Morse Order (R-FMO)}\label{prob:rfmo}
\textbf{Input:} A digraph $G=(W,E')$ with vertex weights
$\omega : W \to \mathbb{R}$, a boundary set $X \subseteq W$, a total order
$g$ of $X$, and a subset $U \subseteq X$.\\[2pt]
\textbf{Question:} Among all feedback Morse orders $\pi$ on $G$ with
$\pi|_X = g$ and $V(M(\pi)) \cap X = U$, minimize
$\sum_{v \in W \setminus X : v \notin V(M(\pi))} \omega(v)$;
if no such $\pi$ exists, the optimum is defined as $+\infty$.
\end{problem}

The global optimum of \textsc{FMO} (and hence of \textsc{FMM}) is the value of
an R-FMO instance at the root bag, whose boundary is empty; in particular,
every state in our dynamic program will consist of an order on the bag
together with a mask of matched boundary vertices.

\begin{figure}[h!]
    \centering
    \includegraphics[width=\linewidth]{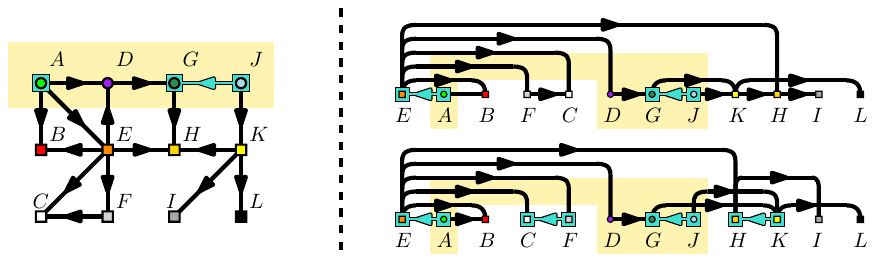}
    \caption{\label{fig:rfmo-idea}
    An instance of a Restricted Feedback Morse Order (R-FMO) subproblem (left),
    together with two feasible solutions (right): two different feedback Morse
    orders on the same instance (top and bottom) inducing different sets of
    matched and unmatched vertices.}
\end{figure}

\subsection{Dynamic program on a tree decomposition}\label{subsec:dp-main}

We now describe the dynamic program at a high level; full recurrences and a
formal DP invariant for R-FMO are deferred to \cref{appendix:proof-correctness}. For each bag
$t$ and each state $(g,U)$ on $X_t$ as above, we maintain a table entry
$c[t,g,U]$, defined as the optimum of the R-FMO instance $(G_t,X_t,g,U)$, that
is, the minimum total weight of unmatched vertices in $W_t\setminus X_t$ over
all feedback Morse orders on $G_t$ compatible with $(g,U)$; if no such order
exists, we set $c[t,g,U]=+\infty$. Figure~\ref{fig:algorithm_idea}
illustrates how such states are propagated along a tree decomposition and how
locally invalid states are discarded. We process the nice tree decomposition bottom–up. At each bag type we update
the table using only local information.

\begin{figure}[h!]
    \centering
    \includegraphics[width=\linewidth]{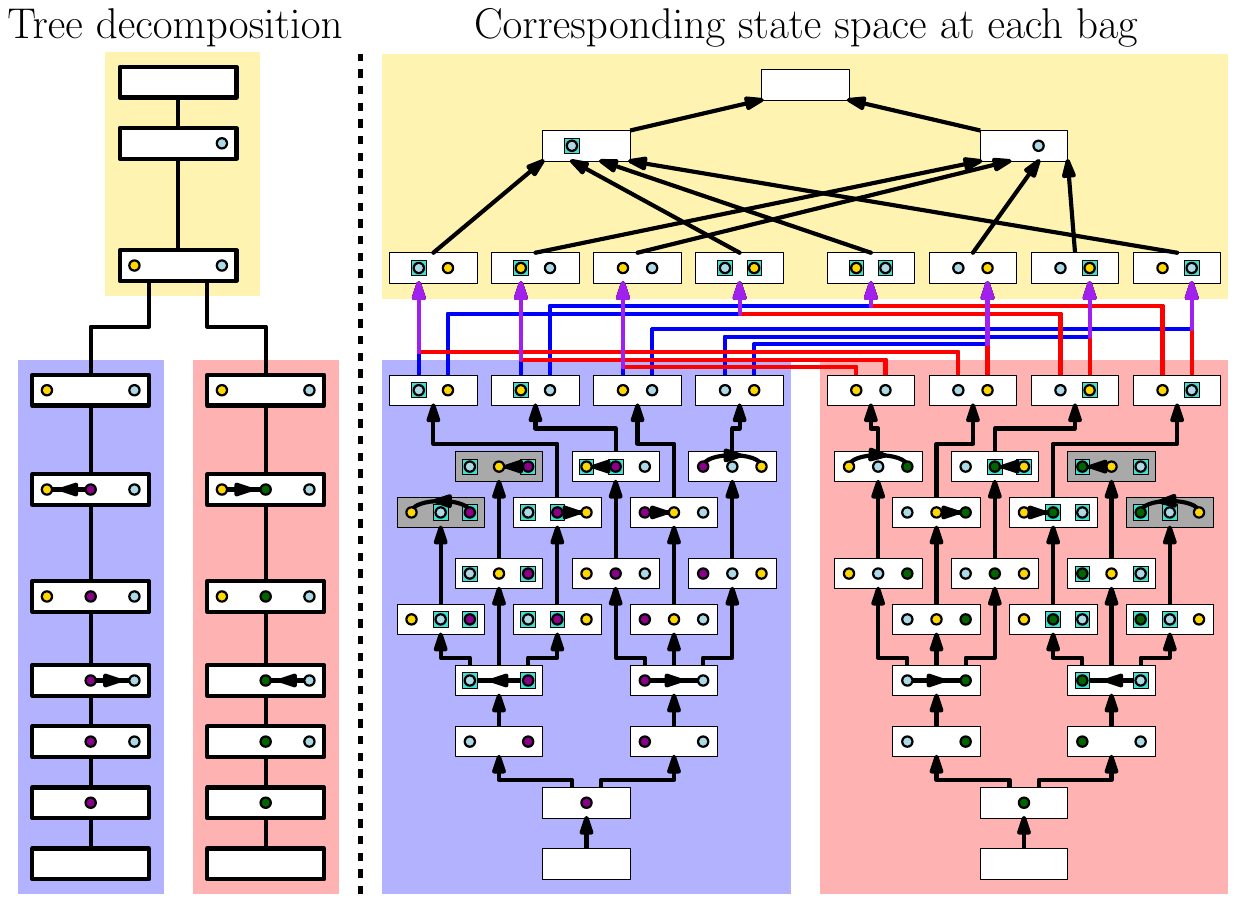}
    \caption{\label{fig:algorithm_idea}
    Dynamic program on the example from
    \cref{fig:twoMorseOrdersInducingTheSameSolution}. Bags of the tree
    decomposition are shown on the left; representative states (bag orders and
    matched subsets) on the right. States violating local adjacency, matching,
    or acyclicity constraints are discarded (grey).}
\end{figure}

\textbf{Leaf.}
        The processed subgraph is empty, so there is a single
        state with empty order and empty matched set, and cost~$0$.

\textbf{Introduce-vertex.}
        A new vertex $v$ enters the bag (and the graph) with no incident edges
        yet. It cannot already be matched, so $v\in U$ is forbidden.
        Otherwise we extend the order $g$ by inserting $v$ at its
        chosen position; the cost does not change.

\textbf{Introduce-edge.}
        A new edge $(u,v)$ is introduced between vertices already in the bag.
        The order $g$ determines whether it is forward or backward: (i) if $(u,v)$ is forward in $g$, it can never be backward in any extension and thus can never enter the matching; (ii) if $(u,v)$ is backward in $g$, it \emph{must} be in the matching, and this is the unique place where it is introduced along the path.
        In the backward case we insist that $u$ and $v$ are currently
        unmatched in the child and become matched in the parent. Any state where
        this would create a double match in $G_t$ is discarded.

\textbf{Forget-vertex.}
        A vertex $v$ leaves the bag. At this moment all edges incident to $v$
        have already been introduced below, so $v$'s matching status is final.
        We branch on whether $v$ is matched in the child: if $v$ is unmatched,
        we add $\omega(v)$ to the cost; if $v$ is matched, $v$ never contributes
        again. We then take the minimum over all child states in which $v$
        appears at some position in the child order and the projected order on
        the remaining bag is $g$.

\textbf{Join.}
        Two subtrees with the same bag $X_t$ are merged. The processed subgraph
        $G_t$ is the union of $G_s$ and $G_{s'}$, and the bag order $g$ is the
        same in all three bags.
        The order $g$ already decides which bag-internal edges are backward, and
        hence which bag vertices are forced to be matched via those edges; call
        this set $M_I(g)$. Every feasible state $(g,U_\bullet)$ must satisfy
        $M_I(g)\subseteq U_\bullet$ for the parent and both children (if $g$
        induces conflicting backward edges, all such states are infeasible and
        $c[t,g,\cdot]=+\infty$).
        Outside $M_I(g)$, a bag vertex can be matched strictly below $t$ in
        \emph{at most one} of the two subtrees (there are no edges between the
        forgotten parts of the two subgraphs). Thus for a parent state $(g,U_t)$
        we look over all pairs of child matched sets $(U_s,U_{s'})$ such that $
          M_I(g)\subseteq U_s,U_{s'},U_t$ and $ 
          U_t\setminus M_I(g)
          = (U_s\setminus M_I(g)) \,\dot\cup\, (U_{s'}\setminus M_I(g)).$ 
        We then set $c[t,g,U_t]$ to the minimum of $c[s,g,U_s]+c[s',g,U_{s'}]$
        over all such pairs.

A formal DP invariant and a proof of soundness and completeness of these
transitions are deferred to \cref{appendix:proof-correctness}. At the root bag
$r$, the bag is empty, so there is a single state $(\emptyset,\emptyset)$; its
value $c[r,\emptyset,\emptyset]$ is exactly the optimum of \textsc{FMO} on $D$,
and hence of \textsc{FMM} and \textsc{OMM}.

\subsection{Running time}\label{subsec:runtime-short}
We sketch the running-time bound; the full accounting is given in
Appendix~\ref{sec:runtime}. Let $k$ be the treewidth of the underlying undirected
graph of $D$ and set $n:=|V(D)|$. In a nice tree decomposition of width $k$, each
bag has size at most $k{+}1$. A DP state is a pair $(g,U)$ where $g$ is a total
order on the current bag and $U$ is a subset of its vertices. Hence the number of
states per bag is at most $
(k{+}1)!\cdot 2^{k+1}
\;\le\; (k{+}1)^{k+1}\cdot 2^{k+1}
\;=\; 2^{(k+1)\log_2(k+1) + (k+1)}
\;=\; 2^{O(k\log k)} .$
For each fixed state, leaf/introduce-vertex/introduce-edge transitions take
$k^{O(1)}$ time, and a forget transition branches over $k{+}2$ insertion positions.
At a join bag, the order $g$ is shared by both children, and combining solutions
amounts to splitting the bag-mask information between the two subtrees, yielding at
most $2^{k+1}$ admissible child pairs per state. Thus the work per bag is
$2^{O(k\log k)}$, and since the decomposition has $O(n)$ bags, the total running
time is $2^{O(k\log k)}\,n$.

\section{ETH-optimality} 
\label{sec:optimality}

How expensive is treewidth for \textsc{OMM}? Our \(2^{O(k\log k)} n\)-time
algorithm shows what is achievable, and in this section we prove that this is
indeed the true price under the Exponential Time Hypothesis (ETH). Starting
from the ETH-based lower bound for \textsc{Directed Feedback Vertex Set}
(\textsc{DFVS}) parameterized by treewidth, we give a new polynomial-time
reduction from \textsc{DFVS} to \textsc{Erasibility} on $2$-dimensional
complexes of bounded coface degree. We then realize this reduction
bag-by-bag using the Width Preserving Strategy (WiPS), which performs
structural induction along a tree decomposition and keeps treewidth within a
constant factor. As a consequence, any \(2^{o(k\log k)} n^{O(1)}\)-time
algorithm for \textsc{OMM} (even in this restricted setting) would yield such
an algorithm for \textsc{DFVS}, contradicting ETH and pinning down the
dependence on \(k\) as \(2^{\Theta(k\log k)}\).

\begin{theorem}
\label{thm:erasibility-eth}
Assuming ETH, \textsc{Erasibility} parameterized by treewidth~$k$ admits no
\(2^{o(k \log k)} n^{O(1)}\)-time algorithm, even when the input is restricted
to $2$-dimensional simplicial complexes of top coface degree at most~$4$.
\end{theorem}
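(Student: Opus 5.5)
We reduce from \textsc{DFVS} parameterized by treewidth (Theorem~\ref{thm:dfvs-eth}). Given $(D,s)$ with a tree decomposition of width $k$ of the underlying graph, we build in polynomial time a $2$-dimensional simplicial complex $K_D$ and a budget $B$ such that $D$ has a feedback vertex set of size at most $s$ if and only if $K_D$ minus at most $B$ triangles is erasible. We also require $\mathrm{tw}(H(K_D)) = O(k)$ and that every edge of $K_D$ lies in at most $4$ triangles. Any $2^{o(k\log k)}n^{O(1)}$ algorithm for \textsc{Erasibility} would then decide \textsc{DFVS} in $2^{o(k\log k)}n^{O(1)}$ time, which contradicts ETH.

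\textbf{Gadgets.} Erasibility behaves like acyclicity. Collapses proceed in an order, and a triangle can be collapsed only once one of its edges has become free. This gives a cyclic dependency structure that mirrors directed cycles in $D$. The gadgets are as follows.

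\emph{Vertex gadget.} For each $v\in V(D)$ we take a triangulated disc-like piece $P_v$ with a designated ``key'' triangle $\tau_v$. We arrange that deleting $\tau_v$ (paying $1$) makes $P_v$ erasible by itself. If $\tau_v$ is not deleted, $P_v$ can be erased only after certain ``port'' edges become free.

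\emph{Arc gadget.} For each arc $(u,v)$ we add a strip of triangles glued to an out-port of $P_u$ and an in-port of $P_v$. Its effect is that the in-port edge of $P_v$ can become free only after $P_u$ has been (at least partly) erased.

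Each vertex has unbounded degree, so a single port edge would have high coface degree. Instead we give $P_v$ a path or tree of port edges, one per incident arc. This also lets us keep the top coface degree at most $4$. The intended correctness argument runs as follows.

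\emph{Forward direction.} If $J$ is a feedback vertex set, delete the key triangles $\tau_v$ for $v\in J$. Then erase the gadgets in a topological order of $D\setminus J$, treating the gadgets of $J$ as sources that are freely erasible.

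\emph{Backward direction.} Given a deletion set $S$, first normalize it: replace any deleted triangle inside $P_v$ or inside an arc strip by $\tau_v$ (for an arc strip, by $\tau_u$ of one endpoint). This must not increase $|S|$. Then show that the set of vertices whose gadget contains a deleted triangle is a feedback vertex set. The reason is that in any directed cycle $C$ avoiding it, no gadget along $C$ can be erased first, since each waits on its predecessor. A remaining $2$-core of triangles persists, which contradicts erasibility. Setting $B=s$ requires that gadgets contain no other ``forced'' deletions. In particular, every gadget must be collapsible once its in-ports are free, so no hidden spheres may appear.

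\textbf{Treewidth via WiPS.} To keep treewidth linear, we do not argue globally. We apply the Width Preserving Strategy of~\cite{Vaagset2024}: traverse a nice tree decomposition of $D$ (with introduce-edge bags), and at each bag attach only constant-size pieces.
\begin{itemize}
\item Introduce-vertex creates the start of $P_v$.
\item Introduce-edge attaches the arc strip and extends the port paths of its endpoints by a constant-size segment.
\item Forget-vertex closes off $P_v$.
\end{itemize}
Each bag of the new decomposition then consists of a constant-size ``frontier'' piece of each gadget of a vertex in the current bag, so its width is $O(k)$ on the Hasse diagram. The WiPS framework supplies the bookkeeping that turns this structural induction into an actual tree decomposition.

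\textbf{Main obstacle.} I expect the hardest step to be the backward direction. One must prove that an optimal deletion set can be normalized to key triangles, and that partial erasures cannot ``leak'' through a gadget to break the cyclic dependency. A triangle removed midway along a port path must not unlock a gadget without paying for it. I would first design the port path so that its triangles form a chain in which every edge has degree exactly $2$ until an end is freed; a single interior deletion then only unlocks what $\tau_v$ would have unlocked anyway. I would verify the gadget behaviour on small cases by exhaustive check before writing the general exchange argument.
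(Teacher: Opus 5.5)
\textbf{The gadgets do not implement directed, conjunctive dependencies.} Your overall architecture is the paper's: reduce from \textsc{DFVS} (Theorem~\ref{thm:dfvs-eth}), pay one distinguished triangle per feedback vertex, collapse along a topological order in the forward direction, find a persistent subcomplex along a surviving directed cycle in the backward direction, and control width with WiPS. The gap is in the gadgets, which carry all the content of the reduction. Your backward argument (``each gadget on $C$ waits on its predecessor'') needs two properties:
\begin{itemize}
\item \emph{directionality}: erasing the head of an arc $u\to v$ never helps the tail $u$;
\item \emph{conjunction}: $P_v$ stays blocked until \emph{all} of its in-neighbours are gone.
\end{itemize}
The design you describe has neither.

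A strip glued along a port edge of $P_u$ and a port edge of $P_v$ is symmetric. Once $P_v$ disappears, the shared edge lies only in the strip, the strip collapses toward $P_u$, and the out-port edge of $P_u$ becomes free. Similarly, a disc-like $P_v$, or any strongly connected piece whose edges have only two cofaces, is erased completely as soon as any one of its port edges is free, by a sweep through its dual graph. Your plan of giving every edge degree exactly $2$ guarantees exactly this leakage. Two examples:
\begin{itemize}
\item For $a\leftrightarrow b$ plus an arc $c\to a$ from a source $c$, your complex erases with no deletions ($P_c$, then $P_a$, then $P_b$), but every feedback vertex set has size $1$.
\item For two $2$-cycles $a\leftrightarrow b$ and $c\leftrightarrow d$ joined by $a\to c$, deleting $\tau_c$ alone erases everything, but the optimum is $2$.
\end{itemize}
So some edges of degree at least $3$ are unavoidable: at them one piece must block another without being blocked in return.

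\textbf{The missing idea is the paper's fuse/lock mechanism.} Each vertex $v$ has a fuse $\mathcal{F}(v)$: a cylinder that can be unravelled only from its one free boundary circle, with its other end glued to the boundary of the lock $\mathcal{L}(v)$. The lock is a surface of genus $\deg^+(v)$ containing the detonator. An arc $u\to v$ glues an \emph{interior} simple closed curve of $\mathcal{L}(u)$, one per handle, to a full \emph{cross-section} of $\mathcal{F}(v)$.
\begin{itemize}
\item \emph{Conjunction.} The cross-section separates the cylinder, so the collapse of $\mathcal{F}(v)$ cannot get around a pinch while $\mathcal{L}(u)$ is present. The pinches lie in series along the fuse.
\item \emph{Directionality.} The curve is interior to $\mathcal{L}(u)$, so its edges keep two lock triangles even after the fuse is gone.
\end{itemize}
These pinch edges, with two lock triangles and two fuse triangles each, are also exactly where the coface-degree bound $4$ comes from. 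Your sketch asserts that bound without a mechanism.

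Two smaller points. Your WiPS sketch omits join bags, where the two partial gadgets of a bag vertex must be merged without destroying these properties. The paper uses a pair of pants on the lock side, and a common boundary circle plus a short cylinder on the fuse side. Your normalization step is unnecessary: once every triangle is assigned to some vertex gadget (the paper puts each arc cylinder in the head's fuse), the set of vertices whose gadget meets the deletion set already has size at most $|S|$ and is a feedback vertex set.
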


\subsection{Gadgets and obstructions}

\textbf{Vertex gadgets: fuses and locks.}
Let \(D\) be a \textsc{DFVS} instance. For each vertex \(v \in V(D)\) we build
a local gadget \(Y^v = \mathcal{F}(v) \cup \mathcal{L}(v)\), illustrated in
\cref{FIG:VertexGadgetOverview}. The fuse \(\mathcal{F}(v)\) is a
2\mbox{-}dimensional simplicial complex homeomorphic to a hollow cylinder
\(S^1 \times I\) with two boundary circles: one is a free boundary circle
(marked in red) from which we can start collapsing the complex, “unravelling’’
the fuse by elementary collapses along the cylinder; the other boundary circle
is non-free and is glued to the lock \(\mathcal{L}(v)\). The lock
\(\mathcal{L}(v)\) is a 2\mbox{-}dimensional simplicial complex homeomorphic
to a compact orientable surface of genus \(\deg^+(v)\) with a single boundary
component, which is attached to this non-free boundary circle of the fuse. For
each outgoing arc \(v \to w\) in \(D\), we select a distinct simple closed
curve on \(\mathcal{L}(v)\) and glue it to a cross-section circle of the fuse
\(\mathcal{F}(w)\), so that each such curve acts like a finger pinching the
fuse: it prevents \(\mathcal{F}(w)\) from collapsing past that cross-section
as long as the lock \(\mathcal{L}(v)\) is present. In the interior of
\(\mathcal{L}(v)\) we mark a single triangle \(\deton{v}\) (the detonator);
deleting this triangle makes the entire lock collapsible, eliminates all its
pinches, and simultaneously releases every constraint that
\(\mathcal{L}(v)\) imposes on neighbouring fuses. We triangulate \(\mathcal{L}(v)\) so that it becomes collapsible in either of
two situations: after deleting \(\deton{v}\), or after its attached fuse
\(\mathcal{F}(v)\) has been completely collapsed. In particular, as long as
\(\deton{v}\) is present and \(\mathcal{F}(v)\) is intact, the lock cannot be
collapsed while it still pinches some neighbouring fuse.

\begin{figure}[!ht]
    \centering
    \includegraphics[width=\textwidth]{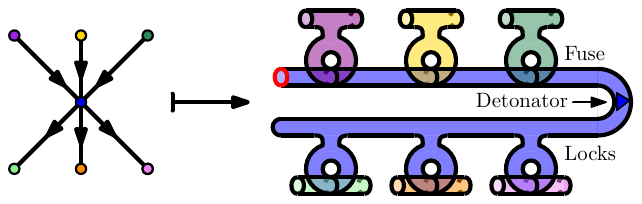}
    \caption{\label{FIG:VertexGadgetOverview}
      Schematic vertex gadget for a vertex \(v\): the fuse \(\mathcal{F}(v)\)
      and the lock region \(\mathcal{L}(v)\). }
\end{figure}

\textbf{Ouroboroi as obstructions.}
Fix a directed cycle \(C = (v_1,\dots,v_\ell)\) in \(D\). For each arc
\((v_i,v_{i+1})\) (indices modulo~\(\ell\)) we use one outgoing handle of
\(\mathcal{L}(v_i)\) and glue it so that it pinches the fuse
\(\mathcal{F}(v_{i+1})\) as described above. Taking exactly these pieces along
the cycle yields a closed ring of vertex gadgets in which each lock pinches the
next fuse; because of the visual similarity to a snake eating its own tail, we
call such a ring an \emph{\(\ell\)-ouroboros}, see
\cref{FIG:ReductionOverviewSmall}. Along an ouroboros, each fuse can be
collapsed from its free boundary until it reaches the first pinch, but it
cannot be collapsed past that point while the corresponding lock is present.
By construction, every edge in this ring belongs to at least two triangles
(coming from the fuse, the lock, or their intersection), so inside this
subcomplex there are no free edges and hence no elementary collapses. In
particular, an ouroboros persists under any sequence of elementary collapses
until at least one of its locks is destroyed, and we will use these rings as
obstructions witnessing the presence of directed cycles in \(D\).

\begin{figure}[!ht]
    \centering
    \includegraphics[width=0.49\textwidth]{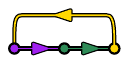}
    \hfill
    \includegraphics[width=0.47\textwidth]{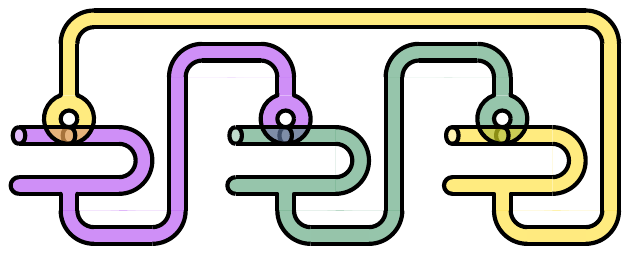}
    \caption{\label{FIG:ReductionOverviewSmall}
      A directed \(3\)-cycle in \(D\) and the corresponding \(3\)-ouroboros:
      three vertex gadgets arranged in a ring, each lock pinching the next
      fuse. No edge on this ring is free, so it remains non-erasible until
      some detonator is deleted.}
\end{figure}

\subsection{Correctness of the reduction}

Let \(Y\) be the complex constructed from a \textsc{DFVS} instance \(D\), and
for each vertex \(v\) let \(\deton{v}\) be its detonator triangle in the gadget
\(Y^v\). We show that this construction yields a parameter-preserving
polynomial-time reduction from \textsc{DFVS} to \textsc{Erasibility}: for every
integer \(s\), \(D\) has a feedback vertex set of size at most \(s\) if and
only if \(Y\) becomes erasible after deleting at most \(s\) triangles.

\textbf{Forward direction.}
Let \(S \subseteq V(D)\) be a feedback vertex set of size at most \(s\), and
let \(T := \{\deton{v} \mid v \in S\}\). Deleting \(\deton{v}\) makes the entire
lock \(\mathcal{L}(v)\) collapsible and removes all its pinches on neighbouring
fuses, so in \(Y \setminus T\) we can collapse every lock \(\mathcal{L}(v)\) with
\(v \in S\). Since \(D \setminus S\) is acyclic, we
can order its vertices topologically and, in that order, collapse each
remaining gadget \(Y^v\): once all incoming locks to \(\mathcal{F}(v)\) have
been removed, the fuse \(\mathcal{F}(v)\) collapses completely, and by the
gadget design the remaining lock \(\mathcal{L}(v)\) then collapses as well. At
the end only gadgets for vertices in \(S\) remain, but their locks have already
been removed and their fuses are just cylinders with a free boundary, so they
also collapse. Thus \(Y \setminus T\) is erasible after deleting
\(|T| = |S| \le s\) triangles.

\textbf{Backward direction.}
Conversely, let \(T\) be a set of at most \(s\) triangles such that
\(Y \setminus T\) is erasible, and define
\(S' := \{\, v \in V(D) \mid Y^v \cap T \neq \emptyset \,\}\). Clearly
\(|S'| \le |T| \le s\). Suppose for contradiction that \(D \setminus S'\) still
contains a directed cycle \(C = (v_1,\dots,v_\ell)\). By construction, the
gadgets \(Y^{v_1},\dots,Y^{v_\ell}\) contain an \(\ell\)-ouroboros subcomplex
\(Z\). Since no gadget on \(C\) intersects \(T\), \(Z\) is disjoint from \(T\),
and every edge of \(Z\) still lies in at least two triangles in
\(Y \setminus T\). Hence no edge of \(Z\) is ever free, so no sequence of
elementary collapses can remove \(Z\), and \(Y \setminus T\) is not erasible—a
contradiction. Thus \(D \setminus S'\) is acyclic, and \(S'\) is a feedback
vertex set of size at most \(s\).

\subsection{Preserving width via WiPS}
\label{subsec:WiPSintroduction}

\textbf{Width blow-up.} Hardness reductions for problems parameterized by treewidth must control the
width of the target instance: a naïve “glue all gadgets at once’’ construction
can easily turn a bounded-treewidth graph into a complex whose Hasse diagram
has very large treewidth. In our setting, globally attaching all vertex
gadgets \(Y^v\) would allow long fuses and many handles to interact and form
large grid-like regions in the Hasse diagram, even when \(D\) itself has small
treewidth; see \cite{black2022eth} for a concrete construction or \cref{FIG:ReductionNoWips} for intuition.

\begin{figure}[!h]
    \centering
    \includegraphics[height=0.27\textheight]{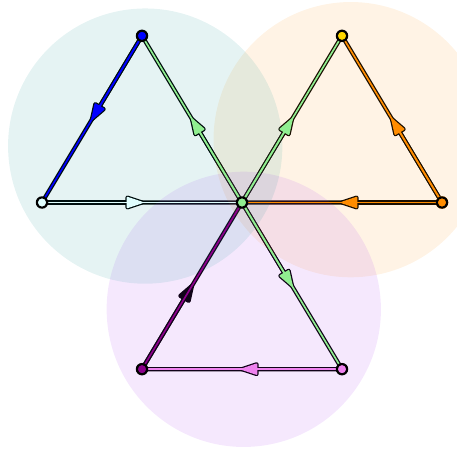}
    \hspace{0.1cm}
    \includegraphics[height=0.27\textheight]{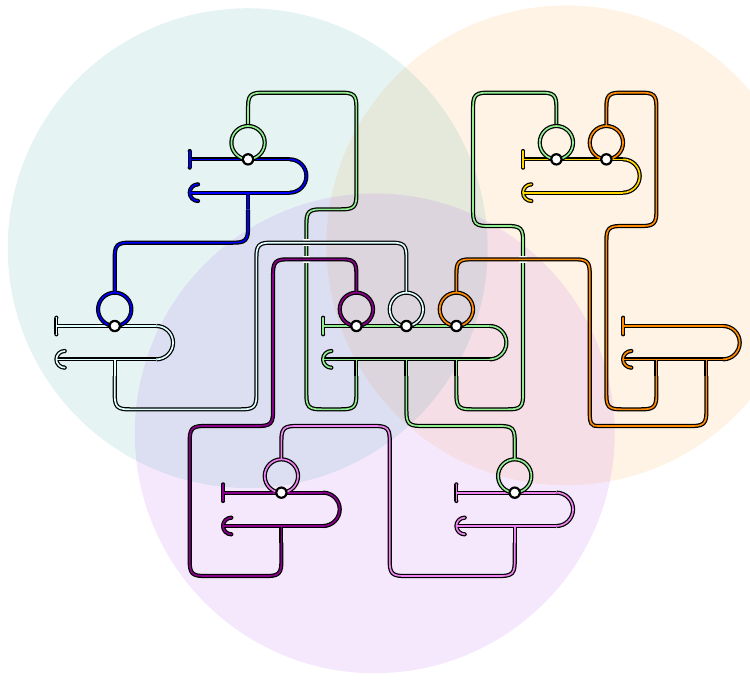}
    \caption{\label{FIG:ReductionNoWips}
    Naïve global gluing of all gadgets.  
    The digraph (top) has small treewidth, but the assembled space (bottom) may
    develop large intertwined regions in the Hasse diagram, with no a priori
    width bound.}
\end{figure}

\textbf{WiPS. }To avoid this blow-up, we assemble the same gadgets \emph{incrementally} along a
nice tree decomposition of \(D\), following the Width Preserving Strategy
(WiPS)~\cite{Vaagset2024}. For each bag \(X_t\) we maintain a partial complex
\(Y_t\) together with a small interface of boundary circles associated with the
vertices in \(X_t\), and when moving from a child bag to its parent we only
apply constant-size local updates touching this interface: at an
introduce-vertex bag we attach a fresh gadget \(\mathcal{F}(v)\cup
\mathcal{L}(v)\); at an introduce-edge bag we add a single pinching handle
between \(\mathcal{L}(u)\) and \(\mathcal{F}(v)\); at a forget-vertex bag we
cap off the remaining boundary components of \(v\); and at a join bag we merge
the two partial gadgets for each \(v\) along a constant-size interface (on the
lock side via a pair-of-pants, on the fuse side by attaching both child
cylinders to a common boundary circle and extending it by a short cylinder).
Intuitively, each bag only “sees’’ a bounded portion of the complex, and only
a bounded number of new simplices is introduced per bag; see
\cref{FIG:ReductionOverviewSmallNtd}.

\begin{figure}[!h]
    \centering
    \includegraphics[height=0.5\textheight]{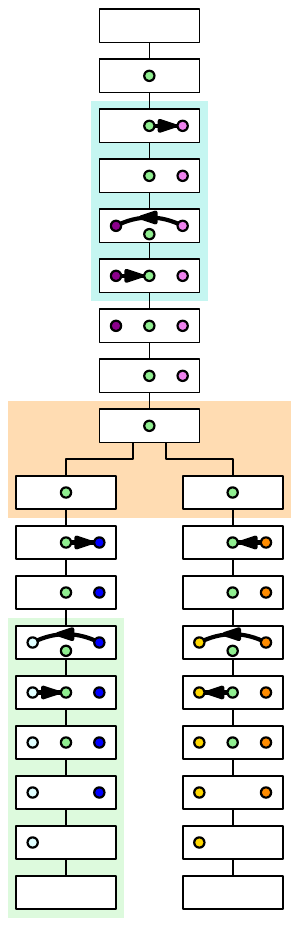}
    \hfill
    \includegraphics[height=0.5\textheight]{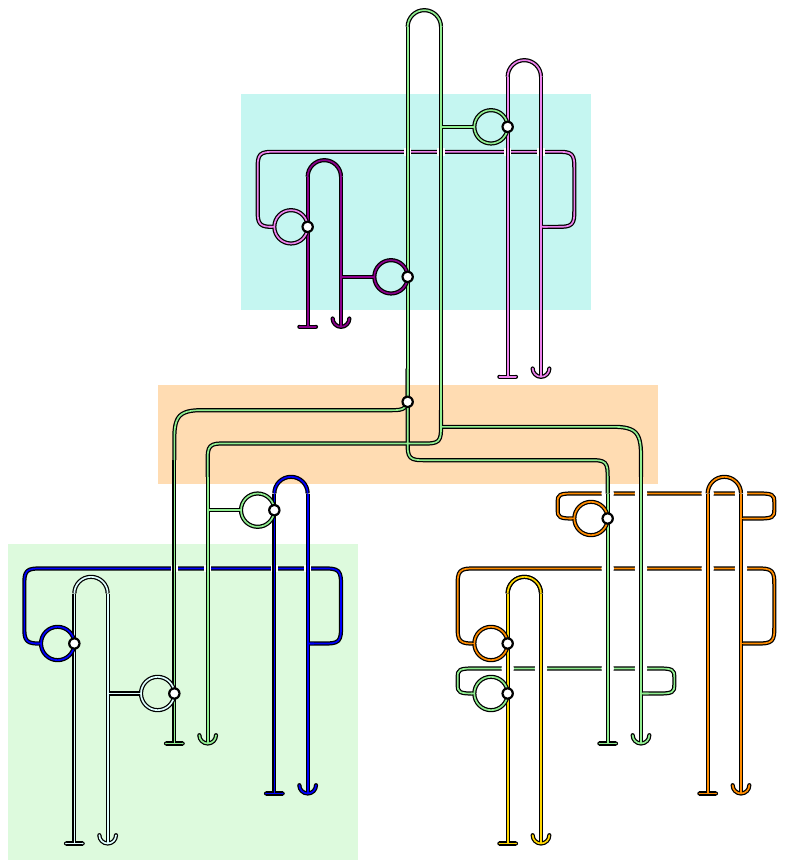}
    \caption{\label{FIG:ReductionOverviewSmallNtd}
    WiPS-style construction.  
    Left: a nice tree decomposition of the digraph in \cref{FIG:ReductionNoWips}  
    Right: the space is grown bag-by-bag, each bag exposing only a small
    interface and adding only constant-size pieces. This ensures that the
    Hasse diagram maintains treewidth \(O(k)\).}
\end{figure}

\textbf{Summary} Instantiating WiPS with our gadgets, we obtain that if \(D\) has treewidth
\(k\), then the complex \(Y\) produced by the above construction has a Hasse
diagram of treewidth at most \(c k + c_0\) for fixed constants \(c,c_0\)
independent of \(D\), and \(|Y|\) remains polynomial in \(|D|\). A detailed
WiPS induction invariant and the exact bag-by-bag construction are deferred to
\cref{appendix:wips-reduction}; in the main text we only use this width bound together with the
gadget behavior described above. Together with the correctness of the
DFVS-to-\textsc{Erasibility} reduction proved above, this yields
Theorem~\ref{thm:erasibility-eth}.

\section{Discussion}
\label{sec:discussion}\label{sec:conclusion}

Table~\ref{tab:tw-upper-lower} summarises our treewidth-parameterized bounds.
For \textsc{FMM}/\textsc{FMO} on digraphs, \textsc{OMM} on Hasse diagrams, and 2D \textsc{Erasability} with coface degree at most~4 we obtain algorithms running in time \(2^{O(k\log k)}n\), and our width-preserving reduction from \textsc{DFVS} shows that no \(2^{o(k\log k)}n^{O(1)}\) algorithm exists under ETH.
Thus these problems have optimal \(2^{\Theta(k\log k)}n^{O(1)}\) dependence on treewidth.
The rows with unknown ETH bounds in Table~\ref{tab:tw-upper-lower} show that the same dynamic programme extends to \textsc{OMM} on triangulated manifolds and to negative-weight \textsc{FMM}/\textsc{FMO}, and it remains open whether ETH lower bounds also carry over or whether the extra structure allows faster algorithms.
Our algorithm also solves \textsc{AC\mbox{-}FM} on bipartite graphs, yielding a \(2^{O(k\log k)}n^{O(1)}\) time bound.
Since \textsc{AC\mbox{-}FM} coincides with \textsc{URM}~\cite{golumbic2001uniquely}, this also gives a \(2^{O(k\log k)}n^{O(1)}\) algorithm for \textsc{URM} in the bipartite setting.
By contrast, on general graphs the recent treewidth-based dynamic programme of~\cite{chaudhary2025urm} runs in \(2^{O(k^2)}n^{O(1)}\) time.
Our ETH lower bound already holds even for bipartite graphs of maximum degree at most~4, and in particular for the bipartite incidence graphs arising from 2-complexes.
Whether the \(2^{O(k\log k)}\) dependence can be achieved for general graphs remains open.

\renewcommand{\arraystretch}{1.2}
\begin{table}[h!]
\centering
\begin{tabular}{|l|l|c|c|}
\hline
\textbf{Problem} & \textbf{Setting} & \textbf{Algorithm} & \textbf{ETH} \\ \hline

\textsc{FMM}/\textsc{FMO}
& General digraphs, any weights
& \(2^{O(k \log k)} \)
& \(2^{o(k \log k)} \) \\ \hline

\textsc{FMM}/\textsc{FMO}
& General digraphs, negative weights
& \(2^{O(k \log k)} \)
& -- \\ \hline

\textsc{OMM}
& DAGs/Hasse diagrams
& \(2^{O(k \log k)}\)
& \(2^{o(k \log k)} \) \\ \hline

\textsc{OMM}
& Triangulated $d$-manifolds
& \(2^{O(k \log k)}\)
& -- \\ \hline

\textsc{2D Erasability}
& Coface degree $\le 4$, unweighted
& \(2^{O(k \log k)}\)
& \(2^{o(k \log k)}\) \\ \cline{2-4}

\emph{Conjectured $\rightarrow$}
& Coface degree $\le 3$, embeddable in $\mathbb{R}^3$
& \(2^{O(k \log k)} \)
& \(2^{o(k \log k)}\) \\ \hline\hline

\textsc{AC\mbox{-}FM}/\textsc{URM}
& General graphs, algorithm from \cite{chaudhary2025urm}
& \(2^{O(k^2)} \)
& \(2^{o(k \log k)} \) \\ \hline

\textsc{AC\mbox{-}FM}/\textsc{URM}
& Bipartite graphs
& \(2^{O(k \log k)} \)
& \(2^{o(k \log k)} \) \\ \hline

\textsc{AC\mbox{-}FM}/\textsc{URM}
& Bipartite graphs, maximum degree $\leq 4$
& \(2^{O(k \log k)} \)
& \(2^{o(k \log k)} \) \\ \hline
\end{tabular}
\caption{Upper and lower running-time bounds in the treewidth parameter \(k\) for the problems considered in this paper, suppressing polynomial factors in \(n\).
The first block contains the problems we study explicitly, ordered from more general to more restricted settings.
The bottom block records bounds for \textsc{AC\mbox{-}FM}/\textsc{URM}; the bounded-degree row indicates that the ETH lower bound already holds for bipartite incidence graphs arising from 2D~\textsc{Erasability}.}
\label{tab:tw-upper-lower}
\end{table}

\textbf{Representation matters.}
Forman’s correspondence identifies discrete Morse matchings with Morse functions or orders on the underlying Hasse diagram, but these viewpoints behave very differently for treewidth-based algorithms.
Most prior work, including Hasse-based formulations of \textsc{OMM} and the parameterized algorithm of~\cite{burton2016parameterized}, stays in the matching language: discrete gradients are matchings and gradient paths are alternating paths.
In that setting one either appeals to meta-theorems such as Courcelle’s theorem, with non-elementary state spaces, or designs ad-hoc state summaries based on alternating-path patterns on each bag of a tree decomposition. Figure~\ref{fig:nontransitive-algoproblem} shows that such summaries are delicate: alternating reachability is not transitive, and merging states solely on the basis of their alternating pattern can silently discard globally optimal extensions.
In particular, our example suggests that the union--find based invariant used in~\cite{burton2016parameterized}, which implicitly treats alternating connectivity as an equivalence relation, is too coarse as stated; one really needs to know which vertices are alternating-reachable from which, not just which “components’’ they lie in.
The recent \textsc{AC\mbox{-}FM}/\textsc{URM} dynamic program of~\cite{chaudhary2025urm} can be seen as a systematic matching-based repair: on each bag it stores an “alternating matrix’’ recording, for every pair of vertices, whether they are connected by an alternating path with respect to the current matching, that is, the full relation \(R_M \subseteq B \times B\) of alternating reachability.
This avoids over-pruning and, on bipartite incidence graphs (such as spines or Hasse diagrams), yields a faithful matching-based implementation of discrete Morse matchings, but at the price of a \(2^{\Theta(k^2)}\) state space for bags of size \(O(k)\), as reflected in Table~\ref{tab:tw-upper-lower}.
Beyond the bipartite/Hasse case the relationship between \textsc{AC\mbox{-}FM} and \textsc{FMM} breaks down, and we do not currently see how to compress alternating information on general graphs to obtain a \(2^{O(k\log k)}\) dependence without a substantially different representation.

\begin{figure}[h!]
  \centering
  \includegraphics[width=\textwidth]{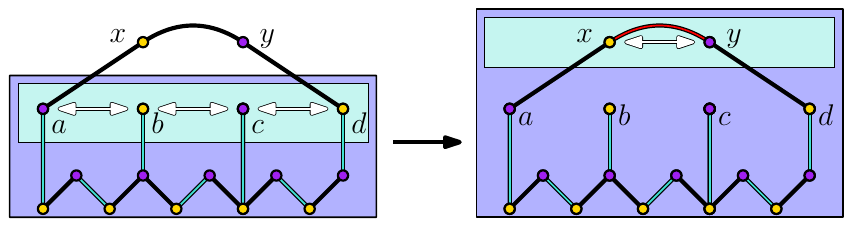}
  \caption{Summaries based only on alternating-path patterns can over-prune. \textbf{Left:} the DP merges states whenever the alternating-path pattern on the current bag is the same; in particular, states that differ only in how they treat \(a\) and \(d\) are identified. \textbf{Right:} after introducing \(x,y\), the matching \(\{ax,yd\}\) is a valid extension and can be arranged to be uniquely optimal, but no surviving state represents it, because all states that kept \(a\) and \(d\) separate were merged earlier.}
  \label{fig:nontransitive-algoproblem}
\end{figure}

\textbf{Conclusion.} We give an order-based dynamic program for \textsc{Optimal Morse Matching} on bounded-treewidth complexes with running time \(2^{O(k\log k)}n\) and, via WiPS-style width-preserving reductions, ETH-tight lower bounds that persist even under strong restrictions on the input. For discrete Morse theory, our results suggest that for treewidth, taking the path of functions and orderings has made all the difference. Looking ahead, the road keeps going:

\begin{itemize}\itemsep1pt
\item \textbf{Three gaps suggested by Table~\ref{tab:tw-upper-lower}.}
First, prove the conjectured ETH-tight lower bound for 2D \textsc{Erasability} on complexes of coface degree~$\le 3$ embeddable in~$\mathbb{R}^3$.
Second, for triangulated manifolds can we obtain a \(2^{O(k)}n^{O(1)}\)-time algorithm for \textsc{OMM}?
Third, for \textsc{AC\mbox{-}FM}/\textsc{URM} we have an ETH-tight \(2^{\Theta(k\log k)}n^{O(1)}\) running time on bipartite graphs, while on general graphs the best known DP runs in \(2^{O(k^2)}n^{O(1)}\) time~\cite{chaudhary2025urm}.
Can we show under ETH that a quadratic dependence on treewidth in the exponent is unavoidable in the general-graph case?

\item \textbf{A locality constraint for \textsc{OMM}.}
With solution-size parameterizations remaining \(\mathsf{W[P]}\)-hard~\cite{burton2016parameterized,bauer2025parameterized}, it is natural to look for complementary structural parameters.
One concrete direction is to enforce \emph{local} simplification by bounding (or minimizing) the length of the longest gradient path, equivalently the longest alternating path in the Hasse diagram after reversing matched edges.
What is the complexity of \textsc{OMM} under such a locality constraint, and how does it interact with bounded treewidth?

\item \textbf{WiPS beyond this paper.}
Our ETH lower bound relies on the Width Preserving Strategy (WiPS)~\cite{Vaagset2024}, which enables reductions to be carried out bag-by-bag while keeping treewidth under control.
Can WiPS be developed into a general toolbox for proving tight treewidth lower bounds (and perhaps transferring XNLP-hardness) for other problems studied via treewidth in topology and geometry, for example quantum invariants and decision problems on triangulations~\cite{burton2017homfly,burton2018algorithms,black2021finding}?
\end{itemize}

\bibliography{bibli}

\newpage

\appendix

\section{Algorithm Details}\label{appendix:proof-correctness}

We prove that the dynamic program from Section~\ref{sec:faster-algorithm} correctly
computes the minimum possible total weight of unmatched vertices in a feedback
Morse order. The proof is by induction over a rooted nice tree decomposition of the
underlying undirected graph of the input digraph \(D\), processed bottom--up.
We begin by fixing notation for tree decompositions and the introduce-edge refinement
used to define the processed subgraphs \(G_t\).

\subsection{Nice tree decompositions and introduce-edge nodes}\label{app:nice-td}
For completeness, we recall (rooted) nice tree decompositions and the introduce-edge
refinement used by the dynamic program; see e.g.~\cite{cygan2015parameterized}.

\begin{definition}[Tree decomposition]
A \emph{tree decomposition} of an undirected graph \(G=(V,E)\) is a pair
\((T,\{X_t\}_{t\in V(T)})\), where \(T\) is a tree and each \(X_t\subseteq V\) is a
\emph{bag}, such that:
\begin{enumerate}\itemsep1pt
  \item \(\bigcup_{t\in V(T)} X_t = V\).
  \item for each \(\{u,v\}\in E\) there exists \(t\) with \(\{u,v\}\subseteq X_t\).
  \item for each \(v\in V\), the set \(\{t\in V(T): v\in X_t\}\)
        induces a connected subtree of \(T\).
\end{enumerate}
The \emph{width} of the decomposition is \(\max_t |X_t|-1\).
\end{definition}

\begin{definition}[Rooted nice tree decomposition]
Fix a root \(r\in V(T)\) and orient \(T\) away from \(r\).
A rooted tree decomposition is \emph{nice} if every node is of one of the following types:
\begin{itemize}\itemsep1pt
  \item \textbf{Leaf:} no children.
  \item \textbf{Introduce-vertex:} one child \(s\) with \(X_t = X_s \cup \{v\}\).
  \item \textbf{Forget-vertex:} one child \(s\) with \(X_t = X_s \setminus \{v\}\).
  \item \textbf{Join:} two children \(s,s'\) with \(X_t = X_s = X_{s'}\).
\end{itemize}
\end{definition}

It is standard that we may assume the decomposition is given in rooted nice form without
increasing the width. We also use the standard further refinement that inserts unary
\emph{introduce-edge} nodes (with unchanged bags) so that each edge is introduced at some
introduce-edge node whose bag contains both endpoints, and along any root-to-node path each
edge is introduced at most once. These refinements can be carried out efficiently and increase
the number of bags only by a linear factor.

\subsection{Feedback Morse orders and induced matchings}
We next recall the order-based formulation and the induced set of backward edges used throughout
the DP and its correctness proof.

Given a digraph \(D=(V,E)\) and a total order
\(\pi = (v_1,\dots,v_n)\) of \(V\), we call an edge \((u,v)\in E\)
\emph{backward} with respect to~\(\pi\) if \(v\) appears before \(u\) in~\(\pi\).
We write
\[
  M(\pi)
  \;:=\;
  \{\, (u,v)\in E : (u,v) \text{ is backward with respect to }\pi \,\},
\]
so \(M(\pi)\) is the set of all backward edges of \(D\) with respect to~\(\pi\).

\begin{definition}[Feedback Morse order]\label{def:feedback-morse-order}
Let \(D=(V,E)\) be a digraph. A total order \(\pi\) of \(V\) is a
\emph{feedback Morse order} if the set \(M(\pi)\) of backward edges is a
matching (no two edges in \(M(\pi)\) share a vertex).
\end{definition}

The crucial observation is that once we fix \(\pi\), the matching is
\emph{forced} to be \(M(\pi)\); and for such an induced matching, the
edge-reversed digraph \(D_{M(\pi)}\) is always acyclic.

\begin{lemma}[Order-induced DAG]\label{lem:order-induced-dag}
Let \(D=(V,E)\) be a digraph and let \(\pi\) be any total order of \(V\). Define
\(M(\pi)\) as above and let \(D_{M(\pi)}\) be the digraph obtained from \(D\) by
reversing all edges in \(M(\pi)\). Then every edge of \(D_{M(\pi)}\) points
forward with respect to \(\pi\), and in particular \(D_{M(\pi)}\) is acyclic.
\end{lemma}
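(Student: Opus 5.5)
The plan is to check every edge of \(D_{M(\pi)}\) directly against \(\pi\), and then read off acyclicity from the fact that \(\pi\) is a strict total order. Write \(\mathrm{pos}_\pi(x)\) for the index of \(x\) in \(\pi=(v_1,\dots,v_n)\).

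\textbf{Step 1: every edge points forward.} Each edge of \(D_{M(\pi)}\) comes from exactly one edge \((u,v)\in E\), and we split into two cases according to whether \((u,v)\) lies in \(M(\pi)\).
\begin{itemize}
\item If \((u,v)\notin M(\pi)\), the edge is kept as \((u,v)\). It is not backward, so \(v\) does not appear before \(u\) in \(\pi\). Since \(\pi\) is a total order and \(u\neq v\) (loops excluded), we get \(\mathrm{pos}_\pi(u)<\mathrm{pos}_\pi(v)\), so the edge is forward.
\item If \((u,v)\in M(\pi)\), the edge is reversed to \((v,u)\). Being backward means \(\mathrm{pos}_\pi(v)<\mathrm{pos}_\pi(u)\), so the reversed edge \((v,u)\) is again forward.
\end{itemize}
Since every edge of \(D_{M(\pi)}\) arises in one of these two ways, all edges point forward with respect to \(\pi\).

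\textbf{Step 2: acyclicity.} Suppose \(x_0\to x_1\to\dots\to x_m=x_0\) were a directed cycle in \(D_{M(\pi)}\) with \(m\ge 1\). By Step~1,
\[
\mathrm{pos}_\pi(x_0)<\mathrm{pos}_\pi(x_1)<\dots<\mathrm{pos}_\pi(x_m)=\mathrm{pos}_\pi(x_0),
\]
which is a contradiction. Equivalently, \(\pi\) is a topological order of \(D_{M(\pi)}\).

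\textbf{Main obstacle.} The only delicate point is self-loops. We assume \(D\) is loopless, which always holds for Hasse diagrams. If loops were allowed, a loop \((u,u)\) would be neither forward nor backward, it would give a cycle, and the statement would fail. We note this assumption explicitly when stating the lemma. Parallel or antiparallel edges cause no trouble, because the argument handles each edge individually.
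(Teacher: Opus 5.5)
Your proof is correct and takes the same approach as the paper: a two-case check that unmatched edges are already forward and that matched (backward) edges become forward once reversed, followed by the observation that \(\pi\) is then a topological order of \(D_{M(\pi)}\). Your explicit looplessness assumption is a sensible clarification, since the paper's proof relies on it silently when it concludes that a non-backward edge is forward.
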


\begin{proof}
Let \((u,v)\in E\). If \((u,v)\notin M(\pi)\), then it is forward with respect
to \(\pi\), and we do not reverse it; in \(D_{M(\pi)}\) we still have \(u\to v\)
with \(u\) before \(v\) in \(\pi\). If \((u,v)\in M(\pi)\), then it is backward
with respect to \(\pi\), so \(v\) appears before \(u\); we reverse it to
\(v\to u\), which again points from an earlier to a later vertex in~\(\pi\).
Thus every edge of \(D_{M(\pi)}\) points forward in~\(\pi\), so \(\pi\) is a
topological order of \(D_{M(\pi)}\) and \(D_{M(\pi)}\) is acyclic.
\end{proof}

Combining this with the definition of feedback Morse matchings
(Problem~\ref{def:feedback-morse-matching}) yields the following
correspondence between matchings and orders.

\begin{corollary}[Matchings vs.\ orders]\label{cor:matching-order-bijection}
Let \(D=(V,E)\) be a digraph.
\begin{itemize}
  \item If \(\pi\) is a feedback Morse order, then \(M(\pi)\) is a feedback
        Morse matching on \(D\), and \((\pi,M(\pi))\) is a feedback Morse pair.
  \item Conversely, if \(M\subseteq E\) is a feedback Morse matching, then
        there exists a total order \(\pi\) such that \(M(\pi)=M\); in
        particular \(\pi\) is a feedback Morse order.
\end{itemize}
\end{corollary}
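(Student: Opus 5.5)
The first bullet follows directly from Lemma~\ref{lem:order-induced-dag}. Let \(\pi\) be a feedback Morse order. By definition, \(M(\pi)\) is a matching, i.e.\ a set of edges of \(D\) no two of which share a vertex. By Lemma~\ref{lem:order-induced-dag}, the digraph \(D_{M(\pi)}\) obtained by reversing exactly these edges is acyclic, with \(\pi\) as a topological order. These two facts are precisely the definition of a feedback Morse matching (Problem~\ref{def:feedback-morse-matching}). Hence \(M(\pi)\) is a feedback Morse matching, and \((\pi,M(\pi))\) is the corresponding compatible pair.

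For the converse, let \(M\subseteq E\) be a feedback Morse matching, so \(D_M\) is acyclic. I would take \(\pi\) to be any topological order of \(D_M\); one exists because \(D_M\) is a finite DAG. Then I verify \(M(\pi)=M\) edge by edge, splitting into two cases.

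\emph{Case \((u,v)\in M\).} The edge appears in \(D_M\) as \(v\to u\), so \(v\) precedes \(u\) in \(\pi\). Hence \((u,v)\) is backward and lies in \(M(\pi)\).

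\emph{Case \((u,v)\in E\setminus M\).} The edge appears unchanged as \(u\to v\) in \(D_M\), so \(u\) precedes \(v\) in \(\pi\). Hence \((u,v)\) is forward and does not lie in \(M(\pi)\).

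So \(M(\pi)=M\), which is a matching by hypothesis, and therefore \(\pi\) is a feedback Morse order.

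The only point needing care is the degenerate situation where \(D\) contains both \((u,v)\) and \((v,u)\). The case analysis above is per arc, so it still goes through. However, \(M\) cannot contain both arcs, since they share vertices. Moreover, if \(M\) contains neither, \(D\) itself has the 2-cycle \(u\to v\to u\), which survives in \(D_M\) and contradicts acyclicity. So this situation is consistent with the argument and requires no extra work. There is no substantial obstacle; the proof amounts to unfolding definitions, exactly as in Lemma~\ref{lem:matching-order-equivalence-short}.
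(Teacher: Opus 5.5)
Your proposal is correct and follows the paper's proof essentially step for step: the first bullet comes from the definition of a feedback Morse order together with Lemma~\ref{lem:order-induced-dag}, and the converse takes a topological order of \(D_M\) and checks the arcs in \(M\) and in \(E\setminus M\) separately. Your remark about antiparallel arc pairs is accurate, but as you note it is not needed, because the case analysis is per arc.
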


\begin{proof}
If \(\pi\) is a feedback Morse order, then by definition \(M(\pi)\) is a
matching, and by Lemma~\ref{lem:order-induced-dag} the digraph
\(D_{M(\pi)}\) is acyclic. Thus \(M(\pi)\) is a feedback Morse matching and
\((\pi,M(\pi))\) is a feedback Morse pair.

Conversely, if \(M\) is a feedback Morse matching then \(D_M\) is acyclic and
admits a topological order \(\pi\). For every edge \((u,v)\in M\), the reversed
edge \((v,u)\) lies in \(D_M\) and must point forward with respect to~\(\pi\),
so \(v\) appears before \(u\) and \((u,v)\) is backward with respect to~\(\pi\).
For every edge \((u,v)\notin M\), the edge is not reversed and appears as
\((u,v)\) in \(D_M\); since \(\pi\) is a topological order, \(u\) must appear
before \(v\), so \((u,v)\) is forward with respect to \(\pi\). Hence the set
of backward edges with respect to~\(\pi\) is precisely \(M\), that is,
\(M(\pi)=M\). Since \(M\) is a matching, \(M(\pi)\) is a matching and
\(\pi\) is a feedback Morse order.
\end{proof}

In particular, every feedback Morse matching arises from at least one feedback
Morse order \(\pi\), and for such an order the matching is uniquely determined
as \(M(\pi)\). This justifies working purely with orders: the objective value
for \textsc{FMM} on \(D\) can be computed from any feedback Morse order by
considering the induced matching \(M(\pi)\).

\subsection{R-FMO, processed subgraphs, and the DP invariant}

We use the Restricted Feedback Morse Order (R-FMO) subproblem defined in
Section~\ref{subsec:rfmo-intuition} (Problem~\ref{prob:rfmo}). For convenience,
we recall it informally in the order-based viewpoint.

For a digraph \(G\) with vertex set \(W\), a subset \(X\subseteq W\), a total
order \(g\) on \(X\), and a subset \(U\subseteq X\), we consider all feedback
Morse orders \(\pi\) on~\(G\) such that:
\begin{itemize}
  \item the restriction of \(\pi\) to \(X\) equals the prescribed order \(g\);
  \item the vertices of \(X\) that are incident to some backward edge in
        \(M(\pi)\) are exactly the set \(U\), i.e.,
        \[
          U \;=\; V(M(\pi)) \cap X,
        \]
        where \(M(\pi)\) is the set of backward edges of \(G\) with respect
        to~\(\pi\).
\end{itemize}
Among all such \(\pi\), the R-FMO value is the minimum total weight of
unmatched vertices in \(W\setminus X\), that is
\[
  \sum_{\substack{v\in W\setminus X\\ v \notin V(M(\pi))}} \omega(v).
\]

In the dynamic program, for each bag \(t\) of the nice tree decomposition we
let \(W_t\) be the set of vertices that appear in the subtree rooted at~\(t\).
We define \(G_t\) to be the \emph{processed subgraph} of \(D\) whose vertex set
is \(W_t\) and whose edge set consists exactly of those edges whose
introduce-edge bags lie in this subtree.
We rely on an \emph{edge introduction discipline}, informally meaning that each
edge is introduced at most once along each root-to-node path (see (T2) in
Section~\ref{sec:tree-decomp-assumptions}). In particular, every edge with both
endpoints in a join bag appears in both child subtrees, and by the time a
vertex \(v\) is forgotten, all edges incident to \(v\) have been introduced
below its forget-vertex bag.

The bag at~\(t\) is \(X_t\). For a state \((g,U)\) at~\(t\) (a total order \(g\)
on \(X_t\) and a subset \(U\subseteq X_t\)) we define \(c[t,g,U]\) to be the
optimum value of the R-FMO instance \((G_t,X_t,g,U)\) in the above sense.

\medskip\noindent
\textbf{DP invariant.}
For every bag \(t\) and state \((g,U)\) on \(X_t\), the DP value \(c[t,g,U]\)
equals the minimum total weight of vertices in \(W_t\setminus X_t\) that are
unmatched under some feedback Morse order \(\pi\) on \(G_t\) such that:
\begin{itemize}
  \item the restriction of \(\pi\) to \(X_t\) is the order \(g\);
  \item the vertices of \(X_t\) that are incident to backward edges in
        \(M(\pi)\) are exactly the set~\(U\), i.e.,
        \[
          U \;=\; V(M(\pi)) \cap X_t.
        \]
\end{itemize}
If no such feedback Morse order exists on \(G_t\), we set \(c[t,g,U]=+\infty\).

We prove by induction on the structure of the nice tree decomposition that this
invariant holds for every bag. At the root (whose bag is empty) there is a
single state \((\emptyset,\emptyset)\), and by the invariant its value is the
global optimum of \textsc{FMO} (and hence of \textsc{FMM} by
Corollary~\ref{cor:matching-order-bijection}).

\subsection{Tree-decomposition assumptions}\label{sec:tree-decomp-assumptions}

We use the following properties of the refined nice tree decomposition of the
underlying undirected graph of~\(D\). The refinement consists of adding
introduce-edge bags as described earlier; the underlying vertex bags form a
standard nice tree decomposition.

\begin{itemize}
  \item[(T1)] \emph{Running intersection.}
  For each vertex \(v\), the set of bags containing \(v\) induces a connected
  subtree.

  \item[(T2)] \emph{Edge introduction discipline (per path).}
  For each directed edge \((u,v)\in E(D)\) there is at least one introduce-edge
  bag whose vertex set contains both \(u\) and \(v\). Along any root-to-node
  path in the tree decomposition, the edge \((u,v)\) is introduced in at most
  one bag on that path.

  \item[(T3)] \emph{Forget-after-edges.}
  For each vertex \(v\), there is a unique forget-vertex bag at which \(v\) is
  removed from the bags. All introduce-edge bags that contain \(v\) lie
  strictly below this forget-vertex bag; above that bag, no introduce-edge bag
  contains \(v\).

  \item[(T4)] \emph{Join separation.}
  If \(t\) is a join bag with children \(s,s'\), then
  \(W_s \cap W_{s'} = X_t\), where \(W_s\) (resp.\ \(W_{s'}\)) is the set of
  vertices that appear in the subtree rooted at \(s\) (resp.\ \(s'\)). Moreover,
  there is no edge of \(D\) whose one endpoint lies in \(W_s\setminus X_t\) and
  the other in \(W_{s'}\setminus X_t\).

  \item[(T5)] \emph{Bag-internal edges at joins.}
  If \(t\) is a join bag with children \(s,s'\) and \((u,v)\in E(D)\) satisfies
  \(\{u,v\}\subseteq X_t\), then \((u,v)\) is introduced in both subtrees:
  there is an introduce-edge bag for \((u,v)\) in the subtree rooted at \(s\)
  and an introduce-edge bag for \((u,v)\) in the subtree rooted at \(s'\).
  Equivalently, every such edge \((u,v)\) belongs to both processed subgraphs
  \(G_s\) and \(G_{s'}\).
\end{itemize}

Property~(T1) is the usual running-intersection condition for tree
decompositions. Property~(T4) is the standard separator property implied by
(T1) on the underlying undirected graph. Properties~(T2), (T3), and~(T5) are
enforced by our refinement with introduce-edge bags and capture our edge
introduction discipline: edges are introduced as soon as possible, never twice
along a single root-to-node path, and any edge whose endpoints lie in a join
bag is present (and enforced) in both subtrees of that join.

\subsection{Leaf bag}

Let \(t\) be a leaf bag, so that \(X_t = \emptyset\) and \(G_t\) is empty. The
DP defines a single state \((g,U) = (\emptyset,\emptyset)\) with
\(c[t,\emptyset,\emptyset] = 0\).

\emph{Soundness.}
Since \(G_t\) has no vertices and no edges, there is exactly one feedback Morse
order on \(G_t\), namely the empty order. No vertex is unmatched (there are no
vertices), so the true optimum value of the corresponding R-FMO instance is
\(0\). Thus the DP value does not underestimate the optimum.

\emph{Completeness.}
Every feasible solution (here, just the empty order) is represented by this
unique state \((\emptyset,\emptyset)\) and has cost \(0\), so the DP value does
not overestimate the optimum.

Hence the DP invariant holds at leaf bags.

\subsection{Introduce-vertex bag}\label{sec:introduce-vertex-proof}

Let \(t\) be an introduce-vertex bag with child \(s\), introducing a vertex
\(v\). Then \(X_t = X_s \cup \{v\}\), and by (T3) no edge incident to \(v\)
has an introduce-edge bag below \(t\). In particular, the processed subgraph
is \(G_t = G_s \cup \{v\}\), and \(v\) is isolated in \(G_t\).

For each state \((g',U_t)\) on \(X_t\), let \(g\) be the restriction of \(g'\)
to \(X_s\). The recurrence says:
\[
c[t,g',U_t] \;=\;
\begin{cases}
+\infty, & \text{if } v \in U_t,\\[4pt]
c[s,g,U_t\cap X_s], & \text{otherwise.}
\end{cases}
\]

\subsubsection*{Soundness.}
If \(v\in U_t\), then by the DP invariant \(v\) must be incident to some edge
in the induced matching \(M(\pi)\) for any feedback Morse order \(\pi\) on
\(G_t\) consistent with \((g',U_t)\), that is,
\(v \in V(M(\pi)) \cap X_t = U_t\). But no edge incident to \(v\) lies in
\(G_t\), so no such matching edge exists; hence there is no feasible feedback
Morse order \(\pi\) on \(G_t\) consistent with \((g',U_t)\), and the correct
value is \(+\infty\).

If \(v\notin U_t\), take any feedback Morse order \(\pi_s\) on \(G_s\) that
realises \(c[s,g,U_t\cap X_s]\). Extend \(\pi_s\) to a total order \(\pi_t\) on
\(G_t\) by inserting \(v\) in any position such that the induced order on
\(X_t\) is exactly \(g'\). Since \(v\) has no incident edges in \(G_t\), the
set of backward edges and hence the matching \(M(\pi_t)\) are the same as in
\(\pi_s\). In particular,
\[
  V(M(\pi_t))\cap X_s = V(M(\pi_s))\cap X_s = U_t\cap X_s,
\]
and \(v\notin V(M(\pi_t))\), so
\(V(M(\pi_t))\cap X_t = U_t\). Thus \(\pi_t\) is a feedback Morse order on
\(G_t\) consistent with the state \((g',U_t)\).

Moreover, the set of vertices outside the bag does not change when we
introduce \(v\):
\[
  W_t\setminus X_t = W_s\setminus X_s,
\]
and the unmatched vertices in \(W_t\setminus X_t\) under \(\pi_t\) are exactly
those in \(W_s\setminus X_s\) under \(\pi_s\), with the same weights. Hence
the cost realised by \(\pi_t\) is exactly \(c[s,g,U_t\cap X_s]\). The DP value
in the second case is therefore achievable and does not underestimate the
optimum.

\subsubsection*{Completeness.}
Conversely, let \(\pi_t\) be any feedback Morse order on \(G_t\) consistent
with \((g',U_t)\). Since \(v\) is isolated in \(G_t\), it is unmatched under
\(\pi_t\), so \(v\notin V(M(\pi_t))\). As
\(U_t = V(M(\pi_t))\cap X_t\) by consistency, this implies \(v\notin U_t\), so
we are in the second case of the recurrence.

Restrict \(\pi_t\) to \(G_s\), obtaining a feedback Morse order \(\pi_s\) on
\(G_s\) whose restriction to \(X_s\) is \(g\). The induced matching on \(G_s\)
is just the restriction of \(M(\pi_t)\), so
\[
  V(M(\pi_s))\cap X_s = V(M(\pi_t))\cap X_s = U_t\cap X_s.
\]
No vertices are forgotten at \(t\), so
\(W_s\setminus X_s = W_t\setminus X_t\), and the unmatched vertices in
\(W_s\setminus X_s\) under \(\pi_s\) are exactly those in
\(W_t\setminus X_t\) under \(\pi_t\), with the same weights.

By the inductive hypothesis applied at \(s\), the total weight of unmatched
vertices in \(W_s\setminus X_s\) under \(\pi_s\) is at least
\(c[s,g,U_t\cap X_s]\). Since the recurrence sets
\(c[t,g',U_t] = c[s,g,U_t\cap X_s]\), the cost of \(\pi_t\) is at least the DP
value \(c[t,g',U_t]\). Thus the DP does not overestimate the optimum for
state \((g',U_t)\).

Therefore the DP invariant holds at introduce-vertex bags.

\subsection{Introduce-edge bag}\label{sec:introduce-edge-proof}

Let \(t\) be an introduce-edge bag with child \(s\), introducing a single
directed edge \((u,v)\). Then \(X_t = X_s\), and the processed subgraph
\(G_t\) is obtained from \(G_s\) by adding the edge \((u,v)\): they have the
same vertex set, and the edge set of \(G_t\) is the edge set of \(G_s\) plus
\((u,v)\). For each state \((g,U_t)\) on \(X_t\), write \(u <_g v\) if \(u\)
appears before \(v\) in \(g\).

The recurrence says:
\[
c[t,g,U_t] \;=\;
\begin{cases}
c[s,g,U_t], & \text{if } u <_g v,\\[4pt]
c[s,g,U_t\setminus\{u,v\}], & \text{if } v <_g u \text{ and } \{u,v\}\subseteq U_t,\\[4pt]
+\infty, & \text{otherwise.}
\end{cases}
\]

\subsubsection*{Soundness.}
Consider the three cases.

\smallskip\noindent
(i) \(u <_g v\).
In any feedback Morse order \(\pi\) on \(G_t\) whose restriction to \(X_t\) is
\(g\), we must have \(u\) before \(v\) (the relative order of bag vertices is
fixed), so \((u,v)\) is forward with respect to \(\pi\) and does not belong to
\(M(\pi)\). Therefore the matching \(M(\pi)\) and the set of unmatched vertices
in \(W_t\setminus X_t\) are the same as for any feedback Morse order on \(G_s\)
with boundary state \((g,U_t)\). Thus the optimum cost at \(t\) equals that at
\(s\), and the recurrence is sound in this case.

\smallskip\noindent
(ii) \(v <_g u\) and \(\{u,v\}\subseteq U_t\).
Then for any feedback Morse order \(\pi\) consistent with \((g,U_t)\) we must
have \(v\) before \(u\) in \(\pi\), so the new edge \((u,v)\) is backward with
respect to \(\pi\). By definition, \(M(\pi)\) contains \emph{all} backward
edges, so \((u,v)\in M(\pi)\) and both endpoints \(u\) and \(v\) are matched.

On the current root-to-\(t\) path, the edge \((u,v)\) is introduced for the
first time at \(t\) by (T2), so \((u,v)\notin E(G_s)\). In particular, the
only matching edge in \(M(\pi)\) incident to either \(u\) or \(v\) is \((u,v)\)
itself. Hence, in the restriction of \(\pi\) to \(G_s\), both \(u\) and \(v\)
are unmatched. This means that in the child instance the matched bag vertices
are exactly
\[
  U_s \;=\; U_t \setminus \{u,v\}.
\]

Conversely, take any feedback Morse order \(\pi_s\) on \(G_s\) that realises
\(c[s,g,U_s]\) with \(U_s = U_t\setminus\{u,v\}\). Extend \(\pi_s\) to an
order \(\pi_t\) on \(G_t\) by keeping the same order on \(V(G_s)=V(G_t)\) and
adding the edge \((u,v)\). Since \(v <_g u\), the edge \((u,v)\) is backward
with respect to \(\pi_t\) and is added to the matching. This yields a matching
\(M(\pi_t) = M(\pi_s)\cup\{(u,v)\}\), which is still a matching because
\(u,v\notin U_s\) and hence are unmatched in \(M(\pi_s)\). The set of vertices
in \(W_t\setminus X_t\) and their unmatched status does not change when we add
\((u,v)\), so the cost realised by \(\pi_t\) is exactly
\(c[s,g,U_t\setminus\{u,v\}]\). Thus the DP value in the second branch is
achievable and does not underestimate the optimum.

\smallskip\noindent
(iii) Otherwise.
There are two subcases:

\begin{itemize}
  \item \(v <_g u\) but at least one of \(u\) or \(v\) is not in \(U_t\).
  Then in any feedback Morse order \(\pi\) extending \(g\), the edge \((u,v)\)
  is backward, so it must be included in \(M(\pi)\) and both endpoints must be
  matched. This contradicts the requirement \(U_t = V(M(\pi))\cap X_t\), so the
  feasible set is empty and the correct value is \(+\infty\).

  \item \(v <_g u\) and \(\{u,v\}\subseteq U_t\), but in the child graph one of
  \(u\) or \(v\) was already matched (equivalently, the child mask \(U_s\) would
  still contain that vertex). Then including the new backward edge \((u,v)\)
  would match that vertex twice, violating the matching condition for
  \(M(\pi)\). Thus no feedback Morse order is consistent with such a child
  state, and the correct value is again \(+\infty\).
\end{itemize}

In either subcase, the recurrence correctly assigns \(+\infty\).

\subsubsection*{Completeness.}
Let \(\pi_t\) be any feedback Morse order on \(G_t\) consistent with
\((g,U_t)\), and consider the status of the new edge \((u,v)\).

If \(u <_g v\), then \(u\) precedes \(v\) in \(\pi_t\), so \((u,v)\) is forward
and not in \(M(\pi_t)\). Restricting \(\pi_t\) to \(G_s\) yields a feedback
Morse order \(\pi_s\) whose restriction to \(X_s = X_t\) is \(g\) and whose
matched set on \(X_t\) is still \(U_t\). The vertices in
\(W_s\setminus X_s\) are exactly those in \(W_t\setminus X_t\), and their
matched/unmatched status under \(\pi_s\) and \(\pi_t\) coincide. By the
inductive hypothesis at \(s\), the cost of \(\pi_s\) is at least
\(c[s,g,U_t]\), which equals the DP value \(c[t,g,U_t]\) in this case.

If \(v <_g u\), then \(v\) precedes \(u\) in \(\pi_t\), so the new edge
\((u,v)\) is backward and must belong to \(M(\pi_t)\); both \(u\) and \(v\) are
matched, so \(\{u,v\}\subseteq U_t\). Let \(M_t := M(\pi_t)\) and restrict
\(\pi_t\) and \(M_t\) to \(G_s\), obtaining a feedback Morse order \(\pi_s\) on
\(G_s\) with induced matching \(M_s\). Since \((u,v)\notin E(G_s)\), the edge
\((u,v)\) is removed from the matching and no other matching edge in \(M_t\) is
incident to \(u\) or \(v\) (by the matching property). Hence \(u\) and \(v\)
are unmatched in \(M_s\), and
\[
  V(M_s)\cap X_s
  \;=\;
  V(M_t)\cap X_t \setminus \{u,v\}
  \;=\;
  U_t \setminus \{u,v\}.
\]
Thus \(\pi_s\) is consistent with the child state
\((g,U_s)\), where \(U_s = U_t\setminus\{u,v\}\). As in the forward case,
\(W_s\setminus X_s = W_t\setminus X_t\), and the unmatched vertices outside
the bag are the same under \(\pi_s\) and \(\pi_t\), with identical weights.
By the inductive hypothesis at \(s\), the cost of \(\pi_s\) is at least
\(c[s,g,U_t\setminus\{u,v\}]\), which is precisely the DP value \(c[t,g,U_t]\)
in this case.

In all cases, every feasible \(\pi_t\) has cost at least the DP value
\(c[t,g,U_t]\), so the DP does not overestimate the optimum. Together with
soundness, this shows that the DP invariant holds at introduce-edge bags.

\subsection{Forget-vertex bag}\label{sec:forget-proof}

Let \(t\) be a forget-vertex bag with child \(s\), forgetting vertex \(v\). Then
\(X_t = X_s\setminus\{v\}\) and \(G_t = G_s\).
For each state \((g,U_t)\) on \(X_t\), where \(g=(u_1,\dots,u_b)\) with \(b=|X_t|\).
For \(i\in\{0,\dots,b\}\), let \(\mathrm{ins}_i(g,v)\) be the order on \(X_s\)
obtained by inserting \(v\) between \(u_i\) and \(u_{i+1}\), with sentinels
\(u_0\) and \(u_{b+1}\) before \(u_1\) and after \(u_b\).

By (T3), all edges incident to \(v\) have already been introduced below~\(t\), so
the matching status of \(v\) is now final. Since \(W_t=W_s\) and
\(X_t = X_s\setminus\{v\}\), we have
\[
  W_t\setminus X_t = \bigl(W_s\setminus X_s\bigr)\,\cup\,\{v\}.
\]
The recurrence is:
\[
c[t,g,U_t]
\;=\;
\min_{i=0,\dots,b}
\ \min\Bigl\{
  c\bigl[s,\mathrm{ins}_i(g,v),\,U_t\bigr] + \omega(v),
  \quad
  c\bigl[s,\mathrm{ins}_i(g,v),\,U_t \cup \{v\}\bigr]
\Bigr\}.
\]

\subsubsection*{Soundness.}
Fix an insertion position \(i\) and consider the two child masks.

If we take the child state with order \(\mathrm{ins}_i(g,v)\) and mask \(U_t\),
we are asserting that \(v\) is \emph{unmatched} in \(G_s\) under any feedback
Morse order consistent with this state; that is,
\(v\notin V(M(\pi_s))\cap X_s\). Since no further edges incident to \(v\) will
appear above \(t\) by (T3), \(v\) will remain unmatched in any extension and
must be counted in the objective exactly once, at the moment when it is
forgotten. Adding \(\omega(v)\) in this branch is therefore correct.

If instead we take the child state with mask \(U_t\cup\{v\}\), we are asserting
that \(v\) is already matched in \(G_s\) (to some vertex of \(G_s\)). Again by
(T3), no new edges incident to \(v\) are introduced above \(t\), so its
matching status cannot change and \(v\) will never contribute to the
unmatched-vertex cost at higher bags. Thus the second term correctly accounts
for the case where \(v\) is matched.

Any child state that is not realisable by a feedback Morse order (for instance
because \(v\) would be matched twice, or because it conflicts with already
fixed backward edges) already carries value~\(+\infty\) by the inductive
hypothesis, so it does not affect the minimum. Hence the recurrence does not
underestimate the optimum.

\subsubsection*{Completeness.}
Let \(\pi_t\) be any feedback Morse order on \(G_t\) consistent with
\((g,U_t)\). Let \(i\) be such that in \(\pi_t\) the vertex \(v\) lies between
\(u_i\) and \(u_{i+1}\) in the order on \(X_s = X_t\cup\{v\}\); the restriction
of \(\pi_t\) to \(X_s\) is then exactly \(\mathrm{ins}_i(g,v)\). There are two
cases.

If \(v\) is unmatched in \(\pi_t\), then \(v\notin V(M(\pi_t))\), and since
\(U_t = V(M(\pi_t))\cap X_t\), the child mask for \(X_s\) is
\[
  U_s \;=\; V(M(\pi_t))\cap X_s \;=\; U_t.
\]
Restricting \(\pi_t\) to \(G_s\) gives a feedback Morse order \(\pi_s\) on
\(G_s\) consistent with the child state \((\mathrm{ins}_i(g,v),U_t)\). The
unmatched vertices in \(W_s\setminus X_s\) under \(\pi_s\) are exactly those in
\(W_t\setminus X_t\) under \(\pi_t\) \emph{excluding} \(v\), and in addition
\(v\) is unmatched and lies in \(W_t\setminus X_t\). Thus the total unmatched
weight under \(\pi_t\) is
\[
  \bigl(\text{unmatched weight in }W_s\setminus X_s\bigr) + \omega(v)
  \;\ge\;
  c\bigl[s,\mathrm{ins}_i(g,v),U_t\bigr] + \omega(v),
\]
by the inductive hypothesis at \(s\).

If \(v\) is matched in \(\pi_t\), then \(v\in V(M(\pi_t))\), and
\[
  V(M(\pi_t))\cap X_s = \bigl(V(M(\pi_t))\cap X_t\bigr)\cup\{v\}
  = U_t\cup\{v\}.
\]
Restricting \(\pi_t\) to \(G_s\) gives a feedback Morse order \(\pi_s\) on
\(G_s\) consistent with the child state
\((\mathrm{ins}_i(g,v),U_t\cup\{v\})\). The unmatched vertices in
\(W_s\setminus X_s\) under \(\pi_s\) are exactly those in \(W_t\setminus X_t\)
under \(\pi_t\), since \(v\in X_s\) and is matched. By the inductive
hypothesis at \(s\), the total unmatched weight under \(\pi_s\) is at least
\[
  c\bigl[s,\mathrm{ins}_i(g,v),U_t\cup\{v\}\bigr].
\]

Taking the minimum over all insertion positions~\(i\) and both cases matches
the recurrence and shows that every feasible \(\pi_t\) has cost at least
\(c[t,g,U_t]\). Combined with soundness, the DP invariant holds at
forget-vertex bags.

\subsection{Join bag}\label{sec:join-proof}

Let \(t\) be a join bag with children \(s\) and \(s'\), so that
\(X_t = X_s = X_{s'}\) and \(G_t = G_s \cup G_{s'}\) with
\(V(G_s)\cap V(G_{s'}) = X_t\). By (T4), there are no edges of \(D\) between
\(W_s\setminus X_t\) and \(W_{s'}\setminus X_t\), and hence no such edges in
\(G_t\). By (T5), every edge of \(D\) whose endpoints both lie in \(X_t\) is
present in both \(G_s\) and \(G_{s'}\).

Let \((g,U_t)\) be an arbitrary state on \(X_t\) (i.e., a table entry). For this bag
order \(g\), the status of any edge with both endpoints in \(X_t\) is completely
determined: such an edge \((u,v)\) is backward (and therefore in the matching)
if and only if it goes against \(g\), i.e.\ if \(v <_g u\); otherwise it is
forward. We denote by \(M_I(g)\subseteq X_t\) the set of vertices that are
endpoints of bag-internal backward edges with respect to~\(g\). Any feedback
Morse order \(\pi\) whose restriction to \(X_t\) is \(g\) must have
\(M_I(g) \subseteq V(M(\pi))\cap X_t\).

At a join bag we combine states \((g,U_s)\) from \(s\) and \((g,U_{s'})\) from
\(s'\) such that:
\begin{itemize}
  \item \(M_I(g) \subseteq U_s,U_{s'},U_t\), and
  \item outside \(M_I(g)\), the sets of bag vertices matched strictly below~\(t\)
        in the two subtrees are disjoint and their union equals the set of bag
        vertices matched strictly below~\(t\) in the parent, i.e.
        \[
           U_t \setminus M_I(g)
           \;=\;
           \bigl(U_s \setminus M_I(g)\bigr)
           \,\dot\cup\,
           \bigl(U_{s'}\setminus M_I(g)\bigr),
        \]
        where \(\dot\cup\) denotes disjoint union.
\end{itemize}
The recurrence is:
\[
c[t,g,U_t]
\;=\;
\min_{\substack{U_s,U_{s'}\subseteq X_t\\
                M_I(g)\subseteq U_s,U_{s'},U_t\\
                U_t\setminus M_I(g)
                = (U_s\setminus M_I(g))\,\dot\cup\, (U_{s'}\setminus M_I(g))}}
   \Bigl( c[s,g,U_s] + c[s',g,U_{s'}] \Bigr),
\]
and \(c[t,g,U_t]=+\infty\) if no such pair \((U_s,U_{s'})\) exists.

\subsubsection*{Soundness.}
Let \(U_s,U_{s'}\) be masks satisfying the constraints above, and let \(\pi_s\)
and \(\pi_{s'}\) be feedback Morse orders on \(G_s\) and \(G_{s'}\) consistent
with \((g,U_s)\) and \((g,U_{s'})\), respectively, and realising the DP values
\(c[s,g,U_s]\) and \(c[s',g,U_{s'}]\).

We construct a feedback Morse order \(\pi_t\) on \(G_t\) as follows. For each
interval between consecutive bag vertices in \(g\) (including before the first
and after the last), collect:
\begin{itemize}
  \item the vertices of \(W_s\setminus X_t\) that appear in this interval in
        \(\pi_s\), and
  \item the vertices of \(W_{s'}\setminus X_t\) that appear in this interval in
        \(\pi_{s'}\).
\end{itemize}
Within each side we preserve the relative order from \(\pi_s\) and
\(\pi_{s'}\). Then define \(\pi_t\) by listing:
\begin{itemize}
  \item the bag vertices in the order \(g\), and
  \item in each gap between consecutive bag vertices in \(g\), first all
        vertices from \(W_s\setminus X_t\) assigned to that gap, then all
        vertices from \(W_{s'}\setminus X_t\) assigned to that gap, preserving
        their internal orders.
\end{itemize}

Because we preserve the relative order of vertices within each side and do not
move any vertex across a bag vertex, the status (forward/backward) of every
edge inside \(G_s\) or inside \(G_{s'}\) is unchanged when passing from
\(\pi_s,\pi_{s'}\) to \(\pi_t\). Edges with both endpoints in \(X_t\) are
determined by \(g\) alone and have the same direction in all three orders.
By (T4) there are no edges between \(W_s\setminus X_t\) and
\(W_{s'}\setminus X_t\). Hence the set of backward edges in \(G_t\) with
respect to \(\pi_t\) is exactly the union of the backward edges of \(\pi_s\)
and \(\pi_{s'}\), plus the bag-internal backward edges determined by \(g\).

By the inductive hypothesis, the backward edges in \(G_s\) (resp.\ \(G_{s'}\))
form a matching, so within each side we have a matching. For bag-internal
backward edges, the endpoints lie in \(M_I(g)\), and by assumption each such
vertex lies in both \(U_s\) and \(U_{s'}\); these matches are the same in both
subgraphs and thus represent a single set of matching edges on \(X_t\). Outside
\(M_I(g)\), the sets of bag vertices matched strictly below \(t\) in the two
subtrees are disjoint: this is enforced by the disjoint-union condition on
\(U_s\) and \(U_{s'}\). Together with the absence of edges between
\(W_s\setminus X_t\) and \(W_{s'}\setminus X_t\), this implies that no vertex
of \(G_t\) is incident to more than one matching edge, so the backward edges in
\(G_t\) form a matching \(M(\pi_t)\). Thus \(\pi_t\) is a feedback Morse order
on \(G_t\).

Moreover, the restriction of \(\pi_t\) to \(X_t\) is exactly \(g\), and the
vertices of \(X_t\) that are incident to backward edges in \(G_t\) are exactly
those in \(U_t\) by the construction of the masks. No vertex is forgotten at
\(t\), and \(W_s\setminus X_t\) and \(W_{s'}\setminus X_t\) are disjoint, so
the unmatched vertices in \(W_t\setminus X_t\) are precisely the union of the
unmatched vertices in \(W_s\setminus X_t\) and \(W_{s'}\setminus X_t\). Hence
the total unmatched weight realised by \(\pi_t\) is
\[
  c[s,g,U_s] + c[s',g,U_{s'}],
\]
and the DP value at \(t\) does not underestimate the optimum.

\subsubsection*{Completeness.}
Conversely, let \(\pi_t\) be any feedback Morse order on \(G_t\) consistent
with \((g,U_t)\). Restrict \(\pi_t\) to \(G_s\) and \(G_{s'}\), obtaining
feedback Morse orders \(\pi_s\) and \(\pi_{s'}\). Let
\[
  U_s := V(M(\pi_s))\cap X_t, \qquad
  U_{s'} := V(M(\pi_{s'}))\cap X_t.
\]

Bag-internal backward edges are determined by \(g\) and exist in all of
\(G_s\), \(G_{s'}\), and \(G_t\). Thus the set \(M_I(g)\) of their endpoints
is contained in \(V(M(\pi_s))\cap X_t\), in \(V(M(\pi_{s'}))\cap X_t\), and in
\(V(M(\pi_t))\cap X_t = U_t\); hence
\[
  M_I(g) \subseteq U_s, U_{s'}, U_t.
\]

A vertex of \(X_t\) that is matched to a vertex in \(W_s\setminus X_t\) cannot
also be matched to a vertex in \(W_{s'}\setminus X_t\), since \(M(\pi_t)\) is a
matching and there are no edges between \(W_s\setminus X_t\) and
\(W_{s'}\setminus X_t\) by (T4). Therefore, outside \(M_I(g)\), the sets
\(U_s\) and \(U_{s'}\) are disjoint, and their union is exactly the set of bag
vertices matched strictly below \(t\) in \(\pi_t\). Together with the fact that
\(M_I(g)\subseteq U_t\), this gives
\[
  U_t \setminus M_I(g)
  \;=\;
  \bigl(U_s\setminus M_I(g)\bigr)
  \,\dot\cup\,
  \bigl(U_{s'}\setminus M_I(g)\bigr),
\]
so the pair \((U_s,U_{s'})\) is feasible for the join recurrence at~\(t\).

The unmatched vertices in \(W_s\setminus X_t\) and in \(W_{s'}\setminus X_t\)
form a disjoint union equal to the set of unmatched vertices in
\(W_t\setminus X_t\), so the total unmatched weight of \(\pi_t\) is
\[
  \bigl(\text{unmatched weight in }W_s\setminus X_t\bigr)
  +
  \bigl(\text{unmatched weight in }W_{s'}\setminus X_t\bigr)
  \;\ge\;
  c[s,g,U_s] + c[s',g,U_{s'}],
\]
by the inductive hypothesis applied at \(s\) and \(s'\). Taking the minimum
over all feasible pairs \((U_s,U_{s'})\) in the recurrence gives that the DP
value \(c[t,g,U_t]\) at \(t\) is at most this cost. Thus every feasible
\(\pi_t\) has cost at least \(c[t,g,U_t]\), so the DP does not overestimate
either.

Hence the DP invariant holds at join bags.

\subsection{Global correctness}

By induction on the tree decomposition, the DP invariant holds at every bag.
At the root bag \(r\) we have \(X_r=\emptyset\), so there is a single state
\((\emptyset,\emptyset)\), whose value \(c[r,\emptyset,\emptyset]\) is, by the
invariant, the minimum total weight of unmatched vertices over all feedback
Morse orders on the entire digraph~\(D\). By
Corollary~\ref{cor:matching-order-bijection}, this is exactly the optimum value
of \textsc{FMM} on~\(D\). This proves the correctness of the algorithm.

\subsection{Running time analysis}\label{sec:runtime}

We now bound the running time of the dynamic program in terms of the
treewidth~\(k\) of the underlying undirected graph of~\(D\) and the number of
vertices \(n := |V(D)|\). Throughout, we work with a nice tree decomposition of
width~\(k\) and \(O(n)\) bags, refined with introduce-edge bags.

\subsubsection*{Number of states per bag.}
Let \(t\) be a bag with vertex set \(X_t\) and \(b := |X_t| \le k+1\). A DP
state at \(t\) consists of:
\begin{itemize}
  \item a total order \(g\) on \(X_t\), and
  \item a subset \(U \subseteq X_t\) indicating which bag vertices are already
        matched (to other vertices either in \(X_t\) or strictly below~\(t\)).
\end{itemize}
There are \(b!\) possible orders and \(2^b\) possible subsets, so the number of
states at \(t\) is at most
\[
  N_t \;\le\; b! \cdot 2^b
  \;\le\; (k+1)! \cdot 2^{k+1}.
\]
Using Stirling's approximation, \((k+1)! = 2^{\Theta(k\log k)}\), so
\(N_t = 2^{O(k\log k)}\). Since the nice tree decomposition has \(O(n)\) bags,
the total table size over all bags is \(2^{O(k\log k)} \cdot n\).

\subsubsection*{Cost per bag.}
We now argue that at each bag the work per state is bounded by \(2^{O(k)}\), so
the factorial term \((k+1)!\) dominates and the total running time is
\(2^{O(k\log k)} \cdot n\).

\medskip\noindent
\emph{Leaf and introduce-vertex bags.}
At a leaf bag there is a single state with constant-time initialisation.

At an introduce-vertex bag \(t\) with child \(s\), we have
\(X_t = X_s \cup \{v\}\). For each state \((g',U_t)\) on \(X_t\) we:
\begin{itemize}
  \item check whether \(v\in U_t\), and
  \item if not, map \((g',U_t)\) to the unique child state
        \((g,U_s)\) with \(g\) the restriction of \(g'\) to \(X_s\) and
        \(U_s = U_t \cap X_s\).
\end{itemize}
This can be done in time polynomial in \(b\) (e.g.\ \(O(b)\)) per state, so the
total cost at \(t\) is \(O(N_t \cdot b) = 2^{O(k\log k)}\).

\medskip\noindent
\emph{Introduce-edge bags.}
At an introduce-edge bag \(t\) with child \(s\), introducing an edge \((u,v)\),
we have \(X_t = X_s\). For each state \((g,U_t)\) on \(X_t\) the recurrence
examines the relative order of \(u\) and \(v\) in \(g\) and the membership of
\(u\) and \(v\) in \(U_t\), and possibly maps to a single child state
\((g,U_s)\) or concludes that the state is infeasible. All of this is
constant-time work per state. Thus the total cost at \(t\) is
\(O(N_t) = 2^{O(k\log k)}\).

\medskip\noindent
\emph{Forget-vertex bags.}
At a forget-vertex bag \(t\) with child \(s\), forgetting \(v\), we have
\(X_t = X_s \setminus \{v\}\) and \(b = |X_t|\). For each state \((g,U_t)\) on
\(X_t\) we consider all \(b{+}1\) possible insertion positions of \(v\) into
the order \(g\), and for each position \(i\) we look up at most two child
states:
\[
  (\mathrm{ins}_i(g,v),\,U_t)
  \quad\text{and}\quad
  (\mathrm{ins}_i(g,v),\,U_t \cup \{v\}).
\]
Thus the cost per parent state is \(O(b)\) table accesses and \(O(b)\) simple
operations, and the total cost at \(t\) is
\[
  O\bigl(N_t \cdot b\bigr) \;=\; 2^{O(k\log k)}.
\]

\medskip\noindent
\emph{Join bags.}
At a join bag \(t\) with children \(s\) and \(s'\), we have
\(X_t = X_s = X_{s'}\) and \(G_t = G_s \cup G_{s'}\) with
\(V(G_s)\cap V(G_{s'}) = X_t\). The order component \(g\) is the same in all
three bags, so we process each order \(g\) independently.

Fix an order \(g\) on \(X_t\). The bag-internal backward edges with respect to
\(g\) and the corresponding set of endpoints \(M_I(g)\) can be computed in time
\(O(b^2)\) once per \(g\). For this fixed \(g\), we now combine masks: for each
parent mask \(U_t \subseteq X_t\) we consider those child masks
\(U_s,U_{s'}\subseteq X_t\) that satisfy the constraints in the join
recurrence, i.e.
\[
M_I(g)\subseteq U_s,U_{s'},U_t,
\quad\text{and}\quad
U_t\setminus M_I(g)
  \;=\;
\bigl(U_s\setminus M_I(g)\bigr)
\;\dot\cup\;
\bigl(U_{s'}\setminus M_I(g)\bigr).
\]

Equivalently, for each \(x\in X_t\setminus M_I(g)\):
\begin{itemize}
  \item if \(x\in U_t\), it must be in exactly one of \(U_s\) or \(U_{s'}\);
  \item if \(x\notin U_t\), it must be in neither \(U_s\) nor \(U_{s'}\).
\end{itemize}
Thus for fixed \((g,U_t)\) there are at most
\(2^{|U_t\setminus M_I(g)|} \le 2^b\) valid pairs \((U_s,U_{s'})\), and we
simply take the minimum of \(c[s,g,U_s] + c[s',g,U_{s'}]\) over those pairs.
The work per parent state is therefore \(O(2^b)\) table accesses, and the total
cost at \(t\) is
\[
  O\bigl(N_t \cdot 2^b\bigr)
  \;\le\;
  (b! \cdot 2^b) \cdot 2^b
  \;=\;
  b! \cdot 2^{2b}
  \;=\;
  2^{O(k\log k)}.
\]

\subsubsection*{Global bound.}
Each of the \(O(n)\) bags performs \(2^{O(k\log k)}\) work, dominated by the
factorial number of orders on a bag. Summing over all bags, the total running
time is
\[
  2^{O(k\log k)} \cdot n.
\]
This establishes the running-time bound claimed in the theorem: the
\textsc{Feedback Morse Order} problem (and hence \textsc{FMM}) is fixed-parameter
tractable when parameterized by the treewidth~\(k\) of the underlying undirected
graph of the digraph~\(D\).

\section{WiPS-reduction Details} \label{appendix:wips-reduction}

We require the following lemma and proposition to hold for the spaces constructed during the reduction.  
They are direct consequences of the structural properties of the gadgets used in the construction.

\begin{lemma}\label{lemma:optimal_solution_detonators}
Every optimal solution to the \textsc{Erasibility} problem can be chosen to contain only detonator simplices.
\end{lemma}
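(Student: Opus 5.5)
We prove the statement by an exchange argument: starting from an optimal solution $T$ (a set of triangles of $Y$ with $Y\setminus T$ erasible and $|T|$ minimum), we replace each non-detonator triangle by a detonator, one gadget at a time, without increasing $|T|$ and without destroying erasibility. The natural map is the one already used in the backward direction of the reduction. For every $v\in S' := \{v \mid Y^v\cap T\neq\emptyset\}$ we charge at least one triangle of $T$, so $T^\ast := \{\deton{v} \mid v\in S'\}$ satisfies $|T^\ast|\le |S'|\le |T|$. By optimality of $T$ this forces $|T^\ast|=|T|$, so $T^\ast$ is again optimal provided $Y\setminus T^\ast$ is erasible.

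The second step shows that $Y\setminus T^\ast$ is erasible. By the backward direction proved above, $S'$ is a feedback vertex set of $D$, since any cycle avoiding $S'$ would yield an untouched ouroboros in $Y\setminus T$. Applying the forward direction to $S'$ then shows that deleting exactly $\{\deton{v}\mid v\in S'\}=T^\ast$ makes $Y$ erasible: we first collapse the locks of $S'$, then collapse the remaining gadgets in topological order of $D\setminus S'$, and finally collapse the leftover fuses from their free boundary circles. Combining the two steps, $T^\ast$ consists only of detonators, has size at most $|T|$, and leaves an erasible complex. Hence $T^\ast$ is an optimal solution of the required form.

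The main obstacle is that both directions above were argued for the naïvely glued complex. In the WiPS construction, a gadget $Y^v$ is assembled from many local pieces (pairs of pants, caps, short cylinders) created at different bags. We therefore have to check the gadget properties for the assembled gadget. First, the gadget must contain a single distinguished detonator whose deletion makes the whole lock $\mathcal{L}(v)$ collapsible. Second, the lock must become collapsible once $\mathcal{F}(v)$ is fully collapsed. Third, every directed cycle must still yield a subcomplex with no free edges, and this subcomplex must lie inside the union of the gadgets on that cycle. I would prove these by induction over the WiPS bags, maintaining the invariant that the partial lock is a surface with boundary that collapses onto its interface circles once a marked triangle is removed. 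A further subtlety is that triangles of $T$ might lie in gluing regions shared by two gadgets. I would handle this by assigning each shared triangle to one gadget, say the one containing the pinched fuse, and I need to check that $S'$ is still a feedback vertex set under this assignment: an ouroboros of a cycle avoiding $S'$ uses only triangles owned by gadgets on the cycle.
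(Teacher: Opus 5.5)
Your proof is correct and follows essentially the paper's route. The paper gives no standalone argument beyond saying the lemma follows from the induction-hypothesis properties; your exchange composes its two correctness propositions (the backward one gives a feedback vertex set $S'$ with $|S'|\le|T|$, the forward one makes $\{\deton{v}\mid v\in S'\}$ a solution), and this is non-circular because neither proposition invokes the lemma. The paper handles the concerns in your third paragraph by proving both propositions directly on the WiPS complex $Y_r$, and by requiring $\{\mathcal{L}(v),\mathcal{F}(v)\}$ to partition the $2$-simplices, with each band $Cyl_t(u,v)$ assigned to $\mathcal{F}(v)$, which is exactly the ownership rule you propose.
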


\begin{proposition}\label{prop:gadget_erasibility}
Let $S$ be a set of $2$-simplices whose deletion makes certain subcomplexes erasible. Then:
\begin{enumerate}\itemsep0pt
    \item If the detonator of a gadget becomes erasible, then all its outgoing locks are erasible.
    \item If every incoming lock of a fuse is erasible, then the entire fuse is erasible.
    \item If all outgoing locks are erasible while some incoming lock is not, then $S$ contains at least one simplex from the gadget itself.
\end{enumerate}
\end{proposition}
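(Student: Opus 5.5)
The statement has three parts, and I will prove them in order. Each rests on the two collapsibility properties built into the lock triangulation: $\mathcal{L}(v)$ collapses once $\deton{v}$ is deleted, or once $\mathcal{F}(v)$ has been completely collapsed. I also use the fact that each fuse is a cylinder unravelling from its free boundary circle. The argument is local and proceeds by following elementary collapses in a single gadget $Y^v$. I read ``the detonator becomes erasible'' as: $\deton{v}$ is either deleted by $S$ or removed by some collapse sequence in $Y\setminus S$.

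\emph{Part 1.} If $\deton{v}\in S$, the lock $\mathcal{L}(v)$ is collapsible by design. Its pinching handles are part of $\mathcal{L}(v)$, so collapsing the lock also removes every handle glued into a fuse $\mathcal{F}(w)$ with $v\to w$. I still have to check that the gluing circle on $\mathcal{F}(w)$ does not block the collapse. Along that circle, each edge of the handle meets the fuse triangles on both sides, so I will collapse the handle down to the circle, leaving the fuse unchanged. If $\deton{v}$ is removed by a collapse rather than deleted, then some edge of it became free. In the lock triangulation I expect this to happen only after the surrounding annulus has been eaten from the fuse side, that is, after $\mathcal{F}(v)$ is gone. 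The second collapsibility property then applies.

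\emph{Part 2.} Suppose every incoming lock $\mathcal{L}(u)$ with $u\to v$ is erasible. First I erase those locks together with their handles, as in Part 1. What remains of $\mathcal{F}(v)$ is a plain cylinder with a free boundary circle, and sweeping triangle by triangle along the cylinder collapses it. I will induct over the cross-section circles, since each strip between consecutive circles is a triangulated annulus with a free side. Once the fuse is collapsed, the non-free boundary of the lock has become free, so the second property erases $\mathcal{L}(v)$ as well.

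\emph{Part 3.} This is the main obstacle, because it is a lower-bound claim. Assume all outgoing locks are erasible while some incoming lock $\mathcal{L}(u)$ is not, and suppose for contradiction that $S\cap Y^v=\emptyset$. The pinch from $\mathcal{L}(u)$ stops $\mathcal{F}(v)$ at that cross-section, so $\mathcal{F}(v)$ is never fully collapsed. Since $\deton{v}\notin S$, the lock $\mathcal{L}(v)$ has neither of its two unlocking events and cannot collapse. This contradicts erasibility of the outgoing locks.

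To make that contradiction rigorous, I need an invariant stated as a claim about free edges. For every collapse sequence in $Y\setminus S$, every edge of $\mathcal{L}(v)$ lies in at least two remaining triangles until $\mathcal{F}(v)$ has been collapsed past its last cross-section. I will prove this by induction on the length of the collapse sequence, using the same counting argument as the ouroboros analysis in the backward direction. That argument says a boundary edge shared by the fuse and the lock stays in two triangles. Interior lock edges have degree two by the surface structure, and handle curves have degree at least three.
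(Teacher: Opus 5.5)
Your argument is correct in substance, but it takes a different route from the paper. The paper gives no separate proof. It only asserts that the proposition follows directly from the Crucial Properties (1)--(5) maintained by the WiPS induction: Parts~1 and~2 are essentially Properties~(1) and~(2), and Part~3 is left to be read off from the rest. The paper never actually invokes Part~3; its backward direction argues globally with the subcomplex $Z$, which has no free edges. You instead work on the idealized gadget of the main text (a cylinder fuse pinched at cross-sections, and a genus-$\deg^+(v)$ lock surface carrying $\deton{v}$) and argue by counting cofaces of edges.

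\emph{What each route buys.} The paper's route is modular and in principle applies to the complex $Y_r$ actually assembled bag by bag. There the pinches are bands $Cyl_t(u,v)\subseteq\mathcal{F}(v)$ attached to lock stubs, and fuses are merged by join cylinders. Your counts are done on a different geometry and would have to be redone for those pieces. In exchange, your route makes explicit something the paper's list does not contain. Properties (1)--(3) are all sufficient conditions for erasibility, and (4) concerns only active vertices. So the necessity claim behind Part~3 is only implicit in the paper: at the root, $\mathcal{L}(v)$ can be released only by a deletion inside $Y^v$ or by sweeping $\mathcal{F}(v)$ past every pinch. Your invariant on the edges of $\mathcal{L}(v)$ states and proves it.

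\emph{Part~1.} Drop the guess that $\deton{v}$ can be collapsed only after $\mathcal{F}(v)$ is gone. It fails when $S$ deletes another triangle of $\mathcal{L}(v)$, or a triangle of $\mathcal{F}(v)$ beyond its last pinch. It is also unnecessary: once $\deton{v}$ is gone by any means, the triangles of the current complex form a subset of those of $Y\setminus(S\cup\{\deton{v}\})$, so your deletion argument still applies.

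\emph{Monotonicity.} That step, and merging several individually erasible incoming locks in Part~2, both rest on monotonicity of collapses. An edge that is free with coface $\tau$ stays free, or has already lost $\tau$, when other triangles are removed. State this once.

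\emph{Part~3.} The claim that the pinch of $\mathcal{L}(u)$ stops $\mathcal{F}(v)$ needs that a non-erasible lock never loses a single triangle. Otherwise $\mathcal{L}(u)$ could be eaten locally around the pinch circle without being erased. This follows from Crucial Property~(1) together with monotonicity. It also follows from your puncture argument, provided the pinch curves are jointly non-separating on the lock surface.

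\emph{Handle-curve degree.} Handle-curve edges need not keep degree at least three: once $\mathcal{F}(w)$ is swept on both sides, only the two lock triangles remain. Degree two is all your invariant uses, so this does no harm.
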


The proofs of \cref{lemma:optimal_solution_detonators} and \cref{prop:gadget_erasibility} rely on the properties outlined below and follow directly from them.

\subsection{Induction Hypothesis}

We perform a structural induction over a nice tree decomposition $TD(D)$ of width~$k$ for the digraph~$D$ to construct a simplicial complex~$Y$.  
More precisely, we define complexes $Y_t$ for each bag~$X_t$ of~$TD(D)$; the final space~$Y$ is the complex~$Y_r$ associated with the root bag~$X_r$.  
Simultaneously we construct a tree decomposition of the Hasse diagram of~$Y_t$, denoted $TD(Y_t)$, of width at most~$c\!\cdot\!k$ for a fixed constant~$c$.  
Our goal is to prove that $Y$ is erasible after removing at most $s$ simplices if and only if $D$ has a feedback vertex set of size~$s$.  
To achieve this we require the following induction hypothesis, divided into three groups of conditions.

\begin{definition}[Aglet]
Let $K$ be a $2$-dimensional simplicial complex and let $A$ be a subset of its $1$-simplices.  
We say that $K$ is \emph{erasible modulo the aglet}~$A$ if there exists a sequence of elementary collapses of~$K$ in which no collapse involves an edge from~$A$.  
The term \emph{aglet}---meaning the small tip at the end of a shoelace that prevents fraying---is chosen because in our constructions the complex can only ``unravel'' from these edges:  
if the aglet edges cannot be collapsed, the entire structure remains intact.
\end{definition}

\textit{Conditions on the subcomplexes composing $Y_t$:}
\begin{itemize}\itemsep0pt
    \item $Y_t$ is the union of nonempty subcomplexes $Y_t^v$, one for each vertex $v \in F_t \cup X_t$.
    \item Each $Y_t^v$ is the union of two subcomplexes: the locks $\mathcal{L}(v)$ and the fuse $\mathcal{F}(v)$.
    \item The family $\{\mathcal{L}(v), \mathcal{F}(v)\mid v \in F_t \cup X_t\}$ partitions the set of $2$-simplices in $Y_t$.
    \item For each $v$, both $\mathcal{L}(v)$ and $\mathcal{F}(v)$ contain a distinguished boundary component, denoted $\mathcal{L}_B(v)$ and $\mathcal{F}_B(v)$ respectively.
    \item If $v\in X_t$ then $\mathcal{L}_B(v)\cap\mathcal{F}_B(v)=\emptyset$; if $v\in F_t$ then $\mathcal{L}_B(v)=\mathcal{F}_B(v)$.
\end{itemize}

\textit{Conditions on the tree decomposition $TD(Y_t)$ of $Y_t$:}
\begin{itemize}\itemsep0pt
    \item $TD(Y_t)$ is a valid tree decomposition of the Hasse diagram of $Y_t$.
    \item $TD(Y_t)$ includes a distinguished bag 
    $R_t = \bigsqcup_{v \in X_t}\big(\mathcal{L}_B(v)\sqcup\mathcal{F}_B(v)\big)$, 
    which serves as the interface to the parent bag.
    \item The width of $TD(Y_t)$ is at most $c\!\cdot\!k$ for a fixed constant~$c$.
\end{itemize}

\textit{Crucial properties of $Y_t$:}
\begin{enumerate}\itemsep0pt
    \item For every vertex $u \in X_t\cup F_t$, the subcomplex $\mathcal{L}(u)$ is a connected manifold and becomes erasible if any $2$-simplex is deleted from it;
    \item If $v \in X_t\cup F_t$ and all $\mathcal{L}(u)$ with an edge $u v$ from a forgotten vertex $u\in F_t$ to $v$ are erasible, then $\mathcal{F}(v)$ is erasible;
    \item If $u\in F_t$ and $\mathcal{F}(u)$ is erasible, then $\mathcal{L}(u)$ is erasible as well;
    \item If $u\in X_t$, then the only free faces of $\mathcal{L}(u)$ lie on $\mathcal{L}_B(u)$;
    \item If $uv$ is an edge with at least one endpoint in $F_t$, then there is a cylindrical subcomplex $Cyl_t(u,v)\subseteq\mathcal{F}(v)$ whose two boundary components are attached respectively to a handle of $\mathcal{L}(u)$ and to the boundary $\mathcal{F}_B(v)$ of the fuse.
\end{enumerate}

\subsection{Notation}

\begin{figure}[!h]
    \centering
    \includegraphics[width=0.9\textwidth]{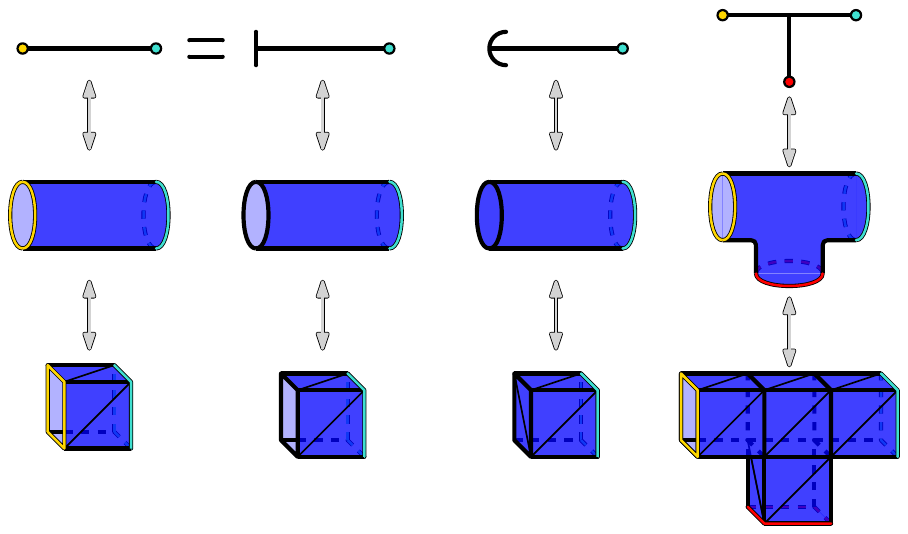}

    \vspace{1.5cm}

    \includegraphics[width=0.9\textwidth]{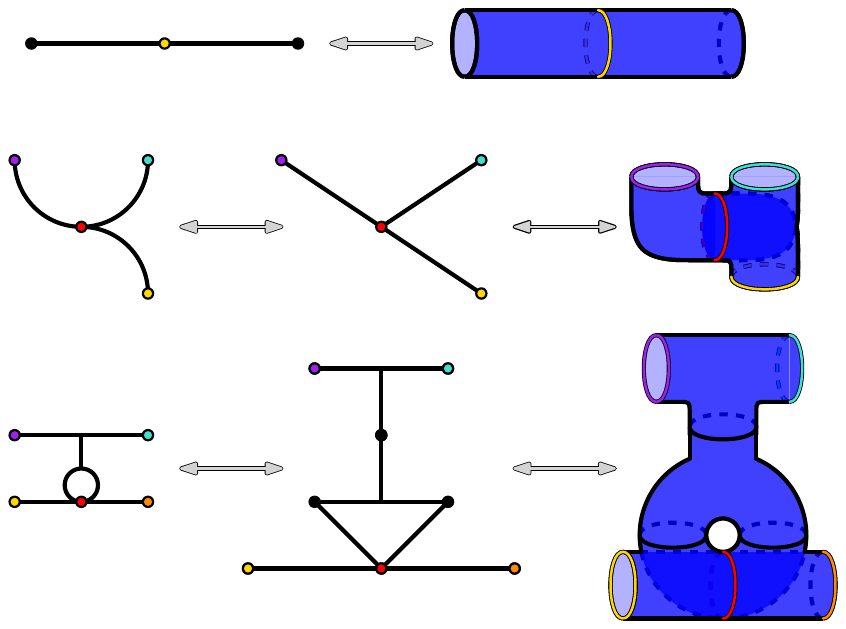}
    \caption{\label{FIG:ComponentsReduction}
    Elementary and composite gadget components used in the construction.  
    Tubes, pairs of pants, and locks (cylinders with handles) serve as the basic building blocks for the fuse and lock regions.}
\end{figure}

We prove by structural induction over the nice tree decomposition $TD(D)$ that the spaces $Y_t$ constructed at each bag $X_t$ satisfy the induction hypothesis.  
The construction for each bag type is illustrated in \cref{FIG:ReductionIntrovertex,FIG:ReductionIntroEdge,FIG:ReductionJoin,FIG:ReductionForget}.
In every step, verification proceeds directly by inspection of the figure and the local structure of the gadgets.

\subsection{Leaf bag.}
Let $t$ be the (unique) empty leaf of the nice tree decomposition, so $X_t=\emptyset$, and set $Y_t=\emptyset$.
There are no vertices in $F_t\cup X_t$, hence there are no subcomplexes $Y_t^v$, no lock or fuse parts, and no distinguished boundary components; all structural clauses of the induction hypothesis hold trivially.

The Hasse diagram of $Y_t$ is empty, so $TD(Y_t)$ consists of a single empty bag. 
We designate this unique bag as the interface bag
\[
R_t \;=\; \bigsqcup_{v\in X_t}\bigl(\mathcal{L}_B(v)\sqcup\mathcal{F}_B(v)\bigr)\;=\;\emptyset.
\]
By the usual convention $\mathrm{tw}$ of the empty graph is $-1$, hence $\mathrm{tw}(TD(Y_t))\le c\,k$.
Finally, since there are no vertices or edges, each of the crucial properties (1)–(5) holds vacuously.

\subsection{Introduce vertex.}
Let $t$ be an introduce-vertex node with child $t'$ and $X_t = X_{t'} \cup \{v\}$. 
From the complex $Y_{t'}$, we construct $Y_t$ by (i) attaching a fresh gadget $Y_t^v$ for the new vertex $v$, and (ii) extending each existing aglet boundary by a short collar cylinder so that the global aglet remains the disjoint union of the designated boundary components.

\begin{figure}[!h]
    \centering
    \includegraphics[width=0.77\linewidth]{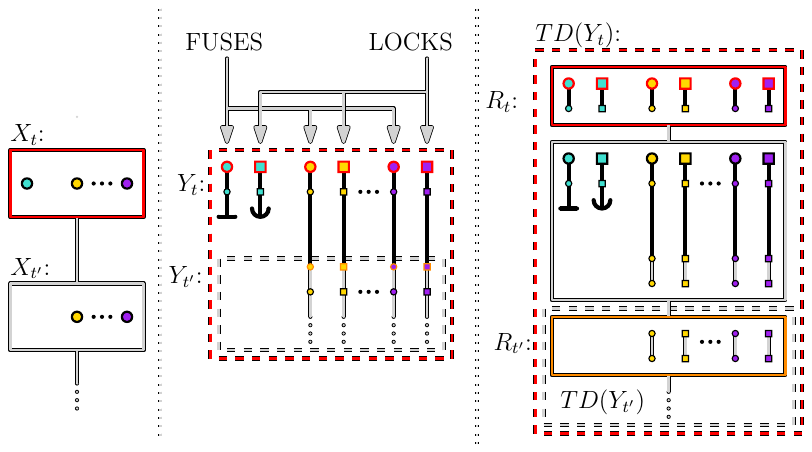}
    \caption{\label{FIG:ReductionIntrovertex}
      Inductive construction of $Y_t$ for an introduce-vertex bag.
      Left: fragment of the nice tree decomposition, where $t'$ is the child of $t$
      and $v$ is introduced.
      Middle: attaching the new gadget $Y_t^v=\mathcal{F}(v)\sqcup\mathcal{L}(v)$ and
      collaring existing aglet boundaries.
      Right: the corresponding update of the tree decomposition $TD(Y_t)$ with the
      interface bag $R_t$ containing all aglet circles for $X_t$.}
\end{figure}

\smallskip\noindent\emph{New gadget.}
The gadget $Y_t^v$ consists of two disjoint components,
\[
Y_t^v = \mathcal{F}(v) \sqcup \mathcal{L}(v).
\]
The \emph{fuse} $\mathcal{F}(v)$ is a cylinder whose one boundary circle, denoted $\mathcal{F}_B(v)$, is attached to the current aglet boundary of $Y_{t'}$; its opposite boundary is left free and becomes part of the new aglet.  
The \emph{lock} $\mathcal{L}(v)$ is a cylinder capped by a disc, with open boundary circle $\mathcal{L}_B(v)$ attached along the aglet.  
The two components are disjoint at this step, satisfying $\mathcal{F}_B(v) \cap \mathcal{L}_B(v) = \emptyset$.  
Geometrically, $\mathcal{F}(v)$ is erasible \emph{modulo $\mathcal{F}_B(v)$} by collapsing from its free end, whereas $\mathcal{L}(v)$ is not erasible \emph{modulo $\mathcal{L}_B(v)$} since all its free faces lie on the aglet.

\smallskip\noindent\emph{Extending existing components.}
For every $u \in X_{t'} \cup F_{t'}$, we glue a collar cylinder along each boundary component in $\mathcal{F}_B(u)$ or $\mathcal{L}_B(u)$ and redefine the aglet to be the outer boundary of the collar.  
This operation is a product extension that preserves all local incidences and does not alter erasibility \emph{modulo the aglet}.  

In particular, all structural and erasibility properties of the subcomplexes $Y_{t'}^u$ remain unchanged.

\smallskip\noindent\emph{Verification of the induction hypothesis.}
The new complex $Y_t$ satisfies every clause of the induction hypothesis:

\begin{itemize}\itemsep2pt
    \item \textit{Composition:} 
    $Y_t$ is the union of nonempty subcomplexes $Y_t^w$ for $w \in X_t \cup F_t$, 
    each decomposed as $\mathcal{F}(w) \cup \mathcal{L}(w)$. 
    The family $\{\mathcal{F}(w),\mathcal{L}(w)\}$ partitions the set of $2$-simplices. 
    For all $w \in X_t$, $\mathcal{F}_B(w) \cap \mathcal{L}_B(w) = \emptyset$, and for $w \in F_t$ the equality $\mathcal{F}_B(w) = \mathcal{L}_B(w)$ remains valid.

    \item \textit{Tree decomposition:} 
    Starting from $TD(Y_{t'})$ with distinguished bag $R_{t'}$, 
    we add $O(\deg(v))$ new bags of constant size covering the simplices of $Y_t^v$ and the collars, each attached adjacent to $R_{t'}$.  
    The new interface bag is
    \[
        R_t = \bigsqcup_{w \in X_t}\bigl(\mathcal{L}_B(w) \sqcup \mathcal{F}_B(w)\bigr),
    \]
    which includes all current aglet edges.  
    The running-intersection property is preserved, and the width increases by at most a fixed constant $c_1$, so $\operatorname{tw}(TD(Y_t)) \le c \cdot k$.

    \item \textit{Crucial properties:}
    (1) Each $\mathcal{L}(u)$ remains a connected manifold erasible after deletion of any $2$-simplex.  
    (2) No new incoming attachments are created for $v$, so the property for fuses is unchanged.  
    (3) For forgotten vertices $u \in F_t$, the implication “$\mathcal{F}(u)$ erasible $\Rightarrow \mathcal{L}(u)$ erasible’’ still holds.  
    (4) For $u \in X_t$, the only free faces of $\mathcal{L}(u)$ lie on $\mathcal{L}_B(u)$.  
    (5) No new edge cylinders $Cyl_t(u,v)$ are introduced at this step; those from $Y_{t'}$ persist unchanged up to the collars.
    
\end{itemize}

Hence all parts of the induction hypothesis are preserved, and the width of the tree decomposition increases by at most a constant in the introduce-vertex step.

\subsection{Introduce edge.}
Let \(X_t\) be an introduce–edge bag that adds the directed arc \((u,v)\) with \(u,v\in X_t\).
From \(Y_{t'}\) we first extend every aglet boundary by a short \emph{collar} (a thin cylinder), ensuring that all new attachments are placed away from the current aglet circles.%
\footnote{Collars serve as safe neighborhoods: they separate old and new attachments and guarantee that no unintended free faces appear when handles are glued.}
We then realize the edge \((u,v)\) by an \emph{edge handle} that couples the lock of \(u\) to the fuse of \(v\) (cf.\ \cref{FIG:ReductionIntroEdge}):

\begin{figure}[!h]
    \centering
    \includegraphics[width=0.77\linewidth]{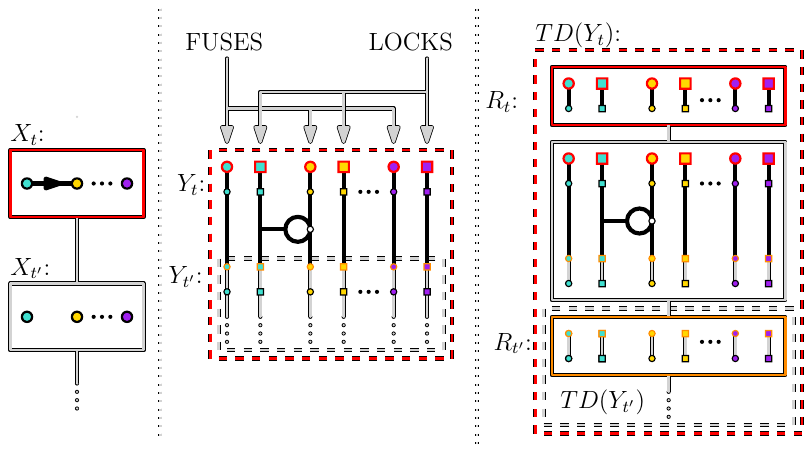}
    \caption{\label{FIG:ReductionIntroEdge}
      Inductive construction of $Y_t$ for an introduce-edge bag that adds the arc
      $(u,v)$ with $u,v\in X_t$.
      Left: fragment of the tree decomposition with child $t'$ and bag $X_t$ containing
      both endpoints.
      Middle: collaring the aglets and attaching a lock-side stub to $\mathcal{L}(u)$
      together with the cylindrical band $Cyl_t(u,v)\subseteq\mathcal{F}(v)$.
      Right: the local update of $TD(Y_t)$ by a small number of bags covering the new
      simplices and keeping $R_t$ as the aglet interface.}
\end{figure}

\begin{enumerate}\itemsep0pt
  \item \textbf{Lock–side stub.}  
  A short \(1\)-handle stub is attached to \(\mathcal{L}(u)\) along a circle parallel to \(\mathcal{L}_B(u)\).  
  This stub is part of \(\mathcal{L}(u)\) and meets the complex only along its attaching circle.  
  It introduces no new free faces except on \(\mathcal{L}_B(u)\).  
  \item \textbf{Fuse–side clasp.}  
  Attach a thin cylindrical band
  \[
    Cyl_t(u,v)\subseteq \mathcal{F}(v),
  \]
  whose two boundary circles glue respectively (i) to the terminal circle of the lock–side stub from (1) and (ii) to the collar of \(\mathcal{F}(v)\) near \(\mathcal{F}_B(v)\).  
  All \(2\)-simplices of this band belong to \(\mathcal{F}(v)\).  
\end{enumerate}

Intuitively, the handle–clasp blocks any collapse of \(\mathcal{F}(v)\) from its aglet side while the handle remains attached.  
Once \(\mathcal{L}(u)\) becomes erasible (for instance after its incoming locks are erased or its detonator is deleted), the lock–side stub collapses and the clasp detaches, restoring erasibility of \(\mathcal{F}(v)\).  
No new free faces are created away from aglet boundaries.

\smallskip
\noindent\emph{Preservation of the induction hypothesis.}
\begin{itemize}\itemsep0pt
  \item \textit{Subcomplex partition and interfaces.}  
  Every new \(2\)-simplex is assigned to exactly one gadget:
  the stub to \(\mathcal{L}(u)\) and the band \(Cyl_t(u,v)\) to \(\mathcal{F}(v)\).  
  Distinct boundary components remain valid:  
  for all \(w\in X_t\), \(\mathcal{L}_B(w)\cap \mathcal{F}_B(w)=\emptyset\);  
  for all \(w\in F_t\), \(\mathcal{L}_B(w)=\mathcal{F}_B(w)\).
  \item \textit{Crucial properties (1)–(5).}  
  Property~(5) is realized by \(Cyl_t(u,v)\): one boundary circle attaches to a handle of \(\mathcal{L}(u)\), the other to \(\mathcal{F}_B(v)\).  
  Properties~(1), (3), and (4) still hold because the stub lies entirely in \(\mathcal{L}(u)\) and introduces no free faces except on \(\mathcal{L}_B(u)\).  
  Property~(2) follows from the clasp mechanism: \(\mathcal{F}(v)\) cannot begin collapsing at its aglet while any incoming lock remains uncollapsed; once all incoming locks \(\mathcal{L}(\cdot)\) attached to \(v\) are erasible, \(\mathcal{F}(v)\) itself becomes erasible.  
\end{itemize}

\smallskip
\noindent\emph{Tree decomposition update.}
Extend \(TD(Y_{t'})\) by two new bags:
\begin{itemize}\itemsep0pt
  \item a bag \(B^{\mathrm{edge}}_t\) that contains all Hasse nodes of the new simplices introduced at step~\(t\) (collars, lock–side stub, and the band \(Cyl_t(u,v)\)), together with the incident aglet edges they meet; and
  \item a bag \(B^{\mathrm{collar}}_t\) containing only the collar cylinders (their \(2\)-simplices, boundary \(1\)-simplices, and adjacent \(0\)-simplices), ensuring connectivity and running–intersection for the boundary nodes.
\end{itemize}
Attach \(B^{\mathrm{edge}}_t\) to a bag of \(TD(Y_{t'})\) that already contains the aglet circles of \(\mathcal{L}_B(u)\) and \(\mathcal{F}_B(v)\), and make \(B^{\mathrm{collar}}_t\) adjacent to \(B^{\mathrm{edge}}_t\).  
Update the distinguished interface bag to
\[
  R_t := \bigsqcup_{w\in X_t} \bigl(\mathcal{L}_B(w) \sqcup \mathcal{F}_B(w)\bigr),
\]
taking each boundary circle on the \emph{outer} end of its new collar.  
Only \(O(1)\) simplices and incidences are introduced per edge, so the width of the decomposition increases by at most a constant factor independent of~\(k\).  

\smallskip
\noindent\textbf{Remarks.}  
The step fully preserves the inductive invariants, provided that:
\begin{enumerate}\itemsep0pt
  \item[(i)] each clasp circle on the fuse side is not a free \(1\)-simplex after attachment;%
  \item[(ii)] the lock–stub seam remains non-free within \(\mathcal{L}(u)\); and
  \item[(iii)] every incoming edge to \(v\) contributes exactly one clasp covering the corresponding boundary of \(\mathcal{F}(v)\).
\end{enumerate}
These are straightforward geometric checks in the explicit triangulation and hold for the gadgets shown in the construction figures.

\subsection{Join.}
Let $t$ be a join node with children $t_1,t_2$ and $X_{t_1}=X_{t_2}=X_t$.  
We obtain $Y_t$ from $Y_{t_1}$ and $Y_{t_2}$ by synchronizing, for each $v\in X_t$, the fuse and lock interfaces as follows (see \cref{FIG:ReductionJoin}).

\begin{figure}[!h]
    \centering
    \includegraphics[width=\linewidth]{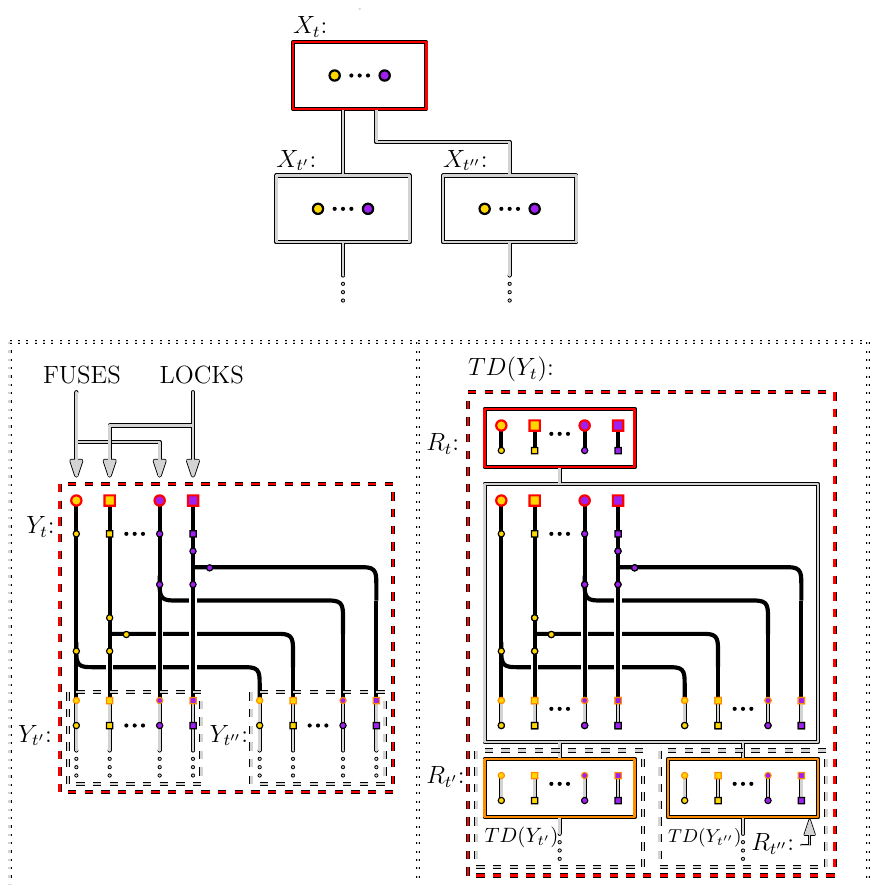}
    \caption{\label{FIG:ReductionJoin}
      Inductive construction of $Y_t$ for a join bag with children $t_1,t_2$ and
      $X_{t_1}=X_{t_2}=X_t$.
      Left: the two subtrees rooted at $t_1$ and $t_2$ joining at $t$.
      Middle: synchronization of gadgets for each $v\in X_t$ by joining the child
      fuse aglets via a cylinder $Cyl^{\mathrm{join}}(v)$ and the child lock aglets
      via a pair-of-pants $P^{\mathrm{join}}(v)$.
      Right: the update of $TD(Y_t)$ using a linking bag for the new simplices and a
      collar bag whose vertex set is the new aglet interface $R_t$.}
\end{figure}

\smallskip
\noindent\emph{Fuses.}
Introduce a new cylinder $Cyl^{\mathrm{join}}(v)$ and glue both child fuse aglets $\mathcal{F}_B^{t_1}(v)$ and $\mathcal{F}_B^{t_2}(v)$ to one boundary circle of $Cyl^{\mathrm{join}}(v)$.  
The other boundary circle becomes the new fuse aglet $\mathcal{F}_B(v)$ in $Y_t$.  
This enforces that $\mathcal{F}(v)$ in $Y_t$ is erasible only if both child fuses were erasible; conversely, if both child fuses erase then $Cyl^{\mathrm{join}}(v)$ collapses to expose $\mathcal{F}_B(v)$ as required.

\smallskip
\noindent\emph{Locks.}
Introduce a pair-of-pants surface $P^{\mathrm{join}}(v)$, glue its two legs to the child lock aglets $\mathcal{L}_B^{t_1}(v)$ and $\mathcal{L}_B^{t_2}(v)$, and designate the waist as the new lock aglet $\mathcal{L}_B(v)$.  
If $\mathcal{L}_B(v)$ becomes free, then $P^{\mathrm{join}}(v)$ collapses and, in turn, both child lock components erase; if either child lock is erasible, the connection makes the other erasible as well.  
No free faces are created away from aglet circles, so properties about free faces remain localized to aglets.

\smallskip
\noindent\emph{Preservation of the induction hypothesis.}
After the two gluings for each $v\in X_t$, the partition $Y_t=\bigsqcup_{v\in X_t\cup F_t}Y_t^v$ is maintained, and within each $Y_t^v$ the sets of $2$-simplices still partition into $\mathcal{L}(v)\sqcup\mathcal{F}(v)$.  
Each $\mathcal{L}(v)$ and $\mathcal{F}(v)$ remains connected, and their new aglets $\mathcal{L}_B(v)$ and $\mathcal{F}_B(v)$ are disjoint.  
Clause (2) (fuse erasibility from incoming locks) is preserved because the fuse-side synchronization is conjunctive, while clause (3) (locks follow erasible fuses for forgotten vertices) and clause (4) (free faces only at aglets for active vertices) continue to hold by locality of the attachments.  
The edge handles $Cyl(u,v)$ that attach to $\mathcal{L}_B(\cdot)$ or $\mathcal{F}_B(\cdot)$ continue to attach to the new aglets, so clause (5) (existence and placement of cylinders) is preserved.

\smallskip
\noindent\emph{Tree decomposition update.}
Let $R_{t_1}$ and $R_{t_2}$ denote the designated top bags of the children.  
We add a new \emph{linking bag} $B_t^{\mathrm{link}}$ that contains:
(i) all newly created simplices (the cylinders $Cyl^{\mathrm{join}}(v)$ and pairs-of-pants $P^{\mathrm{join}}(v)$ together with their incident boundaries), and  
(ii) the child collar circles, i.e., the aglet boundaries $\{\mathcal{F}_B^{t_i}(v),\mathcal{L}_B^{t_i}(v)\mid v\in X_t,\ i\in\{1,2\}\}$.  
We make $R_{t_1}$ and $R_{t_2}$ the children of $B_t^{\mathrm{link}}$.  
On top of $B_t^{\mathrm{link}}$ we add a \emph{collar bag} $B_t^{\mathrm{collar}}$ that contains precisely the new aglet circles 
\[
  R_t \;:=\; \bigsqcup_{v\in X_t}\big(\mathcal{L}_B(v)\sqcup \mathcal{F}_B(v)\big),
\]
and we declare $R_t$ to be the designated interface bag of $Y_t$.  
All additions are of constant size per vertex in $X_t$, so the width increases by at most a fixed constant, preserving the $O(k)$ bound.

\subsection{Forget vertex.}
Let $u\in X_{t'}$ be the vertex forgotten at $t$, and write $Y_{t'}=Y_{t'}^{u}\,\cup\,\bigcup_{v\in X_{t'}\setminus\{u\}}Y_{t'}^{v}$ with $Y_{t'}^{u}=\mathcal{F}(u)\cup\mathcal{L}(u)$.
At the forget step we \emph{identify the aglets} of $u$ by gluing the boundary circles
\[
  \mathcal{F}_B(u)\ \equiv\ \mathcal{L}_B(u)
\]
simplicially, possibly after a constant-size refinement of the two circles so that their triangulations agree.

\begin{figure}[!h]
    \centering
    \includegraphics[width=0.77\linewidth]{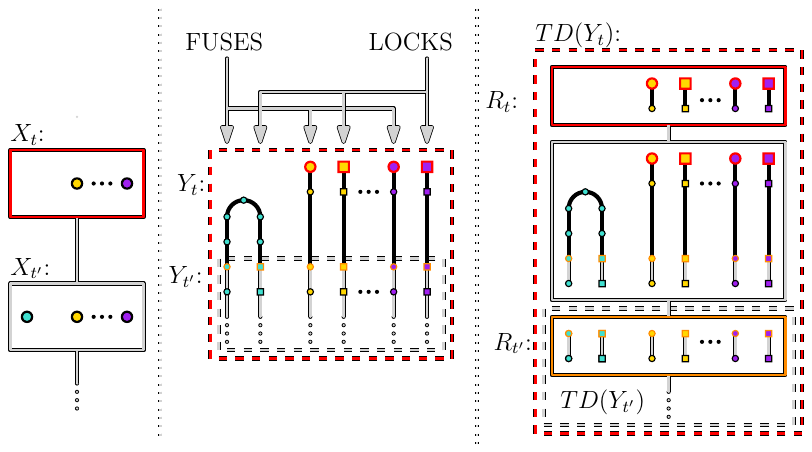}
    \caption{\label{FIG:ReductionForget}
      Inductive construction of $Y_t$ for a forget-vertex bag that forgets $u$.
      Left: fragment of the tree decomposition with child $t'$ and
      $X_t = X_{t'}\setminus\{u\}$.
      Middle: identifying the aglets $\mathcal{F}_B(u)$ and $\mathcal{L}_B(u)$ into a
      single boundary circle and collaring the remaining aglets for vertices in $X_t$.
      Right: the corresponding update of $TD(Y_t)$, where $u$ disappears from the
      interface bag $R_t$ and only aglet circles for $X_t$ remain.}
\end{figure}

After this identification $Y_t$ is obtained from $Y_{t'}$ by:
(i) unifying $\mathcal{F}_B(u)$ and $\mathcal{L}_B(u)$ into a single boundary cycle,
(ii) leaving all incident edge-handles $Cyl_{t'}(w,u)\subseteq\mathcal{F}(u)$ and attachments from $\mathcal{L}(u)$ untouched (they remain attached exactly as before), and
(iii) extending every remaining aglet $\mathcal{F}_B(v)$ and $\mathcal{L}_B(v)$ for $v\in X_{t'}\setminus\{u\}$ by a collar cylinder (which merely moves each aglet outward).

\smallskip
\noindent\textit{Verification of the induction hypothesis.}
We check each block of clauses.

\emph{(Subcomplex structure).}
The family $\{Y_t^{v}\}_{v\in F_t\cup X_t}$ is obtained from $\{Y_{t'}^{v}\}$ by removing $u$ from $X_{t'}$ and placing it into $F_t$.
For every $v\in F_t\cup X_t$, $Y_t^{v}$ still splits as $\mathcal{L}(v)\cup\mathcal{F}(v)$, and the $2$-simplices of $Y_t$ are still partitioned by $\{\mathcal{L}(v),\mathcal{F}(v)\}$.
For $v\in X_t$ the distinguished boundaries $\mathcal{L}_B(v)$ and $\mathcal{F}_B(v)$ remain disjoint; for $v\in F_t$ they coincide.
For the forgotten $u$, by construction $\mathcal{L}_B(u)=\mathcal{F}_B(u)$, as required.

\emph{(Tree decomposition).}
From $TD(Y_{t'})$ we create two bags:
\[
  B_t^{\mathrm{forget}} \supset R_{t'} \ \cup\ 
  \big(\mathcal{F}_B(u)\cup\mathcal{L}_B(u)\big)\ \cup\ \text{(all 1– and 2–simplices incident to these circles)},
\]
and a collar bag $B_t^{\mathrm{collar}}$ containing only the collar cylinders attached at this step.
We then set
\[
  R_t \ :=\ \bigsqcup_{v\in X_t}\big(\mathcal{L}_B(v)\sqcup\mathcal{F}_B(v)\big),
\]
so $u$ no longer contributes to the interface.
Each new bag consists of a constant number of constant-size pieces per $v\in X_t$, hence
$\mathrm{width}(TD(Y_t)) \le c\cdot \mathrm{width}(TD(Y_{t'})) + c_0 = O(k)$.

\emph{(Crucial properties).}
(1) $\mathcal{L}(u)$ remains a connected manifold; if any $2$-simplex of $\mathcal{L}(u)$ is deleted then $\mathcal{L}(u)$ collapses completely (unchanged by the boundary identification).
(2) For any $v$, incoming locks from forgotten vertices behave as before; the identification at $u$ does not create new incoming constraints for other vertices, and the collar extensions merely relocate aglets, preserving erasibility modulo aglets.
(3) If $\mathcal{F}(u)$ is erasible then, after collapsing $\mathcal{F}(u)$, the identified circle becomes a degree-one edge in $\mathcal{L}(u)$, so $\mathcal{L}(u)$ collapses entirely.
(4) For every $x\in X_t$, no new free faces appear in $\mathcal{L}(x)$ except on $\mathcal{L}_B(x)$; the forget operation affects only $u$’s boundary and does not introduce side faces elsewhere.
(5) For every edge with at least one endpoint in $F_t$, the cylindrical subcomplex $Cyl_t(\cdot,\cdot)$ already present in $\mathcal{F}(\cdot)$ is preserved; handles attached to $\mathcal{L}(u)$ and cylinders ending on $\mathcal{F}_B(u)$ remain attached to the unified $Y_t^{u}$ exactly as before the identification.

\smallskip
\noindent\textit{Effect on erasibility.}
Gluing $\mathcal{F}_B(u)$ to $\mathcal{L}_B(u)$ enforces that once all incoming locks to $u$ have been released so that $\mathcal{F}(u)$ collapses, the shared aglet becomes a free edge of $\mathcal{L}(u)$, and then $\mathcal{L}(u)$ collapses as well.
Conversely, deleting any $2$–simplex of $\mathcal{L}(u)$ makes $\mathcal{L}(u)$ erasible; after its removal, the shared boundary is exposed on the fuse side and does not create new obstructions for other gadgets.
Collar extensions do not change these implications; they only move aglets outward.

\subsection{Width bound.}
At each inductive step, the modification of the complex is strictly local and affects only a constant number of simplices per vertex in the active bag.  
Specifically:
\begin{itemize}\itemsep0pt
    \item In an \emph{introduce vertex} step, we add a single fuse–lock gadget of constant size;
    \item In an \emph{introduce edge} step, we add one constant-size handle connecting existing boundary components;
    \item In a \emph{forget vertex} step, we glue together two constant-size boundary components;
    \item In a \emph{join} step, we attach one cylinder and one pair-of-pants per vertex in the shared bag.
\end{itemize}
In each case, the corresponding update of the tree decomposition introduces at most two additional bags (a \emph{linking bag} containing the new simplices and a \emph{collar bag} containing the updated boundary circles).  
Each bag therefore contains at most $O(|X_t|)$ elements, with a constant factor determined solely by the size of the gadgets.  
Hence, if the treewidth of $TD(D)$ is~$k$, then the treewidth of the Hasse diagram of the constructed complex satisfies
\[
  \mathrm{tw}(TD(Y_t)) \;\le\; c\!\cdot\!k + c_0
\]
for fixed constants $c,c_0$ independent of~$D$.  
This establishes the width-preserving property required by the Width Preserving Strategy (WiPS).%

\subsection{Conclusion.}
By direct verification for each bag type—introduce vertex, introduce edge, forget vertex, and join—the structural and erasibility properties listed in the induction hypothesis remain valid after each construction step.  
In particular:
\begin{itemize}\itemsep0pt
    \item every $Y_t$ decomposes into vertex gadgets $\{Y_t^v\}$ satisfying the fuse–lock partition;
    \item all topological and erasibility properties (Items~1–5 of the hypothesis) are preserved by the local gluings; and
    \item the width bound established above holds uniformly across all steps.
\end{itemize}
Therefore, at the root bag $r$ the resulting space $Y=Y_r$ satisfies all clauses of the induction hypothesis and has Hasse diagram treewidth $O(k)$.  
This completes the inductive construction of a low-treewidth simplicial complex corresponding to the digraph~$D$.

\subsection{Correctness}

Finally, we get to the proof that the above construction actually gives a parameterized reduction from the FVS problem to the Erasibility problem. We do this by showing that the graph $D$ has a feedback vertex set $S$ of size $s$ if and only if we can remove a set $S'$ of $s$ simplices so that the space $Y_r$ is erasible. The forward direction is \cref{PROP: Forward FVS->Erasibility} and the backwards direction is \cref{PROP: Backward FVS->Erasibility}.

\begin{proposition} \label{PROP: Forward FVS->Erasibility}
If $S$ is a feedback vertex set in $D$, then $Y_r\setminus S'$ is erasible, where
\[
S' \;=\; \{\text{detonator}(u)\mid u\in S\}.
\]
\end{proposition}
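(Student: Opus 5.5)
We prove \cref{PROP: Forward FVS->Erasibility} by exhibiting an explicit collapse schedule for $Y_r\setminus S'$ driven by a topological order of $D\setminus S$, using only the crucial properties (1)--(5) of the induction hypothesis at the root. At $r$ we have $X_r=\emptyset$, so every vertex is forgotten, $F_r=V(D)$, and properties (2), (3) and (5) hold for all vertices and all edges. The root complex is the union of the pieces $\mathcal{L}(v)$ and $\mathcal{F}(v)$, $v\in V(D)$, and these partition its $2$-simplices. It therefore suffices to show that every lock and every fuse can be erased in some order, with each step taken in the complex that remains after the earlier collapses.

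\emph{Step 1: detonated locks.} For each $u\in S$, the detonator is a $2$-simplex of $\mathcal{L}(u)$. By property (1), deleting it makes $\mathcal{L}(u)$ erasible, and we collapse all of these locks first. To see that this is legitimate inside the whole complex, I will note that the free faces created by deleting a $2$-simplex of $\mathcal{L}(u)$ lie in $\mathcal{L}(u)$. Edges shared with fuses, namely the handle seams of the cylinders $Cyl(u,w)$, only become free after $\mathcal{L}(u)$ has been collapsed down to them.

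\emph{Step 2: induction along a topological order.} Let $v_1,\dots,v_m$ be a topological order of the acyclic digraph $D\setminus S$. We process the $v_i$ in this order and maintain the invariant that before processing $v_i$, every lock $\mathcal{L}(u)$ with $u\in S$ or $u=v_j$ for $j<i$ has been erased. Every in-neighbour of $v_i$ is either in $S$ or is some $v_j$ with $j<i$, so all incoming locks of $\mathcal{F}(v_i)$ are already gone. Property (2) then makes $\mathcal{F}(v_i)$ erasible, since by (5) its clasps $Cyl(u,v_i)$ were attached only to those locks. Property (3) then makes $\mathcal{L}(v_i)$ erasible, and the invariant is restored.

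\emph{Step 3: the fuses of $S$.} After Step 2, all locks have been erased. For $u\in S$, every in-neighbour of $u$ lies in $S$ or in $D\setminus S$, so all its incoming locks are gone and property (2) erases $\mathcal{F}(u)$. Every $2$-simplex has now been removed, so $Y_r\setminus S'$ collapses to a $1$-dimensional complex and is erasible.

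The main obstacle is that properties (2) and (3) are formulated for subcomplexes, whereas erasibility has to hold in the ambient complex. Collapses inside $\mathcal{F}(v)$ must not be blocked by $2$-simplices of other gadgets that share edges with it. I will address this through the partition of $2$-simplices and property (5): the only $2$-dimensional contact of $\mathcal{F}(v)$ with other gadgets is through the clasp cylinders to the incoming locks and through the identified aglet with $\mathcal{L}(v)$. If needed, I will also use the remark that clasp circles are non-free exactly while the corresponding lock is present. Checking this locality carefully is the step I expect to be most delicate.
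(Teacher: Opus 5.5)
Your proposal is correct and follows the paper's proof essentially step for step. You delete the detonators and collapse the locks of $S$ via Crucial Property~(1), then erase the gadgets of $D\setminus S$ along a topological order using Property~(2) (which the paper invokes as Proposition~\ref{prop:gadget_erasibility}(2)) followed by Property~(3), and finally erase the fuses of $S$; the ambient-complex locality issue you flag is not treated explicitly in the paper, which simply applies these properties inside $Y_r$, so your remarks on seams and clasps only make the same argument more careful.
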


\begin{proof}
By the completed construction, $Y_r$ satisfies the structural clauses and the “Crucial properties (1)–(5)” from the Induction Hypothesis, and the gadget implications in Proposition~\ref{prop:gadget_erasibility}.

Since $S$ is a feedback vertex set, $D\setminus S$ is acyclic. Fix a topological ordering
$v_1,\dots,v_m$ of $V\setminus S$ (so every arc $(v_i,v_j)$ in $D\setminus S$ has $i<j$).

\smallskip
\noindent\textbf{Step 1 (delete detonators; collapse $\mathcal L$ for $S$).}
Remove $S'$ from $Y_r$. For each $u\in S$, the detonator lies in $\mathcal L(u)$; by Crucial Property~(1),
deleting any $2$-simplex in $\mathcal L(u)$ makes $\mathcal L(u)$ erasible. Collapse every $\mathcal L(u)$ completely.
All attachments along these locks are thus released.

\smallskip
\noindent\textbf{Step 2 (erase gadgets along the topological order of $D\setminus S$).}
We claim that for $j=1,\dots,m$ the gadget $Y_r^{v_j}=\mathcal F(v_j)\cup\mathcal L(v_j)$ can be erased.

Fix $j$. Every incoming arc $u\to v_j$ has $u\in S$ or $u=v_i$ with $i<j$.
In either case, the corresponding lock $\mathcal L(u)$ is already erasible (it was collapsed in Step~1 if $u\in S$,
or it was collapsed when $Y_r^{v_i}$ was removed earlier).
Hence all incoming locks of $v_j$ are erasible.
By Proposition~\ref{prop:gadget_erasibility}(2) the fuse $\mathcal F(v_j)$ is erasible; after collapsing $\mathcal F(v_j)$,
Crucial Property~(3) implies $\mathcal L(v_j)$ is erasible. Thus $Y_r^{v_j}$ erases completely.

\smallskip
\noindent\textbf{Step 3 (cleanup for $S$).}
After all $v_1,\dots,v_m$ have been erased, there are no attachments into any $u\in S$.
All locks incident to $\mathcal F(u)$ are now erasible, so by Proposition~\ref{prop:gadget_erasibility}(2) the fuse
$\mathcal F(u)$ is erasible; by Crucial Property~(3) any remaining part of $\mathcal L(u)$ follows. Hence all gadgets vanish.

\smallskip
\noindent\textbf{Conclusion.}
We have produced a sequence of elementary collapses removing all $2$-simplices of $Y_r\setminus S'$.
Therefore $Y_r\setminus S'$ is erasible.
\end{proof}

\begin{figure}[!h]
\centering
\includegraphics[width=0.98\linewidth]{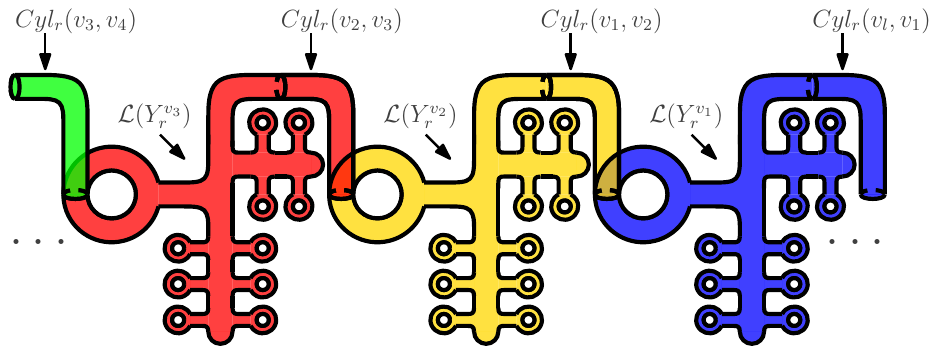}
\caption{\label{Fig: Backward FVS->Erasibility} A simplicial subcomplex $\cup_{v_i \in C}(\mathcal{F}(Y_r^{v_i}) \cup Cyl_r(v_i, v_{(i \mod l) + 1})$ of $Y_r \setminus S$ with no free face, leading to a contradiction that $S$ solved erasibility without $S'$ solving FVS.}
\end{figure} 

\begin{proposition} \label{PROP: Backward FVS->Erasibility}
If $S$ is a solution to the \textsc{Erasibility} problem on $Y_r$, then 
\[
S' = \{\,u \mid Y_r^{u}\cap S \neq \emptyset\,\}
\]
is a feedback vertex set in $D$.
\end{proposition}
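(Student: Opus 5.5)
We argue by contradiction, along the lines of the backward direction in the main text but now using the structure guaranteed by the WiPS induction hypothesis at the root. Let \(S\) be a set of \(2\)-simplices such that \(Y_r\setminus S\) is erasible, and define \(S'\) as in the statement. Suppose \(D\setminus S'\) still contains a directed cycle \(C=(v_1,\dots,v_\ell)\). Then no gadget \(Y_r^{v_i}\) meets \(S\). Since the family \(\{\mathcal{L}(v),\mathcal{F}(v)\}\) partitions the \(2\)-simplices of \(Y_r\), every \(2\)-simplex of every fuse and lock on \(C\) is present in \(Y_r\setminus S\).

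\textbf{The obstruction subcomplex.} Because every vertex is forgotten at the root, each arc \((v_i,v_{i+1})\) (indices mod \(\ell\)) has both endpoints in \(F_r\). Crucial Property~(5) therefore supplies a cylinder \(Cyl_r(v_i,v_{i+1})\subseteq\mathcal{F}(v_{i+1})\). One boundary of this cylinder is glued to a handle of \(\mathcal{L}(v_i)\); the other is glued to \(\mathcal{F}_B(v_{i+1})\). At the forget step, \(\mathcal{F}_B(v_{i+1})\) was identified with \(\mathcal{L}_B(v_{i+1})\). Define
\[
Z:=\bigcup_{i=1}^{\ell}\bigl(\mathcal{F}(v_i)\cup\mathcal{L}(v_i)\bigr)\subseteq Y_r\setminus S,
\]
or, more economically, the ring of cylinders and lock handles shown in Figure~\ref{Fig: Backward FVS->Erasibility}. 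This \(Z\) is the \(\ell\)-ouroboros.

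\textbf{No free faces and conclusion.} The main step is to show that no edge of \(Z\) is free in \(Y_r\setminus S\) until some simplex of \(Z\) has already been removed. The standard argument is that the set of edges of \(Z\) lying in at least two triangles of \(Z\) remains so under collapses elsewhere. Hence the first collapse touching \(Z\) would need a free edge of \(Z\), and none exists. So \(Z\) survives every collapse sequence, contradicting erasibility.

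To establish that no edge of \(Z\) is free, I would check each kind of edge. Interior edges of the locks and fuses are manifold-interior and so lie in two triangles. Edges on the identified circle \(\mathcal{F}_B(v_i)=\mathcal{L}_B(v_i)\) lie in a triangle on each side. Edges on the clasp circles lie in both the lock handle and the band \(Cyl_r\). The only candidate free faces come from the fuse's original free end. Crucial Property~(4), together with the collaring and the remarks (i)–(iii) after the introduce-edge step, shows that any remaining free end is blocked by the incoming clasp from \(v_{i-1}\): collapsing from there stalls at \(Cyl_r(v_{i-1},v_i)\), whose attaching circle is non-free while \(\mathcal{L}(v_{i-1})\) is intact. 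This is the delicate point. Lock boundaries are fine, but I would argue that partial collapses of the fuses can never reach the handles. I would do this by an induction on the collapse sequence, maintaining the invariant that \(\mathcal{L}(v_i)\) and the clasp circles are intact for all \(i\). The invariant can first fail only if some lock gains a free edge. By Property~(4) and the identification at forgetting, that requires its fuse to be fully erased. Fully erasing a fuse in turn requires its incoming lock to have been erased first, which is impossible around the cycle.

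I expect this cyclic "first to break" argument to be the main obstacle, since it relies on gadget properties checked only by inspection. Given these, \(D\setminus S'\) is acyclic. Moreover \(|S'|\le|S|\), since each simplex of \(S\) lies in exactly one gadget. Hence \(S'\) is a feedback vertex set.
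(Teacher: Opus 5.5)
Your proof has the same skeleton as the paper's. You assume a directed cycle $C=(v_1,\dots,v_\ell)$ survives in $D\setminus S'$, note that no $2$-simplex of the gadgets on $C$ lies in $S$, and take the ouroboros $Z=\bigcup_i(\mathcal{F}(v_i)\cup\mathcal{L}(v_i))$. (The paper also lists $Cyl(v_i,v_{i+1})$, which already lies inside $\mathcal{F}(v_{i+1})$.) You then conclude from the standard fact that the first collapse removing a triangle of $Z$ needs an edge of $Z$ lying in only one triangle of $Z$.

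The difference is in the key step. The paper argues statically: through Property~(4) for locks and its item (c) for fuses, it asserts that the only potentially free edges lie on $\mathcal{F}_B(v_i)=\mathcal{L}_B(v_i)$, so every edge of $Z$ lies in two triangles of $Z$. You notice that this is not literally true of whole fuses. The introduce-vertex step explicitly gives each fuse a free end opposite $\mathcal{F}_B$, from which it is ``erasible modulo $\mathcal{F}_B$''. So you replace the static check by a first-to-break induction over the collapse sequence:
\begin{itemize}
\item collapsing from a free end stalls at the clasp of $Cyl(v_{i-1},v_i)$;
\item a lock can only gain a free edge once its fuse has been eaten up to $\mathcal{F}_B(v_i)$;
\item that requires the preceding lock on $C$ to have broken first, which is circular around the cycle.
\end{itemize}
This is the more faithful argument; it is exactly the main-text ouroboros picture, where a fuse collapses up to its first pinch and no further. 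The paper's one-line check, as stated for whole fuses, overlooks these free ends.

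Two things need tightening. First, the invariant ``$\mathcal{L}(v_i)$ intact'' is too strong. Take an out-arc $(v_i,w)$ with $w\notin C$. Its stub on $\mathcal{L}(v_i)$ can legitimately collapse once the band $Cyl(v_i,w)\subseteq\mathcal{F}(w)$ is freed, for instance when $w\in S'$. So ``some lock gains a free edge'' does not by itself force its fuse to be erased. Second, ``fully erased'' should read ``erased up to $\mathcal{F}_B(v_i)$''.

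The cleanest fix is to run your argument statically on the blocked part. Let $Z^{*}$ consist, for each $i$, of:
\begin{itemize}
\item $\mathcal{L}(v_i)$ without the stubs of its out-arcs other than $(v_i,v_{i+1})$;
\item the band $Cyl(v_{i-1},v_i)$;
\item the part of $\mathcal{F}(v_i)$ between that band's clasp circle and $\mathcal{F}_B(v_i)$.
\end{itemize}
Every edge of $Z^{*}$ lies in at least two triangles of $Z^{*}$. This also handles the terminal circles of stubs for arcs leaving $C$, which lie in only one triangle of the paper's $Z$. Hence the paper's static argument applies verbatim to $Z^{*}$, and it is exactly what your invariant is tracking. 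It also reconciles your opening claim that $Z$ has no free edge with your later concession that the fuse ends do.
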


\begin{proof}
We work directly in the final complex $Y_r$. 
By the completed construction and the convention that the root bag of the tree decomposition is empty, 
every vertex lies in $F_r$. 
Hence, by the induction hypothesis, each gadget satisfies 
$\mathcal L_B(v) = \mathcal F_B(v)$, 
and all Crucial properties (1)--(5) hold for $Y_r$.

Assume for contradiction that $S$ makes $Y_r$ erasible but that $S'$ is \emph{not} a feedback vertex set in $D$. 
Then $D\setminus S'$ contains a simple directed cycle 
\[
C = (v_1, v_2, \dots, v_\ell)
\]
with arcs $(v_i, v_{i+1})$ for $i=1,\dots,\ell$, where indices are taken modulo~$\ell$.
By definition of $S'$, for each $v_i\in C$ we have $Y_r^{v_i}\cap S=\emptyset$.

\smallskip
\noindent\textbf{Constructing the witness subcomplex.}
Let
\[
Z := \bigcup_{i=1}^{\ell} 
\Bigl(\mathcal L(v_i)\cup \mathcal F(v_i)\cup Cyl(v_i,v_{i+1})\Bigr)
\subseteq Y_r\setminus S.
\]
Each cylinder $Cyl(v_i,v_{i+1})$ is, by Crucial Property~(5), contained in $\mathcal F(v_{i+1})$, 
and no 2-simplex in $Z$ is deleted by $S$.
Moreover, by the locality of the construction, 
no 2-simplex outside these gadgets shares an edge with any of the boundary components used in $Z$.  
Hence, to test for free $1$-simplices in $Y_r\setminus S$, 
it suffices to inspect $Z$.

\smallskip
\noindent\textbf{No free edges in $Z$.}
In any 2-dimensional simplicial complex, an elementary collapse requires a free 1-simplex—an edge incident to exactly one 2-simplex.  
We show that no edge of $Z$ satisfies this condition.

\emph{(a) Locks.}
By Crucial Property~(4), the only potentially free edges of $\mathcal L(v_i)$ lie on its boundary $\mathcal L_B(v_i)$.
At the root, $\mathcal L_B(v_i)=\mathcal F_B(v_i)$, 
and by Crucial Property~(5) this boundary is glued to the cylinder $Cyl(v_{i-1},v_i)$.  
Thus each edge on $\mathcal L_B(v_i)$ is incident to one face from $\mathcal L(v_i)$ and one from $Cyl(v_{i-1},v_i)$, both contained in $Z$.  
All other edges of $\mathcal L(v_i)$ are interior and have two cofaces within $\mathcal L(v_i)$.

\emph{(b) Cylinders.}
Each $Cyl(v_i,v_{i+1})$ has two boundary circles, attached respectively to a handle in $\mathcal L(v_i)$ 
and to the boundary $\mathcal F_B(v_{i+1})$ of the fuse.  
Hence every boundary edge of a cylinder is incident to one face from the cylinder and one from the corresponding gadget, both in $Z$.  
Interior edges of the cylinder have two cofaces within the cylinder.

\emph{(c) Fuses.}
The only potentially free edges of $\mathcal F(v_i)$ lie on $\mathcal F_B(v_i)=\mathcal L_B(v_i)$, 
which is glued to $Cyl(v_{i-1},v_i)$.  
Thus each such edge is incident to one face of $\mathcal F(v_i)$ and one of $Cyl(v_{i-1},v_i)$, both present in $Z$.  
Interior edges of $\mathcal F(v_i)$ have at least two cofaces within the fuse.

\smallskip
Combining (a)--(c), every edge in $Z$ is incident to at least two 2-simplices of $Z$.  
Therefore $Z$ contains no free $1$-simplex.

\smallskip
\noindent\textbf{Conclusion.}
Since $Z$ has no free 1-simplices, it cannot be reduced by any sequence of elementary collapses, 
and is thus non-erasible.  
But $Z\subseteq Y_r\setminus S$, 
contradicting the assumption that $Y_r\setminus S$ is erasible.  
Hence $D\setminus S'$ cannot contain a directed cycle, 
and $S'$ is a feedback vertex set in $D$.
\end{proof}

\section{Implementation Details}\label{appendix:implementation}

For reference we include a compact Python implementation of the dynamic
program from Section~\ref{sec:faster-algorithm}.  The function
\texttt{fmo\_dp\_on\_naive\_path} implements the four bag types
(leaf, introduce-vertex, introduce-edge, forget-vertex) on a
``naive'' nice path decomposition: we introduce all vertices one by one,
introduce all edges in a full bag, and then forget all vertices again.
This yields width \(|V|-1\) and is not meant as a practical solver, but
as a readable, direct transcription of the state space and transitions.

To validate the implementation, we compared its output with a simple
brute-force enumeration of all vertex orders on small instances
(typically \(|V|\le 9\)), including cases with negative vertex weights.
For very small \(n\) we additionally enumerated all labelled digraphs
and confirmed that the dynamic program and brute force agree on every
instance.

\begin{lstlisting}[style=mypython,
    caption={Dynamic program for FMO on a naive nice path decomposition},
    label={lst:fmo-dp}]
def fmo_dp_on_naive_path(instance: FMOInstance):
    """
    Dynamic program for FMO on a naive nice *path* decomposition,
    implementing the DP transitions from the paper for:
      - leaf
      - introduce-vertex
      - introduce-edge
      - forget-vertex

    No join bags appear in this decomposition.
    """
    V = instance.V
    E = instance.E
    w = instance.w

    bags = build_naive_nice_path_decomposition(V, E)

    dp_prev = None  # dict[(g_tuple, frozenset(U))] = cost

    for bag in bags:
        btype = bag['type']
        X = bag['X']
        # print("Processing bag:", btype, "X =", X)  # uncomment for debugging

        if btype == 'leaf':
            # Unique state: empty order, empty matched set, cost 0
            dp_prev = { (tuple(), frozenset()): 0.0 }

        elif btype == 'introduce_vertex':
            v = bag['v']
            dp_t = {}
            # child bag has X without v
            # from each child state (g, U), we create parent states by inserting v
            for (g, U), cost in dp_prev.items():
                g_list = list(g)
                for i in range(len(g_list) + 1):
                    g0 = tuple(g_list[:i] + [v] + g_list[i:])
                    U_t = U  # v is newly introduced and cannot be matched yet
                    key = (g0, U_t)
                    old = dp_t.get(key, inf)
                    if cost < old:
                        dp_t[key] = cost
            dp_prev = dp_t

        elif btype == 'introduce_edge':
            u = bag['u']
            v = bag['v']
            dp_t = {}
            for (g, U), cost in dp_prev.items():
                # Determine relative order of u and v in g
                pos = {x: i for i, x in enumerate(g)}
                if u not in pos or v not in pos:
                    # Shouldn't happen in this decomposition
                    continue

                if pos[u] < pos[v]:
                    # Edge is forward -> cannot be in the matching
                    key = (g, U)
                    old = dp_t.get(key, inf)
                    if cost < old:
                        dp_t[key] = cost
                else:
                    # Edge is backward and must be in the matching
                    if u in U or v in U:
                        # Would match a vertex twice -> infeasible, skip
                        continue
                    U_new = frozenset(set(U) | {u, v})
                    key = (g, U_new)
                    old = dp_t.get(key, inf)
                    if cost < old:
                        dp_t[key] = cost
            dp_prev = dp_t

        elif btype == 'forget_vertex':
            v = bag['v']
            dp_t = {}
            wv = w[v]
            for (g_s, U_s), cost in dp_prev.items():
                g_list = list(g_s)
                if v not in g_list:
                    # Shouldn't happen in this decomposition
                    continue
                idx = g_list.index(v)
                g_parent = tuple(g_list[:idx] + g_list[idx+1:])
                if v in U_s:
                    # v already matched -> no extra cost, remove v from U
                    U_parent = frozenset(u for u in U_s if u != v)
                    new_cost = cost
                else:
                    # v unmatched -> add its weight, U unchanged
                    U_parent = U_s
                    new_cost = cost + wv

                key = (g_parent, U_parent)
                old = dp_t.get(key, inf)
                if new_cost < old:
                    dp_t[key] = new_cost
            dp_prev = dp_t

        else:
            raise ValueError(f"Unknown bag type: {btype}")

    # Root bag should be empty; dp_prev has states ((), U)
    best = inf
    best_state = None
    for (g, U), cost in dp_prev.items():
        if cost < best:
            best = cost
            best_state = (g, U)

    return best, best_state
\end{lstlisting}

\end{document}